\documentclass[5p]{elsarticle}
\usepackage{hyperref,breakurl,color}
\usepackage{bm}
\usepackage{stmaryrd}
\usepackage{amssymb}
\usepackage{amsmath}
\usepackage{amsthm}
\usepackage{enumerate}
\usepackage[normalem]{ulem}
\makeatletter
\def\ps@pprintTitle{%
     \let\@oddhead\@empty
     \let\@evenhead\@empty
     \let\@oddfoot\@empty
     \let\@evenfoot\@oddfoot}
\makeatother
\newfont{\bbb}{bbm10}                    
\newfont{\bbbs}{bbm10 scaled 700}        

\theoremstyle{plain}

\newtheorem{theorem}{Theorem}
\newtheorem{lemma}{Lemma}
\newtheorem{corollary}{Corollary}

\theoremstyle{remark}
\newtheorem{remark}{Remark}
\newtheorem{observation}{Observation}

\theoremstyle{definition}
\newtheorem{definition}{Definition}
\newcommand\weg[1]{} 
\newcommand{\plat}[1]{\raisebox{0pt}[0pt][0pt]{#1}}     

\newcommand{\subs}[2]{\{\mathord{\raisebox{2pt}[0pt]{$#1$}\!/\!#2}\}} 
\newcommand{\trans}[1]{\mbox{\bbb [}#1\mbox{\bbb ]}}
\newcommand{\subtrans}{\mbox{\bbbs [\ ]}}
\newcommand{\op}{\mathtt{op}}
\newcommand{\T}{\mathcal{P}}           
\newcommand{\N}{\mathcal{N}}           
\newcommand{\bn}{\mathrm{bn}}
\newcommand{\fn}{\mathrm{fn}}
\newcommand{\n}{\mathrm{n}}
\renewcommand{\nu}{}
\newcommand{\nil}{\mathbf{0}}
\newcommand{\sbarb}[2]{#1{\downarrow}}
\newcommand{\equred}{\mathrel{\equiv}}
\newcommand{\eqused}{\mathrel{\equiv_{\mathrm{S}}}}
\newcommand{\C}{U}
\newcommand{\D}{V}
\newcommand{\E}{T}
\newcommand{\F}{W}
\newcommand{\pT}{T}


\newcommand{\pims}{\pi}
\newcommand{\pima}{{\rm a\pi}}

\title{On the Validity of Encodings of the Synchronous in the Asynchronous $\pi$-calculus}
\author[csiro,nsw]{Rob J. van Glabbeek}
\ead{rvg@cs.stanford.edu}

\address[csiro]{Data61, CSIRO, Sydney, Australia}
\address[nsw]{School of Computer Science and Engineering, University of New South Wales, Sydney, Australia}

\begin{document}

\begin{abstract}
Process calculi may be compared in their expressive power by means of encodings between them.
A widely accepted definition of what constitutes a valid encoding for (dis)proving relative
expressiveness results between process calculi was proposed by Gorla.
Prior to this work, diverse encodability and separation results were generally obtained using
distinct, and often incompatible, quality criteria on encodings.

Textbook examples of valid encoding are the encodings proposed by Boudol and by Honda \& Tokoro of
the synchronous choice-free $\pi$-calculus into its asynchronous fragment, illustrating that the
latter is no less expressive than the former.
Here I formally establish that these encodings indeed satisfy Gorla's criteria.
\end{abstract}

\begin{keyword}
process calculi \sep expressiveness \sep quality criteria for encodings \sep valid encoding \sep $\pi$-calculus
\end{keyword}

\maketitle

\section{Introduction}

\noindent
Since the late 1970s, a large number of process calculi have been
proposed, such as CCS \cite{Mi80}, CSP \cite{BHR84}, ACP \cite{BK86acopy},
SCCS~\cite{Mi83copy}, {\sc Meije} \cite{AB84}, LOTOS \cite{BB87}, the
$\pi$-calculus \cite{MPW92}, mobile ambients
\cite{CG00} and mCRL2 \cite{GM14}. To cater to specific applications,
moreover many variants of these calculi were created, including versions
incorporating notions of time, and probabilistic choice.

To order these calculi w.r.t.\ expressiveness, encodings between them
have been studied \cite{dS85,San93,Va93copy,vG94a,
Boreale98,Parrow00,
NestmannP00,Palamidessi03,Nestmann00,CG00,QW00,BusiGZ09,CarboneM03,
BPV04,
Palamidessi05,PalamidessiSVV06,PV06,CCP07,VPP07,CCAV08,
HMP08,PV08,VBG09,
EPTCS190.5,PN16}. 
Process calculus $\mathcal{L}_1$ is said to be
\emph{at least as expressive as} process calculus $\mathcal{L}_2$ iff
there exists a valid encoding from $\mathcal{L}_1$ into $\mathcal{L}_2$.
However, in proving that one languages is---or is not---at least as
expressive as another, different authors have used different, and
often incomparable, criteria.

Gorla \cite{Gorla10a} collected some essential features of the above
approaches and integrated them in a proposal for a valid encoding that
justifies many encodings and separation results from the literature.
Since then, many authors have used Gorla's framework as a basis for
establishing new valid encodings and separation results
\cite{Gorla10b,LPSS10,PSN11,PN12,PNG13,GW14,EPTCS160.4,EPTCS189.9,GWL16}.

Often quoted token examples of valid encodings
\cite{Palamidessi03,Nestmann00,CC01,CarboneM03,CCP07} are the encodings
proposed by Boudol \cite{Bo92} and by Honda \& Tokoro \cite{HT91} of
the synchronous choice-free $\pi$-calculus into its asynchronous
fragment, illustrating that the latter is as expressive as
the former. Gorla mentions these encodings among his first three
examples of encodings that satisfy his criteria for validity
\cite{Gorla10a}, thereby giving evidence in support of his combination
of criteria, more than in support of these encodings.
Nevertheless, I have not found a proof in the literature that these
encodings satisfy Gorla's notion of validity, nor is the matter
trivial. The goal of this paper is fill this gap and
formally establish that the encodings of \cite{Bo92,HT91} indeed
are valid \`a la Gorla.

Section~\ref{sec:validty} recalls Gorla's proposal for validity of an
encoding; for their motivation see \cite{Gorla10a}.
Section~\ref{sec:pi2api} presents the encodings of \cite{Bo92} and
\cite{HT91}, again suppressing motivation, and
Sections~\ref{sec:valid}--\ref{sec:validHT} establish their validity.
Section~\ref{sec:conclusion} reflects back on Gorla's criteria in the
light of the present application, and compares with the notion of a
valid encoding from \cite{vG12}.

\section{Valid encodings}\label{sec:validty}

\noindent
In \cite{Gorla10a} a \emph{process calculus} is given as a triple
$\mathcal{L}\mathbin=(\mathcal{P},\longmapsto,\asymp)$, where
\begin{itemize}
\item $\mathcal{P}$ is the set of language terms (called
  \emph{processes}), built up from $k$-ary composition operators $\op$.
\item $\longmapsto$ is a binary \emph{reduction} relation between
  processes.
\item $\asymp$ is a semantic equivalence on processes.
\end{itemize}
The operators themselves may be constructed from a set $\N$
of names. In the $\pi$-calculus, for instance, there is a unary
operator $\bar x y.\_$ for each pair of names
$x,y\mathbin\in\N$.
This way names occur in processes; the occurrences of names in
processes are distinguished in \emph{free} and \emph{bound} ones;
$\fn(\vec P)$ denotes the set of names occurring free in the $k$-tuple
of processes $\vec P=(P_1,\dots,P_k)\mathbin\in\mathcal{P}^k$.
A \emph{renaming} is a function $\sigma:\N\rightarrow\N$;
it extends componentwise to $k$-tuples of names.
If $P\mathbin\in\mathcal{P}$ and $\sigma$ is a renaming, then $P\sigma$
denotes the term $P$ in which each free occurrence of a name $x$ is
replaced by $\sigma(x)$, while renaming bound names to avoid name capture.

A $k$-ary $\mathcal{L}$-\emph{context} $C[\__1;\dots;\__k]$ is a term
build by the composition operators of $\mathcal{L}$ from \emph{holes}
$\__1,\dots,\__k$; each of these holes must occur exactly once in the context.
If $C[\__1;\dots;\__k]$ is a $k$-ary $\mathcal{L}$-\emph{context} and
$P_1,\dots,P_k \in \mathcal{P}$ then $C[P_1;\dots;P_k]$ denotes the
result of substituting $P_i$ for $\__i$ for each $i\mathbin=1,\dots,k$,
while renaming bound names to avoid capture.

Let $\Longmapsto$ denote the reflexive-transitive closure of $\longmapsto$.
One writes $P\longmapsto^\omega$ if $P$ \emph{diverges}, that is, if
there are $P_i$ for $i\in\mbox{\bbb N}$ such that $P\mathbin= P_0$ and
$P_i\longmapsto P_{i+1}$ for all $i\mathbin\in\mbox{\bbb N}$.
Finally, write $P\longmapsto$ if $P\longmapsto Q$ for some term $Q$.

For the purpose of comparing the expressiveness of languages,
a constant $\surd$ is added to each of them~\cite{Gorla10a}.
A term $P$ in the upgraded language is said to \emph{report success},
written $P{\downarrow}$, if it has an \emph{top-level unguarded} occurrence
of $\surd$.\footnote{Gorla defines the latter concept only for languages
  that are equipped with a notion of \emph{structural congruence} $\equiv$ as well as a parallel
  composition $|$. In that case
  $P$ has a top-level unguarded occurrence of $\surd$ iff $P\equiv Q|\surd$, for some $Q$~\cite{Gorla10a}.
  Specialised to the $\pi$-calculus, a \emph{(top-level) unguarded} occurrence is one that not lays
  strictly within a   subterm $\alpha.Q$, where $\alpha$ is $\tau$, $\bar xy$ or $x(z)$.
  For De Simone languages \cite{dS85}, even when not equipped with $\equiv$ and $|$, a suitable
  notion of an unguarded occurrence is defined in \cite{Va93copy}.}
Write $P{\Downarrow}$ if $P\Longmapsto P'$ for a process $P'$ with $P'{\downarrow}$.

\begin{definition}[\cite{Gorla10a}]\label{df:encoding}
An \emph{encoding} of $\mathcal{L}_1=(\mathcal{P}_1,\longmapsto_1,\asymp_1)$
into $\mathcal{L}_2=(\mathcal{P}_2,\longmapsto_2,\asymp_2)$ is a pair
$(\trans{\cdot},\varphi_{\subtrans})$ where
$\trans{\cdot}:\mathcal{P}_1\rightarrow\mathcal{P}_2$ is called \emph{translation}
and $\varphi_{\subtrans}:\N\rightarrow\N^k$ for some $k\mathbin\in\mbox{\bbb N}$
is called \emph{renaming policy} and is such that for $u\neq v$ the
$k$-tuples $\varphi_{\subtrans}(u)$ and $\varphi_{\subtrans}(v)$ have no name in common.
\end{definition}
\noindent
The terms of the source and target languages $\mathcal{L}_1$ and
$\mathcal{L}_2$ are often called $S$ and $\pT$, respectively.

\begin{definition}[\cite{Gorla10a}]\label{df:valid}
An encoding is \emph{valid} if it satisfies the following five criteria.
\begin{enumerate}
\item \emph{Compositionality:} for every $k$-ary operator $\op$ of
  $\mathcal{L}_1$ and for every set of names $N\subseteq\N$,
  there exists a $k$-ary context $C_\op^N[\__1;\dots;\__k]$ such that
  $$\trans{\op(S_1,\ldots,S_k)} = C_\op^N(\trans{S_1};\ldots;\trans{S_k})$$
  for all $S_1,\ldots,S_k\in\mathcal{P}_1$ with $\fn(S_1,\dots,S_n)=N$.
\item \emph{Name invariance:} for every $S\mathbin\in\mathcal{P}_1$ and
  $\sigma:\N\rightarrow\N$
  \[\begin{array}{cccl}
    \trans{S\sigma} & = & \trans{S}\sigma' & \mbox{if $\sigma$ is injective}\\
    \trans{S\sigma} & \asymp_2 & \trans{S}\sigma' & \mbox{otherwise}\\
    \end{array}\]
  with $\sigma'$ such that
  $\varphi_{\subtrans}(\sigma(a))\mathbin=\sigma'(\varphi_{\subtrans}(a))$
  for all $a\mathbin\in\N\!$.
\item \emph{Operational correspondence:}\\
  \begin{tabular}{@{}ll}
   \emph{Completeness} & if $S \Longmapsto_1 S'$ then $\trans{S}\Longmapsto_2\asymp_2 \trans{S'}$\\
   \emph{Soundness} & and if $\trans{S} \Longmapsto_2 \pT$ then $\exists S'\!:$\\&
   $S\Longmapsto_1 S'$ and $\pT\Longmapsto_2\asymp_2 \trans{S'}$.
  \end{tabular}
\item \emph{Divergence reflection:} 
  if $\trans{S} \longmapsto_2^\omega$ then $S \longmapsto_1^\omega$.
\item \emph{Success sensitiveness:}
  $S {\Downarrow}$ iff $\trans{S}{\Downarrow}$.\\
  For this purpose $\trans{\cdot}$ is extended to deal with the added constant $\surd$ by taking $\trans{\surd}=\surd$.
\end{enumerate}
\end{definition}

\section[Encoding synchronous into asynchronous pi]
        {Encoding synchronous into asynchronous $\pi$}\label{sec:pi2api}

\noindent
Consider the $\pi$-calculus as presented by Milner in \cite{Mi92}, i.e., the one of
Sangiorgi and Walker \cite{SW01book} without matching, $\tau$-prefixing, and choice.

Given a set of \emph{names} $\N$, the set $\mathcal{P}_\pi$ of \emph{processes} or \emph{terms} $P$
of the calculus is given by  $$P ::= \textbf{0}  ~~\mid~~ \bar xy.P ~~\mid~~ x(z).P ~~\mid~~ P|P' ~~\mid~~ (\nu z)P ~~\mid~~ !P$$
with $x,y,z,u,v,w$ ranging over $\N$.

\begin{definition}\label{def:structural congruence}
An occurrence of a name $z$ in $\pi$-calculus process $P\in\T_\pi$ is \emph{bound} if it lays
within a subexpression $x(z).P'$ or $(\nu z)P'$ of $P$; otherwise it is \emph{free}.
Let $\n(P)$ be the set of names occurring in $P\in \T_\pi$,
and $\fn(P)$ (resp.\ $\bn(P)$) be the set of names occurring free (resp.\ bound) in $P$.

\emph{Structural congruence}, $\equred$, is the smallest congruence relation on processes satisfying
\[
\begin{array}[b]{@{}l@{~}r@{~\!\equred\!~}l@{\hspace{1pt}}r@{~\!\equred\!~}l@{\,}r@{}}
\scriptstyle (1)& P_1 | (P_2 | P_3) & (P_1 | P_2) | P_3 &
(\nu z) \textbf{0} & \textbf{0} & \scriptstyle (5)\\

\scriptstyle (2)& P_1 | P_2 & P_2 | P_1 &
(\nu z)(\nu u)P & (\nu u)(\nu z) P & \scriptstyle (6)\\

\scriptstyle (3)& P | \textbf{0} & P &
(\nu w) (P | Q) & P | (\nu w)Q  & \scriptstyle (7)\\

& \multicolumn{2}{c}{} &
(\nu z)P & (\nu w)P\subs{w}{z} & \scriptstyle (8)\\

\scriptstyle (4)& !P & P | !P &
x(z).P & x(w).P\subs{w}{z}.  & \scriptstyle (9)\\
\end{array}
\]
Here $w\notin\n(P)$, and $P\subs{w}{z}$ denotes the process
obtained by replacing each free occurrence of $z$ in $P$ by $w$.

\end{definition}

\begin{definition}\label{def:reduction}
The \emph{reduction relation}, ${\longmapsto}\subseteq \T_\pi \times \T_\pi$, is generated by the
following rules.
\[\begin{array}{@{}cc@{}}
\displaystyle\frac{y\notin\bn(Q)}{\bar xy.P | x(z).Q \longmapsto P|Q\subs{y}{z}} &
\displaystyle\frac{P \longmapsto P'}{P|Q \longmapsto P'|Q} \\[3ex]
\displaystyle\frac{P \longmapsto P'}{(\nu z)P \longmapsto (\nu z)P'} &
\displaystyle\frac{Q \equred P \quad\! P \longmapsto P' \quad\! P' \equred Q'}{Q \longmapsto Q'}
\end{array}\]
\end{definition}

\noindent
The asynchronous $\pi$-calculus, as introduced by Honda \& Tokoro in \cite{HT91} and by Boudol in
\cite{Bo92}, is the sublanguage $\rm a\pi$ of the fragment $\pi$ of the $\pi$-calculus presented above where all
subexpressions $\bar x y.P$ have the form $\bar x y.\textbf{0}$, and
are written $\bar x y$. 

Boudol \cite{Bo92} defined an encoding $\trans{\cdot}_{\rm B}$ from $\pi$ to $\rm a\pi$ inductively as follows:
\[
\begin{array}{rcl}
\trans{\textbf{0}}_{\rm B}   &=& \textbf{0} \\
\trans{\bar{x}y.P}_{\rm B}   &=& (u)(\bar{x}u | u(v).(\bar{v}y | \trans{P}_{\rm B})) \\
\trans{x(z).P}_{\rm B}       &=& x(u).(v)(\bar{u}v | v(z).\trans{P}_{\rm B}) \\
\trans{P | Q}_{\rm B}        &=& (\trans{P}_{\rm B} | \trans{Q}_{\rm B}) \\
\trans{!P}_{\rm B}           &=& \ ! \trans{P}_{\rm B} \\
\trans{(\nu x) P}_{\rm B}    &=& (\nu x) \trans{P}_{\rm B}
\end{array}
\]
always choosing $u,v \notin\fn(P)\cup\{x,y\},~ u\neq v$.
The encoding $\trans{\cdot}_{\rm HT}$ of Honda \& Tokoro \cite{HT91} differs only in the
clauses for the input and output prefix:
\[
\begin{array}{rcl}
\trans{\bar{x}y.P}_{\rm HT}   &=& x(u).(\bar{u}y | \trans{P}_{\rm HT}) \\
\trans{x(z).P}_{\rm HT}      &=& (u)(\bar{x}u | u(z).\trans{P}_{\rm HT})
\end{array}
\]
again choosing $u\notin\fn(P)\cup\{x,y\}$.

\section{Validity of Boudol's encoding}\label{sec:valid}

\newcommand{\rn}{z}
\newcommand{\un}{y}

\noindent
In this section I show that Boudol's encoding satisfies all five
criteria of Gorla \cite{Gorla10a}. I will drop the subscript $_{\rm B}$.

\subsection{Compositionality}

\noindent
Boudol's encoding is compositional by construction, for it is
\emph{defined} in terms of the contexts $C_\op^N$ that are required to
exist by Definition~\ref{df:valid}. Note that, for the cases of input
and output prefixing, these contexts \emph{do} depend on $N$, namely
through the requirement that the fresh names $u$ and $v$ are chosen to lay outside $N$.

\subsection{Name invariance}

\noindent
An encoding according to Gorla is a pair $(\trans{\cdot},\varphi_{\subtrans})$,
of which the second component, the renaming policy, is relevant only
for satisfying the criterion of name invariance. Here I take
$k\mathbin=1$ and $\varphi_{\subtrans}:\N\rightarrow\N$ the identity mapping.

\begin{lemma}\label{free vars}\label{sbst}
Let $S\mathbin\in\T_{\pims}$. Then $\fn(\trans{S})=\fn(S)$.\\
Moreover, $\trans{S}\subs{y}{z} = \trans{S\subs{y}{z}}$ for any $y,z\in\mathcal N$.
\end{lemma}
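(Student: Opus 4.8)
The statement has two parts, and the natural approach is a straightforward structural induction on $S\in\T_\pims$, proving both parts simultaneously (the substitution part actually needs the free-name part as a running invariant, since the side conditions in the encoding clauses for the prefixes speak about $\fn$). I would first fix, once and for all, the convention that in applying a clause of $\trans{\cdot}$ the fresh names $u,v$ are chosen outside $\fn(P)\cup\{x,y\}$, and moreover outside whatever finite set of ``dangerous'' names is relevant for the statement at hand (in particular outside $\{y,z\}$ when dealing with the substitution $\subs{y}{z}$); this is harmless because any two legal choices of $u,v$ yield $\alpha$-equivalent, hence structurally congruent, target terms, and I would note explicitly that the lemma is really a statement about terms up to $\equred$.

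For the first part, $\fn(\trans{S})=\fn(S)$: the base case $S=\nil$ is immediate. For $S=P\mid Q$, $S=!P$ and $S=(\nu x)P$ the encoding is a homomorphism, so the claim follows from the induction hypothesis together with the obvious equations $\fn(P\mid Q)=\fn(P)\cup\fn(Q)$, $\fn(!P)=\fn(P)$, $\fn((\nu x)P)=\fn(P)\setminus\{x\}$. The two interesting cases are the prefixes. For $\trans{\bar xy.P}=(u)(\bar xu\mid u(v).(\bar vy\mid\trans{P}))$: reading off the free names of the right-hand side gives $\{x\}\cup\bigl(\{x,u\}\cup(\{v,y\}\cup\fn(\trans{P}))\setminus\{v\}\bigr)\setminus\{u\}$, which simplifies — using $u\neq v$ and the freshness of $u,v$ — to $\{x,y\}\cup\fn(\trans{P})$, and this equals $\{x,y\}\cup\fn(P)=\fn(\bar xy.P)$ by the induction hypothesis. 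The input case $\trans{x(z).P}=x(u).(v)(\bar uv\mid v(z).\trans{P})$ is analogous: its free names are $\{x\}\cup\bigl((\{u,v\}\cup(\fn(\trans{P})\setminus\{z\}))\setminus\{v\}\bigr)\setminus\{u\}=\{x\}\cup(\fn(\trans{P})\setminus\{z\})=\{x\}\cup(\fn(P)\setminus\{z\})=\fn(x(z).P)$, where again freshness of $u,v$ (now also $u,v\neq z$) is what makes the bound names drop out cleanly. Here I would be slightly careful that $z$ may or may not lie in $\fn(P)$, but the set-theoretic identity holds either way.

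For the second part, $\trans{S}\subs{y}{z}=\trans{S\subs{y}{z}}$, I again induct on $S$, and the homomorphic cases are routine once one recalls that substitution distributes over $\mid$ and $!$, and that for $(\nu x)P$ one uses $\alpha$-conversion to assume $x\notin\{y,z\}$ so that $((\nu x)P)\subs{y}{z}=(\nu x)(P\subs{y}{z})$, matching $\trans{(\nu x)P}\subs{y}{z}=(\nu x)(\trans{P}\subs{y}{z})$. For the output prefix, on the left I push $\subs{y}{z}$ through the context $(u)(\bar xu\mid u(v).(\bar vy\mid\trans{P}))$: since $u,v$ are fresh (in particular $\notin\{y,z\}$) they are untouched, so I get $(u)(\bar{x\subs yz}u\mid u(v).(\bar v{\,y\subs yz}\mid\trans{P}\subs yz))$; on the right, $\trans{(\bar xy.P)\subs yz}=\trans{\bar{x\subs yz}{\,y\subs yz}.(P\subs yz)}=(u)(\bar{x\subs yz}u\mid u(v).(\bar v{\,y\subs yz}\mid\trans{P\subs yz}))$, choosing the same fresh $u,v$, and these coincide by the induction hypothesis $\trans{P}\subs yz=\trans{P\subs yz}$. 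The input prefix $x(z').P$ is handled the same way, first $\alpha$-renaming the bound $z'$ to avoid $\{y,z\}$ and to keep it distinct from the fresh $u,v$.

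The only real subtlety — and the place I'd be most careful — is the interaction between the freshness side-condition on $u,v$ and the operations $\sigma=\subs yz$ (and, later, the renamings in the name-invariance criterion): one must be sure that a choice of $u,v$ legal for $P$ can be taken simultaneously legal for $P\subs yz$ and disjoint from $\{y,z\}$, which is fine since all these are finite sets, but it forces the lemma to be read modulo $\equred$ rather than as a literal syntactic identity. Everything else is bookkeeping with the evident laws for $\fn$ and for substitution under the process constructors.
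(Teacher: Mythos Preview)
Your proposal is correct and takes exactly the approach the paper has in mind: the paper's entire proof reads ``A straightforward structural induction on $S$'', and what you have written is precisely that induction spelled out case by case. Your explicit remark that the equality must be read modulo $\alpha$-conversion (since the fresh $u,v$ are only determined up to the freshness side-condition) is a point the paper leaves implicit but which is indeed needed for the argument to go through literally.
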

\begin{proof}
A straightforward structural induction on $S$.
\end{proof}

\noindent
This implies that $\trans{S\sigma} \mathbin= \trans{S}\sigma$ for any renaming
$\sigma\!:\N\mathbin\rightarrow\N\!$, injective or otherwise.
So the criterion of name invariance is satisfied.

\subsection{Operational correspondence}

\noindent
A process calculus \`a la Gorla is a triple
$\mathcal{L}\mathbin=(\mathcal{P},\longmapsto,\asymp)$;
so far I defined $\mathcal{P}$ and $\longmapsto$ only.
The semantic equivalence $\asymp$ of the source language plays no
r\^ole in assessing whether an encoding is valid; the one of the
target language is used only for satisfying the criteria of name
invariance and operational correspondence. Here I take $\asymp_\pims$ and
$\asymp_\pima$ the identity relations.

If $S \equred S'$ for $S,S'\in\T_\pims$ then there exists a sequence
$S_0 \equred S_1 \equred \dots \equred S_n$ for some $n\geq 0$, with
$S=S_0$ and $S'=S_n$, such that each each step $S_i \equred S_{i+1}$
for $0\leq i<n$ is an application of one of the rules
$\scriptstyle(1)-(9)$ of Definition~\ref{def:reduction} or their
symmetric counterparts
$\scriptstyle\stackrel{\leftarrow}{(1)}-\stackrel{\leftarrow}{(9)}$.
(In fact, there is no need for rules
$\scriptstyle\stackrel{\leftarrow}{(2)},\stackrel{\leftarrow}{(6)},\stackrel{\leftarrow}{(8)}$
and $\scriptstyle\stackrel{\leftarrow}{(9)}$
as rules $\scriptstyle(2),(6),(8)$ and $\scriptstyle(9)$ are their own symmetric counterparts.)
Being an application of a rule $L\equred R$
here means that $S_i=C[L]$ and $S_{i+1}=C[R]$ for some unary context $C[\__1]$.

\paragraph{Operational completeness}

\begin{lemma}\label{redBF}
If $S \equred S'$ for $S,S' \mathbin\in \T_{\pims}$ then $\trans{S} \equred \trans{S'}$.
\end{lemma}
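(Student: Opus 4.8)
The statement to prove is Lemma~\ref{redBF}: if $S \equred S'$ for $S,S' \in \T_{\pims}$, then $\trans{S} \equred \trans{S'}$. The plan is to reduce to the single-step case and then argue by structural induction on the context.

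First I would invoke the observation made just before the statement: $S \equred S'$ decomposes into a finite chain of single-step rewrites $S_i \equred S_{i+1}$, each of which is an application of one of the axioms $\scriptstyle(1)$--$\scriptstyle(9)$ or a symmetric counterpart, inside some unary context $C[\__1]$. Since $\equred$ on the target language $\T_{\pima}$ is transitive, it suffices to handle one such step, i.e.\ to show $\trans{C[L]} \equred \trans{C[R]}$ whenever $L \equred R$ is one of the axiom instances (and, by symmetry of $\equred$, I only need the nine axioms, not their reverses). Then I would do an induction on the structure of the context $C[\__1]$: for the base case $C = \__1$, I must show $\trans{L} \equred \trans{R}$ directly for each of the nine axioms; for the inductive step, $C$ is $\bar x y.C'$, $x(z).C'$, $C' \mid Q$, $Q \mid C'$, $(\nu z)C'$, or $!C'$, and I use that $\trans{\cdot}$ sends each source operator to a target context (compositionality, already established) together with the fact that $\equred$ in $\T_{\pima}$ is itself a congruence, so it is preserved by every target-language operator and hence by each of these target contexts. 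The input/output-prefix cases of the inductive step need a small extra remark: the translation of $\bar x y.C'$ and $x(z).C'$ picks fresh names $u,v \notin \fn(C')\cup\{x,y\}$, and since $\fn(L)=\fn(R)$ for every axiom (structural congruence preserves free names), the same choice of fresh names works on both sides, so the two translations sit inside literally the same target context.

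The remaining work is the base case: checking $\trans{L}\equred\trans{R}$ for each axiom. Axioms $\scriptstyle(1),(2),(3)$ (associativity, commutativity, unit of $\mid$) are immediate because $\trans{P\mid Q} = \trans{P}\mid\trans{Q}$ and $\equred$ on the target has the same monoid laws. Axiom $\scriptstyle(4)$, $!P \equred P \mid {!P}$, follows from $\trans{!P} = {!\trans{P}} \equred \trans{P}\mid{!\trans{P}} = \trans{P\mid{!P}}$. Axioms $\scriptstyle(5),(6),(7)$ (the restriction laws) follow because $\trans{(\nu z)P}=(\nu z)\trans{P}$ and, crucially, $\fn(\trans{P})=\fn(P)$ by Lemma~\ref{free vars}, so the side condition $w\notin\fn(Q)$ of scope extrusion transfers verbatim; for $\scriptstyle(5)$ one also uses $\trans{\textbf0}=\textbf0$. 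The $\alpha$-conversion axioms $\scriptstyle(8),(9)$ are handled using the second clause of Lemma~\ref{free vars}, namely $\trans{P}\subs{w}{z}=\trans{P\subs{w}{z}}$ (plus $\fn(\trans P)=\fn(P)$ to see that $w$ is fresh for $\trans P$ exactly when it is fresh for $P$), so that $\trans{(\nu z)P}=(\nu z)\trans P \equred (\nu w)\trans P\subs wz = (\nu w)\trans{P\subs wz}=\trans{(\nu w)P\subs wz}$, and similarly for input prefix.

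I expect the main obstacle to be bookkeeping around fresh-name choices rather than anything conceptually deep: the translation clauses for the prefixes are not literally uniform substitutions but depend on a choice of fresh $u,v$, so in both the base case for axioms $\scriptstyle(8),(9)$ and the inductive step through a prefix I have to be careful that the ``choice function'' underlying $\trans{\cdot}$ can be taken to make the same choice on $\alpha$-equivalent-or-structurally-congruent arguments; this is exactly what Lemma~\ref{free vars} buys, since structural congruence preserves the set of free names and the fresh names are chosen only as a function of that set. Once that point is made explicit, every case is a one- or two-line calculation.
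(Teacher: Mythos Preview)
Your proposal is correct and follows essentially the same route as the paper: reduce to a single axiom application by transitivity and symmetry, then do structural induction on the context $C[\__1]$, handling the base case by direct calculation (invoking Lemma~\ref{sbst} for axioms $\scriptstyle(8),(9)$) and the inductive step via compositionality of $\trans{\cdot}$ together with $\equred$ being a congruence. Your write-up is considerably more detailed than the paper's four-line sketch, and your explicit discussion of the fresh-name bookkeeping (using $\fn(L)=\fn(R)$ and Lemma~\ref{free vars}) makes precise something the paper leaves implicit; note that even if the fresh-name choices for the two sides differ, the resulting target terms are still related by $\equred$ via the $\alpha$-conversion rules $\scriptstyle(8),(9)$, so strict equality of choices is not actually required.
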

\begin{proof}
Using the reflexivity, symmetry and transitivity of $\equred$ one may restrict
attention to the case that $S \equred S'$ is a single application of a
rule $\scriptstyle(1)-(9)$ of Definition~\ref{def:reduction}.
The proof proceeds by structural induction on the context $C[\__1]$.
The case that $C[\__1] = \__1$, the \emph{trivial context}, is
straightforward for each of the rules $\scriptstyle(1)-(9)$,
applying Lemma~\ref{sbst} in the cases of rules $\scriptstyle(8),(9)$.
The induction step is a straightforward consequence of the
compositionality of $\trans{\cdot}$.
\end{proof}

\begin{lemma}\label{operational completeness}
Let $S,S' \mathbin\in \T_{\pims}$.
If $S \longmapsto S'$ then $\trans{S} \Longmapsto \trans{S'}$.
\end{lemma}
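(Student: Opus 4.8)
The plan is to induct on the derivation of $S \longmapsto S'$ from the four rules of Definition~\ref{def:reduction}.

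The heart of the matter is the base case, where $S = \bar{x}y.P \mid x(z).Q$ and $S' = P \mid Q\subs{y}{z}$ with $y \notin \bn(Q)$. By compositionality $\trans{S} = \trans{\bar{x}y.P} \mid \trans{x(z).Q}$; writing $u,v$ for the private names of $\trans{\bar{x}y.P}$ and, after $\alpha$-conversion, $u',v'$ for those of $\trans{x(z).Q}$, this expands to
\[
(u)\bigl(\bar{x}u \mid u(v).(\bar{v}y \mid \trans{P})\bigr)
\;\mid\;
x(u').(v')\bigl(\bar{u'}v' \mid v'(z).\trans{Q}\bigr).
\]
By the freshness conventions in the definition of $\trans{\cdot}$ together with Lemma~\ref{free vars}, the names $u,v,u',v'$ may be taken pairwise distinct and outside $\fn(P) \cup \fn(Q) \cup \{x,y,z\}$; in particular $y \notin \bn(\trans{Q})$, since $y \notin \bn(Q)$ and the private names of $\trans{Q}$ can be chosen to avoid $y$.

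I then simulate the single source synchronisation by a \emph{three-step handshake} in the target:
\begin{enumerate}
\item a synchronisation on $x$ passing $u$, which (after pulling the restriction $(u)$ outward by scope extrusion) turns $x(u').(v')(\cdots)$ into $(v')(\bar{u}v' \mid v'(z).\trans{Q})$;
\item a synchronisation on $u$ passing $v'$, which (after pulling $(v')$ outward) turns $u(v).(\bar{v}y \mid \trans{P})$ into $\bar{v'}y \mid \trans{P}$;
\item a synchronisation on $v'$ passing $y$, which turns $v'(z).\trans{Q}$ into $\trans{Q}\subs{y}{z}$.
\end{enumerate}
Each step happens inside the restrictions $(u)$ and, from step~2 on, $(v')$; the scope extrusions and the discarding of the spent $\textbf{0}$ after each step are instances of structural congruence, licensed by the freshness of $u$ and $v'$, and the three side conditions of the communication rule hold because $u,v' \notin \bn(\trans{P}) \cup \bn(\trans{Q})$ (arranged by $\alpha$-conversion) and $y \notin \bn(\trans{Q})$. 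The result is $(u)(v')\bigl(\trans{P} \mid \trans{Q}\subs{y}{z}\bigr)$; since $u,v' \notin \fn(\trans{P}) \cup \fn(\trans{Q}\subs{y}{z})$ the two restrictions vanish by structural congruence, leaving $\trans{P} \mid \trans{Q}\subs{y}{z}$, which equals $\trans{P} \mid \trans{Q\subs{y}{z}} = \trans{P \mid Q\subs{y}{z}} = \trans{S'}$ by Lemma~\ref{sbst} and compositionality. As this identification is only up to $\equred$, I fold it into the third synchronisation via the structural-congruence rule of Definition~\ref{def:reduction} --- possible because the handshake is a nonempty reduction sequence --- and conclude $\trans{S} \Longmapsto \trans{S'}$.

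The inductive cases are routine. If $S = P \mid R \longmapsto P' \mid R = S'$ is derived from $P \longmapsto P'$, then the induction hypothesis gives $\trans{P} \Longmapsto \trans{P'}$, and by compositionality $\trans{S} = \trans{P} \mid \trans{R}$, $\trans{S'} = \trans{P'} \mid \trans{R}$; closing $\longmapsto$ under parallel composition with $\trans{R}$ (the parallel rule, applied repeatedly) gives $\trans{S} \Longmapsto \trans{S'}$. The restriction rule is treated identically, using $\trans{(\nu z)T} = (\nu z)\trans{T}$ and closure of $\longmapsto$ under the restriction $(\nu z)$. For the structural-congruence rule, if $S \equred P \longmapsto P' \equred S'$, then Lemma~\ref{redBF} gives $\trans{S} \equred \trans{P}$ and $\trans{P'} \equred \trans{S'}$, the induction hypothesis gives the nonempty sequence $\trans{P} \Longmapsto \trans{P'}$, and the flanking structural congruences are absorbed into its first and last steps by the structural-congruence rule of Definition~\ref{def:reduction}; hence $\trans{S} \Longmapsto \trans{S'}$. (Every simulated reduction contributes at least one target step --- in fact at least three --- so these sequences are never empty.)

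The one real obstacle is the name bookkeeping in the base case: verifying that each of the three target synchronisations meets the side condition of the communication rule, that the intervening scope extrusions apply, and that after garbage-collecting the spent $\textbf{0}$'s and the two now-vacuous private restrictions one lands \emph{exactly} on the compositional translation $\trans{S'}$, not merely on something structurally congruent to it. All of this is forced by the convention that the private names in the encoding of a prefix are chosen outside the free names of its body and outside its subject and object, together with Lemma~\ref{free vars}.
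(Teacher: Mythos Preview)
Your proof is correct and follows essentially the same route as the paper: induction on the derivation of $S\longmapsto S'$, with the base case carried out as a three-step handshake on $x$, then the first private link, then the second, and the inductive cases for $|$, $(\nu z)$ and the $\equred$-rule handled via compositionality and Lemma~\ref{redBF}. The only cosmetic difference is that the paper reuses the same bound names $u,v$ on both sides of the encoded redex (legitimate since both occurrences are bound), whereas you $\alpha$-convert to distinct $u,v,u',v'$; and the paper drops the binders $(u)$ and $(v)$ incrementally after steps~2 and~3 rather than in one go at the end.
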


\begin{trivlist}\item[\hspace{4pt}\textit{Proof.}]
By induction on the derivation of $S \longmapsto S'$.
\begin{itemize}
\item Let $S\mathbin=\bar xy.P | x(z).Q$, $y\mathbin{\notin}\bn(Q)$ and $S'\mathbin=P|Q\subs{y}{z}$.
Pick $u,v\notin\fn(P)\cup\fn(Q)$, with $u\mathbin{\neq} v$.\\
Write $P^*\mathbin{:=}\bar{v}y | \trans{P}$ and $Q^*\mathbin{:=}v(z).\trans{Q}$. Then
\[\begin{array}{@{}rcl@{}}\trans{S} &=& (u)(\bar{x}u | u(v).P^*) ~|~ x(u).(v)(\bar{u}v | Q^*)\\
&\longmapsto& (u)\big(u(v).P^* ~|~ (v)(\bar{u}v | Q^*)\big)\\
&\longmapsto& (v)(P^* ~|~ Q^*)\\
&\longmapsto& \trans{P} ~|~ (\trans{Q}\subs{y}{z})\\
&=& \trans{P} ~|~ \trans{Q\subs{y}{z}} \quad~~~\mbox{(using Lemma~\ref{sbst})}\\
&=& \trans{P ~|~ Q\subs{y}{z}} = \trans{S'}.
\end{array}\]
Here structural congruence is applied in omitting parallel components $\bm{0}$ and empty binders
$(u)$, $(v)$.
\item Let $S = (z) P$ and $S' = (z) P'$, with $P \longmapsto P'$.
    By the induction hypothesis, $\trans{P} \Longmapsto \trans{P'}$.
    Therefore, $\trans{S} \Longmapsto \trans{S'}$,
    as $\trans{S} = (z) \trans{P}$ and $\trans{S'} = (z) \trans{P'}$.
\item The case that $S\mathbin= P|Q$ and $S'\mathbin= P'|Q$ with $P \longmapsto P'$
    proceeds likewise.
\item Let $S \equred P$ and $P' \equred S'$ with $P \longmapsto P'$.
    By the induction hypothesis, $\trans{P} \Longmapsto \trans{P'}$.
    By Lemma~\ref{redBF}, $\trans{S} \equred \trans{P}$ and $\trans{P'} \equred \trans{S'}$.
    So $\trans{S} \Longmapsto \trans{S'}$.
\qed
\end{itemize}
\end{trivlist}
The above yields that $S \Longmapsto S'$ implies $\trans{S} \Longmapsto \trans{S'}$.
So the criterion of operational completeness is satisfied.

\begin{remark}\label{divergence preservation}
The above proof shows that $\Longmapsto$ in Lemma~\ref{operational completeness}
may be replaced by $\longmapsto\longmapsto\longmapsto$. As a direct
consequence $S \longmapsto^\omega$ implies $\trans{S} \longmapsto^\omega$
(\emph{divergence preservation}).
\end{remark}

\paragraph{Operational soundness}
The following result provides a normal form up to structural
congruence for reduction steps in the asynchronous
$\pi$-calculus.  Here a term is \emph{plain} if it is a parallel
composition $P_1|\dots|P_n$ of subterms $P_i$ of the form
$\bar xy.R$ or $x(z).R$ or $\surd$ or $\nil$ or $!R$.
Moreover,
$(\tilde w)P$ for $\tilde w \mathbin=\{w_1,\dots,w_n\}\mathbin\subseteq \N$
with $n\mathbin\in\mbox{\bbb N}$ denotes $(w_1)\cdots(w_n)P$ for some
arbitrary order of the $(w_i)$.
Without the statements that $\C$ is plain and
$\tilde w \subseteq \fn((\bar x \un | x(\rn) . R) | \C )$, this lemma
is a simplification, by restricting attention to the syntax of
$\pima$, of Lemma 1.2.20 in~\cite{SW01book}, established for the full
$\pi$-calculus.

\begin{lemma}\label{red1}
If $\pT \longmapsto \pT'$ with $\pT,\pT'\in\T_{\pima}$ then there are
$\tilde w \mathbin\subseteq\N$, $x,\un,\rn\mathbin\in\mathcal{N}$
and terms $R,\C\mathbin\in\T_{\pima}$ with $\C$ plain, such that
$\pT \equred (\tilde w)((\bar x \un | x(\rn) . R) | \C ) \mathbin{\longmapsto}
(\tilde w)(( \bm{0} | R\subs{\un}{\rn}) | \C ) \equred \pT'$
and $\tilde w \subseteq \fn((\bar x \un | x(\rn) . R) | \C )$.
\end{lemma}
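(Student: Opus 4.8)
The plan is to proceed by induction on the derivation of $\pT\longmapsto\pT'$ according to the four rules of Definition~\ref{def:reduction}, exactly as in the proof of Lemma~\ref{operational completeness}, but now tracking the structural-congruence normal form. The base case is the communication rule: here $\pT = \bar x\un.P \mid x(\rn).Q$ with $\un\notin\bn(Q)$ and $\pT' = P\mid Q\subs{\un}{\rn}$; but since we are in $\pima$, the output prefix has the form $\bar x\un.\bm 0$, written $\bar x\un$, so $P = \bm 0$. Take $\tilde w = \emptyset$, $R := Q$, and $\C := \bm 0$ (which is plain). Then $\pT = \bar x\un \mid x(\rn).R \equred (\bar x\un \mid x(\rn).R)\mid \bm 0 \longmapsto (\bm 0 \mid R\subs{\un}{\rn})\mid\bm 0 \equred \pT'$, and $\tilde w = \emptyset \subseteq \fn(\cdots)$ trivially.

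For the inductive steps I would argue as follows. If the last rule is the restriction rule, so $\pT = (z)P \longmapsto (z)P' = \pT'$ with $P\longmapsto P'$, the induction hypothesis gives $P\equred (\tilde w)((\bar x\un\mid x(\rn).R)\mid\C)\longmapsto(\tilde w)((\bm 0\mid R\subs{\un}{\rn})\mid\C)\equred P'$ with $\C$ plain and $\tilde w\subseteq\fn((\bar x\un\mid x(\rn).R)\mid\C)$. Then $(z)P \equred (z)(\tilde w)((\bar x\un\mid x(\rn).R)\mid\C)$, so one wants to take $\tilde w' := \tilde w\cup\{z\}$. The subtlety is the side condition $\tilde w'\subseteq\fn((\bar x\un\mid x(\rn).R)\mid\C)$: if $z\notin\fn((\bar x\un\mid x(\rn).R)\mid\C)$ then the leading $(z)$ can simply be absorbed using rule $\scriptstyle(5)$ (erasing $(z)\bm 0$ after pushing it inward via $\scriptstyle(7)$) or, more cleanly, $(z)$-restriction of a term not containing $z$ free is $\equred$-equivalent to the term itself (a derived law from $\scriptstyle(5),(7)$), so one keeps $\tilde w$ unchanged. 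If $z\in\fn$, one adds $z$ to $\tilde w$. The parallel-composition rule $\pT = P\mid Q\longmapsto P'\mid Q = \pT'$ is similar: by the induction hypothesis $P\equred(\tilde w)((\bar x\un\mid x(\rn).R)\mid\C)$; using scope extrusion (rule $\scriptstyle(7)$) to pull the binders $(\tilde w)$ past $Q$ — after $\alpha$-converting $\tilde w$ to avoid $\fn(Q)$, which is permitted by rules $\scriptstyle(8),(6)$ — one gets $P\mid Q\equred(\tilde w)((\bar x\un\mid x(\rn).R)\mid(\C\mid Q))$; and $\C\mid Q$ is plain provided $Q$ is, which it need not be. This is the point needing care: $Q$ is an arbitrary $\pima$-term, so one must first bring $Q$ itself into the plain-under-binders shape $Q\equred(\tilde v)\C_Q$ with $\C_Q$ plain (a separate easy structural induction: push all restrictions to the top, using $\scriptstyle(5),(6),(7)$ and $\alpha$-conversion), absorb $(\tilde v)$ into the outer binder block, and set $\C := \C\mid\C_Q$. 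Finally the structural-congruence rule $\pT\equred P\longmapsto P'\equred\pT'$ is immediate: the induction hypothesis handles $P\longmapsto P'$, and transitivity of $\equred$ glues $\pT\equred P\equred\cdots$ and $\cdots\equred P'\equred\pT'$.

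The main obstacle is bookkeeping rather than conceptual: maintaining the two side conditions ("$\C$ plain" and "$\tilde w\subseteq\fn(\cdots)$") simultaneously through the parallel and restriction cases, which forces the auxiliary observation that every $\pima$-term is $\equred$ to one of the form $(\tilde w)\C$ with $\C$ plain, together with careful $\alpha$-conversion of the bound-name block $\tilde w$ to keep it disjoint from the free names of the context components being merged in. I would state and prove that auxiliary observation first, as a one-line structural induction, and then the four cases of the main induction go through with only routine rearrangement by structural congruence. The minimality condition $\tilde w\subseteq\fn((\bar x\un\mid x(\rn).R)\mid\C)$ is then enforced as a final cleanup step: from any representation $(\tilde w)((\bar x\un\mid x(\rn).R)\mid\C)$ one deletes from $\tilde w$ every name not occurring free in the body, which is sound because $(z)T\equred T$ whenever $z\notin\fn(T)$.
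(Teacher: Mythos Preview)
Your proposal is correct. The argument by induction on the reduction derivation, together with the auxiliary observation that any $\pima$-term is $\equred$-equivalent to one of the form $(\tilde w)\C$ with $\C$ plain, goes through as you describe; the only delicate case is parallel composition, and you have correctly identified both the issue (the side term $Q$ need not be plain) and its resolution.

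The paper takes a somewhat different route. Rather than a structural induction on the derivation, it performs a global \emph{proof normalisation}: first it commutes all applications of the structural-congruence rule to the outside, obtaining $\pT \equred \pT^{\rm pre} \longmapsto \pT^{\rm post} \equred \pT'$ where the middle step uses only the first three rules; then it reorders those rules so that the single communication axiom is followed by one application of the $|$-rule and finally by applications of the restriction rule, simultaneously bringing $\pT^{\rm pre}$ into the form $(\tilde w)((\bar x\un \mid x(\rn).R)\mid\C)$ via rules $\scriptstyle(8),(9),(2),\stackrel{\leftarrow}{(7)}$ and $\scriptstyle\stackrel{\leftarrow}{(1)},\stackrel{\leftarrow}{(3)}$. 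The minimality condition on $\tilde w$ is enforced afterwards, just as you do. Your approach and the paper's are equivalent in content---both amount to observing that the derivation tree can be flattened into communication, then one parallel context, then restrictions---but the paper avoids stating your auxiliary observation as a separate lemma by absorbing the normalisation of the context $Q$ into the global rule-reordering step. Your version is arguably more explicit about where the work happens; the paper's is terser.
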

\begin{proof}
The reduction $\pT \longmapsto \pT'$ is provable from the reduction rules
of Definition~\ref{def:reduction}. Since $\equred$ is a congruence,
applications of the last rule can always be commuted until they appear
at the end of such a proof. Hence there are terms $\pT^{\rm pre}$ and $\pT^{\rm post}$
such that $\pT \equred \pT^{\rm pre} \longmapsto \pT^{\rm post} \equred \pT'$,
and the reduction $\pT^{\rm pre} \longmapsto \pT^{\rm post}$ is generated
by the first three rules of Definition~\ref{def:reduction}.
Applying rules $\scriptstyle(8)$, $\scriptstyle(9)$, $\scriptstyle(2)$ and
$\scriptstyle\stackrel{\leftarrow}{(7)}$ of structural congruence,
the terms $\pT^{\rm pre}$ and $\pT^{\rm post}$ can be brought in the forms
$(\tilde w)P^{\rm pre}$ and $(\tilde w)P^{\rm post}$, with
$P^{\rm pre}$ and $P^{\rm post}$ plain, at the same time moving all
applications of the reduction rule for restriction $(z)P$
after all applications of the rule for parallel composition. Applying rules
\plat{$\scriptstyle\stackrel{\leftarrow}{(1)},\stackrel{\leftarrow}{(3)}$},
all applications of the reduction rule for parallel composition can be
merged into a single application. After this proof normalisation,
the reduction $\pT^{\rm pre} \longmapsto \pT^{\rm post}$ is generated by
one application of the first reduction rule of Definition~\ref{def:reduction},
followed by one application of the rule for $|$, followed by
applications of the rule for restriction. Now $\pT^{\rm pre}$ has the
form $(\tilde w)((\bar x \un | x(\rn) . R) | \C)$ and
$\pT^{\rm post}=(\tilde w)((\bm{0} | R\subs{\un}{\rn}) |\C)$ with $\C$ plain.

Rules $\scriptstyle\stackrel{\leftarrow}{(3)}$, $\scriptstyle{(7)}$,
$\scriptstyle{(5)}$ and $\scriptstyle{(3)}$ of structural congruence,
in combination with $\alpha$-conversion (rules $\scriptstyle{(8)}$ and $\scriptstyle{(9)}$),
allow all names $w$ with $w \notin\fn((\bar x \un | x(\rn) . R) | \C)$
to be dropped from $\tilde w$, while preserving the validity of
$\pT^{\rm pre} \longmapsto \pT^{\rm post}$.
\end{proof}

\noindent
Write $P \eqused Q$ if $P$ can be converted into $Q$ using
applications of rules $\scriptstyle(1)-(3),(5)-(9)$ only, in either direction,
possibly within a context,
and $P \Rrightarrow_{!} Q$ if this can be done with rule $\scriptstyle{(4)}$, from left to right.

\begin{lemma}\label{red2}
  Lemma~\ref{red1} can be strengthened by replacing \mbox{$\pT\equred (\tilde w)((\bar x \un | x(\rn) . R) |\C) $} by
  $\pT \Rrightarrow_{!}\eqused (\tilde w)((\bar x \un | x(\rn) . R) |\C)$.
\end{lemma}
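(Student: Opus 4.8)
The plan is to invoke Lemma~\ref{red1} and then convert the structural congruence that it places before the reduction into a block of replication-unfoldings followed by structural congruence not using rule $\scriptstyle(4)$. By Lemma~\ref{red1} there are already $\tilde w,x,\un,\rn,R,\C$ as in its statement, so that $\pT\equred G\longmapsto G_{\rm red}\equred\pT'$ with $G=(\tilde w)((\bar x\un|x(\rn).R)|\C)$, $\C$ plain, $\tilde w\subseteq\fn((\bar x\un|x(\rn).R)|\C)$ and $G_{\rm red}=(\tilde w)((\bm{0}|R\subs{\un}{\rn})|\C)$. It therefore suffices to show: whenever $\pT\equred G$ for such a $G$, there is a term $G'$ of the same shape — a restriction prefix $(\tilde w')$, then $\bar x\un|x(\rn).R'$ in parallel with a plain $\C'$, and $\tilde w'\subseteq\fn((\bar x\un|x(\rn).R')|\C')$ — with $\pT\Rrightarrow_{!}\eqused G'$ and with the reduct $(\tilde w')((\bm{0}|R'\subs{\un}{\rn})|\C')$ of $G'$ still $\equred\pT'$.

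I would prove this from three facts about the generators of $\equred$, writing ${\Lleftarrow_{!}}$ for the converse of ${\Rrightarrow_{!}}$. \textbf{(a)} Unfolding commutes with the rest of structural congruence: ${\eqused}\circ{\Rrightarrow_{!}}\subseteq{\Rrightarrow_{!}}\circ{\eqused}$, since the occurrence of $!E$ that gets unfolded can already be located in the term before the parallel components, restrictions and bound names were rearranged, and unfolded there instead. \textbf{(b)} $\Rrightarrow_{!}$ is confluent modulo $\eqused$: two unfoldings at incomparable positions commute outright, and when the position of one lies inside the body $C$ of the $!C$ unfolded by the other, performing the latter first makes that redex occur in \emph{both} resulting copies of $C$, so unfolding it in each reproduces the other outcome; the general case follows by the standard simultaneous-unfolding (parallel-reduction) argument. \textbf{(c)} Every $\Rrightarrow_{!}$-reduct $S$ of a $G$ of the shape above is $\eqused$ to a term $G'$ of that shape: since $\bar x\un$ and $x(\rn).R$ are not replications, the unfoldings all lie inside $R$ or inside $\C$, so $S=(\tilde w)((\bar x\un|x(\rn).R^u)|\C^u)$ with $R^u\equred R$ and $\C^u\equred\C$; pulling to the front the restrictions that unfolding introduced into $\C^u$, re-associating the plain part, and deleting now-superfluous binders — all by rules $\scriptstyle(1)-(3),(5)-(9)$ — yields such a $G'$, whose reduct is $\equred(\tilde w)((\bm{0}|R\subs{\un}{\rn})|\C)=G_{\rm red}\equred\pT'$.

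From (a) and (b) a routine argument gives ${\equred}={\Rrightarrow_{!}}\circ{\eqused}\circ{\Lleftarrow_{!}}$ (this relation is reflexive, symmetric, contains each generator of $\equred$, and (a)+(b) make it transitive). Applied to $\pT\equred G$ this yields $S,S'$ with $\pT\Rrightarrow_{!}S$, $S\eqused S'$ and $G\Rrightarrow_{!}S'$; by (c) we get $S'\eqused G'$ for a $G'$ of the required shape with reduct $\equred\pT'$, whence $\pT\Rrightarrow_{!}S\eqused G'$, which is exactly what was needed. (Equivalently one may induct on the length of a derivation $\pT\equred G$: steps that apply rules $\scriptstyle(1)-(3),(5)-(9)$ are absorbed into the trailing $\eqused$ via (a), a step applying rule $\scriptstyle(4)$ from left to right is prepended to the leading $\Rrightarrow_{!}$, and the one delicate case — a step applying rule $\scriptstyle(4)$ from right to left, where $\pT$ is already \emph{more} unfolded than its successor — is settled by (b), the surplus copy in $\pT$ being recreated from the successor by a single unfolding.) The main obstacle is precisely fact~(b), the confluence of $\Rrightarrow_{!}$ modulo $\eqused$ in the nested case; it is what lets one reveal the redex of a reduction purely by \emph{unfolding} replications, never by folding them back.
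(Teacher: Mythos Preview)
Your argument is correct, and it reaches the same destination as the paper, but by a different route. Your facts (a) and (c) coincide with ingredients the paper also uses: (a) is exactly the commutation recorded in the paper's table (that $\stackrel{\leftarrow}{(4)}$ commutes over each of the rules $\scriptstyle(1)$--$\scriptstyle(3),\scriptstyle(5)$--$\scriptstyle(9)$), and (c) is the observation, also made in the paper, that a replication (un)folding adjacent to the normal form $(\tilde w)((\bar x\un|x(\rn).R)|\C)$ must lie inside $R$ or $\C$ and can therefore be pushed past the reduction. Where you diverge is in how the interaction of $\scriptstyle(4)$ with $\scriptstyle\stackrel{\leftarrow}{(4)}$ is handled. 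The paper argues by direct permutation: it moves each $\scriptstyle\stackrel{\leftarrow}{(4)}$ step rightwards through the derivation $T\equiv\cdots\equiv G$, using that $\scriptstyle\stackrel{\leftarrow}{(4)}$ either commutes with $\scriptstyle(4)$ or annihilates against it (the $\star$ entry in the table); once none remain, all $\scriptstyle(4)$ steps are shifted left. You instead prove the global factorisation ${\equred}={\Rrightarrow_{!}}\circ{\eqused}\circ{\Lleftarrow_{!}}$ in one stroke, and for that you need your fact (b), confluence of $\Rrightarrow_!$. This is a genuine extra lemma---it is not stated in the paper---but it is entirely standard: the single rule $!x\to x\,|\,!x$ is left-linear with no non-trivial critical pairs, hence orthogonal and confluent, which is what your parallel-reduction remark amounts to. The payoff of your route is a clean standalone equation ${\equred}={\Rrightarrow_{!}}\circ{\eqused}\circ{\Lleftarrow_{!}}$ that could be reused elsewhere; the paper's route is more elementary in that it never needs confluence as such, only the local commutation/annihilation of a single $\scriptstyle\stackrel{\leftarrow}{(4)}$ against a single $\scriptstyle(4)$, though the termination of its permutation process is left implicit.
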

\begin{proof}
Define rule $\scriptstyle\stackrel{\leftarrow}{(4)}$ to \emph{commute over} rule $\scriptstyle(1)$
if for each sequence $P\equred Q \equred R$ with $P\equred Q$ an application of
rule $\scriptstyle\stackrel{\leftarrow}{(4)}$ and $Q\equred R$ an application of rule
$\scriptstyle(1)$, there exists a term $Q'$ such that $P\equred Q'$ holds by (possibly multiple)
applications of rule $\scriptstyle(1)$ and $Q'\equred R$ by applications of rule
$\scriptstyle\stackrel{\leftarrow}{(4)}$.
As indicated in the table below, rule $\scriptstyle\stackrel{\leftarrow}{(4)}$ commutes over all
other rules of structural congruence, except for rule $\scriptstyle(4)$.
The proof of this is trivial: in each case the two rules act on disjoint part of the syntax tree of $Q$.
Moreover, rule $\scriptstyle\stackrel{\leftarrow}{(4)}$ commutes over rule $\scriptstyle(4)$ too,
except in the special case that the two applications annihilate each other precisely, meaning that $P=R$;
this situation is indicated by the $\star$.
\[
\begin{array}{@{}r@{\,}||@{\,}c@{\,}|@{\,}c@{\,}|@{\,}c@{\,}|@{\,}c@{\,}|@{\,}c@{\,}|@{\,}c@{\,}|@{\,}c@{\,}|@{\,}c@{\,}|@{\,}c@{\,}|@{\,}c@{\,}|@{\,}c@{\,}|@{\,}c@{\,}|@{\,}c@{\,}|@{\,}c@{\,}|@{}}
&
\scriptstyle(1) & \scriptstyle\stackrel{\leftarrow}{(1)} &
\scriptstyle(2) &
\scriptstyle(3) & \scriptstyle\stackrel{\leftarrow}{(3)} &
\scriptstyle(4) & \scriptstyle\stackrel{\leftarrow}{(4)} &
\scriptstyle(5) & \scriptstyle\stackrel{\leftarrow}{(5)} &
\scriptstyle(6) &
\scriptstyle(7) & \scriptstyle\stackrel{\leftarrow}{(7)} &
\scriptstyle(8) &
\scriptstyle(9) \\
\hline
\scriptstyle\stackrel{\leftarrow}{(4)} & \scriptstyle\surd  & \scriptstyle\surd  & \scriptstyle\surd & \scriptstyle\surd  & \scriptstyle\surd  & \star & \cdot & \scriptstyle\surd  & \scriptstyle\surd  & \scriptstyle\surd  & \scriptstyle\surd  & \scriptstyle\surd  & \scriptstyle\surd  & \scriptstyle\surd  \\
\end{array}
\]
As a consequence of this, in a sequence
$P_0 \mathbin{\equred} P_1 \mathbin{\equred} \dots \mathbin{\equred} P_n$, all applications of rule
$\scriptstyle\stackrel{\leftarrow}{(4)}$ can be moved to the right. Moreover, when
$P_{n-1} \equred P_n \mathbin{:=} (\tilde w)((\bar x \un | x(\rn) . R) | \C ) \longmapsto\linebreak[1]
(\tilde w)(( \bm{0} | R\subs{\un}{\rn}) | \C ) \equred \pT'\!$
and $P_{n-1} \mathbin{\equred} P_n$ is an application of rule
\plat{$\scriptstyle\stackrel{\leftarrow}{(4)}$}, 
then this application must take place within the term $R$ or $\C$, and thus can be postponed until
after the reduction step, so that
$P_{n-1} \mathbin= (\tilde w)((\bar x \un | x(\rn) . R') | \C' ) \mathbin{\longmapsto}
(\tilde w)(( \bm{0} | R'\subs{\un}{\rn}) | \C' ) \mathbin{\equred} \pT'\!$
with $\C'$ plain.
Thus, one may assume that in the sequence $\pT=P_0 \equred P_1 \equred \dots \equred P_n$
none of the steps is an application of rule $\scriptstyle\stackrel{\leftarrow}{(4)}$.

Since applications of rule
$\scriptstyle\stackrel{\leftarrow}{(4)}$ could be shifted to the right in this
sequence, all applications of rule $\scriptstyle(4)$ can be shifted to the left.
Hence $\pT \mathbin{\Rrightarrow_{!}\eqused} (\tilde w)((\bar x \un | x(\rn) . R) | \C )$.
\end{proof}

\begin{lemma}\label{redBB}
If $\trans{S} \Rrightarrow_{!} \pT_0$ for $S \mathbin\in \T_{\pims}$ and $\pT_0 \mathbin\in \T_{\pima}$
then there is an $S_0 \mathbin\in \T_{\pims}$ with $S \Rrightarrow_{!} S_0$ and $\trans{S_0}=\pT_0$.
\end{lemma}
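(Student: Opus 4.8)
The plan is to prove the lemma in two stages. First I would establish the single-step version, in which $\Rrightarrow_{!}$ is replaced by a single application of rule $\scriptstyle(4)$ from left to right, possibly within a context; write $\to_{!}$ for such a single step, so that $\Rrightarrow_{!}$ is its reflexive--transitive closure. The lemma itself then follows by induction on the length of a $\to_{!}$-derivation witnessing $\trans{S} \Rrightarrow_{!} \pT_0$.

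For the single-step statement I would argue by structural induction on $S$. The key syntactic observation is that Boudol's encoding introduces no replications of its own: inspecting the six defining clauses of $\trans{\cdot}$ one sees that every subterm of $\trans{S}$ of the form $!\pT$ either \emph{is} $\trans{S}$ itself---which occurs only when $S = {!}P$, and then $\pT = \trans{P}$---or lies within $\trans{P}$ for an immediate subterm $P$ of $S$; it never lies strictly inside the extra output/input machinery that the prefix clauses wrap around $\trans{P}$. Consequently a step $\trans{S} \to_{!} \pT_0$ either unfolds this head replication---possible only when $S = {!}P$, in which case $\pT_0 = \trans{P} | {!}\trans{P} = \trans{P | {!}P}$ and one takes $S_0 := P | {!}P$---or it takes place within $\trans{P}$ for an immediate subterm $P$ of $S$, in which case the induction hypothesis gives $P \to_{!} P_0$ with $\trans{P_0}$ the correspondingly modified copy of $\trans{P}$, and one takes $S_0$ to be $S$ with that occurrence of $P$ replaced by $P_0$. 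That $\trans{S_0} = \pT_0$ then follows from the compositionality of $\trans{\cdot}$; the only point needing care is that, in the two prefix cases, the fresh names introduced by the encoding of $\bar x y.P_0$ (resp.\ $x(z).P_0$) must coincide with those introduced for $\bar x y.P$ (resp.\ $x(z).P$). This holds because an application of rule $\scriptstyle(4)$ preserves free names, so that $\fn(P_0) = \fn(P)$ and the side condition governing the choice of those fresh names is unchanged.

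Granting the single-step version, the lemma follows by induction on the length of a $\to_{!}$-derivation witnessing $\trans{S} \Rrightarrow_{!} \pT_0$. If the length is $0$ then $\pT_0 = \trans{S}$ and $S_0 := S$ works. Otherwise the derivation factors as $\trans{S} \to_{!} \pT_1 \Rrightarrow_{!} \pT_0$, the second part being strictly shorter. The single-step version, applied to $\trans{S} \to_{!} \pT_1$, yields $S_1$ with $S \to_{!} S_1$ and $\trans{S_1} = \pT_1$; the induction hypothesis, applied to $\trans{S_1} = \pT_1 \Rrightarrow_{!} \pT_0$, yields $S_0$ with $S_1 \Rrightarrow_{!} S_0$ and $\trans{S_0} = \pT_0$. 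Then $S \to_{!} S_1 \Rrightarrow_{!} S_0$, so $S \Rrightarrow_{!} S_0$, as required.

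The only real obstacle is the bookkeeping in the single-step version: securing the \emph{equality} $\trans{S_0} = \pT_0$ (rather than merely $\trans{S_0} \equred \pT_0$) forces the care with the encoding's choice of fresh names described above, and one must make the ``no new replications'' observation precise enough that each replication of $\trans{S}$ can be traced unambiguously to the replication of $S$ that it encodes---including when that replication is guarded, i.e.\ lies underneath an input or output prefix, which is unproblematic since $\Rrightarrow_{!}$ acts purely syntactically and is not a reduction.
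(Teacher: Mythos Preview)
Your proposal is correct and follows essentially the same approach as the paper, whose proof reads in full ``Similar to the proof of Lemma~\ref{redBF}'': reduce to a single application of rule $\scriptstyle(4)$ and then proceed by structural induction (on the context in Lemma~\ref{redBF}; on $S$ in your version---these amount to the same thing here). Your explicit treatment of the fresh-name bookkeeping, via $\fn(P_0)=\fn(P)$ and the fact that the context $C_\op^N$ depends only on $N$, makes precise a point the paper leaves implicit.
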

\begin{proof}
Similar to the proof of Lemma~\ref{redBF}.
\end{proof}

\noindent
Note that a variant of Lemma~\ref{redBB} with
$\scriptstyle(2)$, $\scriptstyle(3)$, $\stackrel{\leftarrow}{\scriptstyle(3)}$,
or $\stackrel{\leftarrow}{\scriptstyle(7)}$
in the r\^ole of $\scriptstyle{(4)}$ would not be valid.

Up to $\eqused$ each term $P\in\T_\pims$ can be brought in the form
$(\tilde w)P$ with $P$ plain and $\tilde w \subseteq \fn(P)$.
Moreover, such a normal form has a degree of uniqueness:
\begin{observation}\label{sigma}
If $(\tilde w)P \mathbin{\eqused} (\tilde v)Q$ with $P,Q$ plain,
$\tilde w \mathbin{\subseteq} \fn(P)$ and $\tilde v \mathbin\subseteq \fn(Q)$,
then there is an injective renaming $\sigma{:}\N{\rightarrow}\N$
such that $\sigma(\tilde v)=\tilde w$ and $P \eqused Q\sigma$.
Thus, for each parallel component $P'$ of $P$ of the form
$\bar xy.R$ or $x(z).R$ or $\surd$ or $!R$ there is a parallel component
$Q'$ of $Q\sigma$ with $P' \eqused Q'$.
\end{observation}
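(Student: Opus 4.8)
The plan is to give $\eqused$ a canonical form and show that $\eqused$ identifies two terms exactly when their canonical forms agree up to three harmless changes: reordering of parallel components, $\alpha$-renaming of the top-level restricted names, and replacement of a parallel component by an $\eqused$-equivalent one. The Observation then drops out, since a term $(\tilde w)P$ with $P$ plain and $\tilde w\subseteq\fn(P)$ is essentially already in this canonical form.

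First I would dispose of $\bm 0$: deleting every parallel occurrence of $\bm 0$ from $P$ by rule $\scriptstyle(3)$ yields a plain term $P_1\eqused P$ with $\fn(P_1)=\fn(P)$, hence $\tilde w\subseteq\fn(P_1)$; likewise $Q\eqused Q_1$. Since $\eqused$ is an equivalence it suffices to treat $(\tilde w)P_1\eqused(\tilde v)Q_1$, and the case $P_1=\bm 0$ forces $\tilde w\subseteq\fn(\bm 0)=\emptyset$ (and symmetrically), so it is immediate. Next, for an arbitrary term $T$ define a \emph{canonical form} by repeatedly (i) $\alpha$-renaming via rule $\scriptstyle(8)$ and pulling every prefix-unguarded restriction to the top via rules $\scriptstyle(6),(7)$, and (ii) erasing vacuous top-level restrictions and residual $\bm 0$-components via rules $\scriptstyle(3),(5)$. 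This terminates and produces a term $(\tilde u)(A_1|\dots|A_m)$ in which each $A_i$ has the form $\bar xy.R$, $x(z).R$, $!R$ or $\surd$, the names in $\tilde u$ are pairwise distinct, and $\tilde u\subseteq\bigcup_i\fn(A_i)$; it is unique up to the three changes above, which is a routine confluence check. It is essential here that $\eqused$ omits rule $\scriptstyle(4)$, so that replications are never unfolded and the multiset $\{A_1,\dots,A_m\}$ is genuinely determined. Write $T\cong T'$ when $T$ and $T'$ have canonical forms related by those three changes.

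The crux is the equivalence $T\eqused T'\iff T\cong T'$. The direction $\Leftarrow$ is clear, and for $\Rightarrow$ it suffices to show that a single rewrite $C[L]\eqused C[R]$, with $L\equred R$ an instance of one of the rules $\scriptstyle(1)-(3),(5)-(9)$, preserves $\cong$. If the hole of $C$ sits below a prefix, the rewrite merely replaces one component $A_i$ by an $\eqused$-equivalent term, so $\cong$ holds by definition. If the hole sits in prefix-unguarded position, one checks rule by rule that canonicalisation absorbs the rewrite: rules $\scriptstyle(1),(2),(3)$ only reshuffle the parallel layer or add/remove $\bm 0$; rules $\scriptstyle(6),(7)$ are instances of the very moves used to build the canonical form; rule $\scriptstyle(5)$ turns a $\bm 0$-component into a vacuous restriction or conversely, both discarded; rule $\scriptstyle(8)$ is an $\alpha$-renaming of a name that ends up in $\tilde u$; and rule $\scriptstyle(9)$ $\alpha$-renames an input prefix, i.e.\ alters one component up to $\eqused$. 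The one genuinely delicate point is that pulling a restriction $(\nu w)$ to the top past the subterm touched by the rewrite may require a preliminary $\alpha$-renaming of $w$ to avoid name capture; once the bookkeeping of which names are pulled out, and in which order, is fixed, every case is mechanical. This — the well-definedness of the canonical form together with its stability under an arbitrary local $\eqused$-edit, including the $\alpha$-conversions that scope extrusion forces — is the main obstacle; everything else is routine.

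Finally, apply this to $(\tilde w)P_1\eqused(\tilde v)Q_1$. Both sides are already in canonical form, with component multisets the non-$\bm 0$ components of $P_1$ and of $Q_1$ respectively, so their canonical forms are $\cong$-related. Since the components of $Q_1$ have their free names in $\fn(Q_1)=\fn(Q)$, and $\fn(P)\setminus\tilde w=\fn(Q)\setminus\tilde v$ while $\tilde w$ is disjoint from $\fn(P)\setminus\tilde w$, the bijection $\tilde v\to\tilde w$ provided by $\cong$ is compatible with the identity on $\fn(Q)\setminus\tilde v$; together they extend — using that $\N$ is infinite — to an injective renaming $\sigma:\N\to\N$ that fixes $\fn(Q)\setminus\tilde v$ and satisfies $\sigma(\tilde v)=\tilde w$, and $\cong$ then gives $P_1\eqused Q_1\sigma$. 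Hence $P\eqused P_1\eqused Q_1\sigma\eqused Q\sigma$ (the last step because $\eqused$ is closed under injective renaming), and $Q\sigma$ is plain as $\sigma$ is injective. For the last assertion, each parallel component $P'$ of $P$ of one of the listed forms is $\neq\bm 0$, hence a component of $P_1$, hence by the component pairing of $\cong$ it is $\eqused$-equivalent to $B\sigma$ for some component $B$ of $Q_1$, and $B\sigma$ is a parallel component of $Q\sigma$.
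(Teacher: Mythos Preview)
The paper does not supply a proof of this statement at all: it is labelled an \emph{Observation} and is simply asserted, to be used as a tool in the proofs of Lemmas~\ref{red3} and~\ref{lem2}. So there is nothing to compare your argument against; you have in effect filled in a gap that the author left to the reader.

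Your approach via canonical forms is sound and is the natural way to make the observation precise. A couple of minor remarks. First, when you speak of pulling out ``every prefix-unguarded restriction'', note that restrictions sitting directly under a replication $!R$ cannot be extruded either (there is no rule for that), so ``unguarded'' here should mean ``not under a prefix or a $!$''; this is harmless for the application, since in a plain term every component is already a prefix, $\surd$, $\nil$, or a $!R$, so there is nothing to pull out anyway. Second, the step you flag as the ``main obstacle''---confluence of the canonicalisation and its stability under a single $\eqused$-rewrite in context---is indeed where all the work lies; calling it ``routine'' is fair but optimistic, and a full proof would spell out the commutation diagrams for rules $\scriptstyle(7)$ and $\scriptstyle(8)$ carefully. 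Finally, your extension of the bijection $\tilde v\to\tilde w$ to an injective $\sigma$ fixing $\fn(Q)\setminus\tilde v$ is correct: the required disjointness of ranges follows from $\fn(P)\setminus\tilde w=\fn(Q)\setminus\tilde v$ (preservation of free names under $\eqused$) together with $\tilde w\cap(\fn(P)\setminus\tilde w)=\emptyset$.
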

\noindent
Below, $\equiv_{(8),(9)}$ denotes convertibility by applications of rules
$\scriptstyle(8)$ and $\scriptstyle(9)$ only, and similarly for other rules.

\renewcommand{\rn}{r}
\renewcommand{\un}{u}

\begin{lemma}\label{red3}
If $\trans{S}\eqused (\tilde w)(\C| \bar x \un | x(\rn) . R )$ with $S\mathbin\in\T_{\pims}$,
$\C$ plain and $\tilde w \subseteq \fn(\C| \bar x \un | x(\rn) . R )$,
then there are $\D,R_1,\linebreak[4]R_2\mathbin\in\T_{\pims}$, $\F\mathbin\in\T_\pima$,
$y,z,v_1,v_2\mathbin\in\mathcal{N}$ and {$\tilde s, \tilde t \mathbin\subseteq {\mathcal{N}}$} 
such that
$S \eqused (\tilde s)(\D \mid \bar x y . R_1 \mid x(z) . R_2)$,
$v_1\mathbin{\neq}y\mathbin{\neq}u$,
$\tilde w = \tilde s \uplus \tilde t \uplus \{u\}$,
$\C \eqused \F|\un(v_1).(\bar{v_1}y | \trans{R_1})$,
$u,v_1\mathbin{\notin}\fn(\trans{R_1})$,
$\trans{\D}\eqused(\tilde t)\F$,
$R \eqused (v_2)(\bar{\rn}v_2 | v_2(z).\trans{R_2})$,
$r\mathbin{\neq} v_2$ and
$r,v_2\mathbin{\notin}\fn(\trans{R_2}){\setminus}\{z\}$.
\end{lemma}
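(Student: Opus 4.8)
The plan is to prove Lemma~\ref{red3} by structural induction on $S$, exploiting the rigidity of the encoding $\trans{\cdot}$: the target term $(\tilde w)(\C| \bar x \un | x(\rn) . R )$ contains a free output $\bar x\un$ in parallel at top level (under binders $\tilde w$), and one inspects which source constructs can produce such a top-level free output. Since $\trans{\bm 0}=\bm 0$ has none, and $\trans{!P}=\,!\trans P$ and $\trans{(\nu x)P}=(\nu x)\trans P$ merely wrap or replicate, the only clauses that introduce a top-level free (rather than bound) output are the prefix clauses, and of those $\trans{x(z).P}$ produces a top-level \emph{input}, not an output. So the witness $\bar x\un$ must come either from the $\bar xu$ of some $\trans{\bar x y.R_1}=(u)(\bar xu|u(v).(\bar v y|\trans{R_1}))$ whose binding $(u)$ has been floated out into $\tilde w$ (this is why $u\in\tilde w$, $\tilde w=\tilde s\uplus\tilde t\uplus\{u\}$, and why $u$ is one of the witnessed names), together with the matching input component $x(u).\dots$ coming from some $\trans{x(z).R_2}=x(u).(v)(\bar uv|v(z).\trans{R_2})$ in the same parallel composition.

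Concretely, I would first use Observation~\ref{sigma} to turn the hypothesis $\trans{S}\eqused(\tilde w)(\C|\bar x\un|x(\rn).R)$ into a decomposition of $\trans{S}$ itself into a binder-prefix followed by a plain parallel composition, one of whose components is (up to $\eqused$) exactly $\bar x\un$ and another is $x(\rn).R$, all modulo an injective renaming $\sigma$. Then I push $\trans{\cdot}$ through the syntax of $S$: writing $S\eqused(\tilde s')(\text{plain})$ in normal form, the encoding $\trans{S}$ is $(\tilde s')$ applied to the parallel composition of the encodings of the plain components, each of which has the fixed shape dictated by the six clauses. Matching the free output component $\bar x\un$ forces one plain component of $S$ to be an output prefix $\bar x y.R_1$ with its $\trans{\cdot}$-binder $u$ extruded; matching the input component $x(\rn).R$ forces another plain component to be an input prefix $x(z).R_2$; the remaining components assemble into $\D$, giving $\trans\D\eqused(\tilde t)\F$ where $\F$ is the corresponding fragment of $\C$ and $\tilde t$ collects the binders of $\D$ together with any $(\nu)$-binders floated out from inside $\D$. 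Reading off the shapes of $\trans{\bar x y.R_1}$ and $\trans{x(z).R_2}$ then yields the claimed identities $\C\eqused\F|\un(v_1).(\bar{v_1}y|\trans{R_1})$ and $R\eqused(v_2)(\bar\rn v_2|v_2(z).\trans{R_2})$, and the side conditions ($v_1\neq y\neq u$, $r\neq v_2$, and the freshness conditions on $u,v_1$ and $r,v_2$) are exactly the ``always choosing $u,v\notin\fn(P)\cup\{x,y\}$, $u\neq v$'' proviso in the definition of Boudol's encoding, possibly pushed through $\sigma$ and through the $\eqused$-conversions via Lemma~\ref{sbst}. The induction step for $|$, $(\nu z)$ and $!$ is routine: a top-level free output of $\trans{P|Q}=\trans P|\trans Q$ sits in $\trans P$ or $\trans Q$; for $\trans{!P}=\,!\trans P$ one first uses $\Rrightarrow_{!}$/$\eqused$ to unfold a copy (here one may need rule $\scriptstyle(4)$, reflected back through $\trans{\cdot}$ by a Lemma~\ref{redBF}-style argument); for $(\nu z)$ one absorbs $z$ into $\tilde s$ or observes it occurs in $\F$.

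The main obstacle I anticipate is the bookkeeping of binders and freshness across three layers of equivalence at once: the $\eqused$ in the hypothesis (which can $\alpha$-rename and re-associate the $\tilde w$), the $\eqused$ that will appear in the conclusion on the $S$-side, and the implicit $\alpha$-renaming inside $\trans{\cdot}$ (Boudol's encoding only fixes $u,v$ up to the freshness constraint, so $\trans{\cdot}$ is only well-defined up to $\equred$). Keeping the partition $\tilde w=\tilde s\uplus\tilde t\uplus\{u\}$ disjoint --- in particular arguing that the extruded copy-name $u$ is genuinely distinct from every name in $\tilde s$ and $\tilde t$ and does not occur free in $\trans{R_1}$ --- relies on the Definition~\ref{df:encoding}-style discipline that encoding-introduced names are always chosen fresh, and on Lemma~\ref{free vars} ($\fn(\trans{S})=\fn(S)$) to control where names can appear; this is where the argument is fiddly rather than deep. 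A secondary subtlety is that the free output $\bar x\un$ must be matched against the \emph{first} output of $\trans{\bar x y.R_1}$ and not, say, against $\bar{v_1}y$ or $\bar\rn v_2$ --- but those carry bound subject names ($v_1$, $v_2$ are $\trans{\cdot}$-fresh), so after the $\eqused$-normalisation their subjects cannot equal the free name $x$, which rules them out. Once these invariants are nailed down, reading off the seven displayed conclusions is mechanical.
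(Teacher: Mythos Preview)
Your overall strategy---normalise $S$ to $(\tilde p)(P_1\mid\cdots\mid P_n)$ with each $P_i$ a prefix, $\nil$, $\surd$ or $!R$, encode, then invoke Observation~\ref{sigma} to match components against $\C\mid\bar xu\mid x(r).R$---is the paper's. But there is a real gap in the component-matching step.

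You correctly observe that the top-level output $\bar x u$ must be the $\bar x u'$ of some $\trans{\bar x y.R_1}=(u')(\bar x u'\mid u'(v_1).(\bar{v_1}y\mid\trans{R_1}))$ with $(u')$ extruded and $u'$ renamed to $u$: output-prefix encodings are the \emph{only} source of top-level outputs in a normalised $\trans{S}$. Your ``secondary subtlety'' about $\bar{v_1}y$ and $\bar r v_2$ is a non-issue: those outputs are guarded by an input prefix and cannot surface under $\eqused$ at all.

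The step you treat as automatic---``matching the input component $x(r).R$ forces another plain component to be an input prefix $x(z).R_2$''---is where the real work lies. After extrusion, every output encoding $\trans{\bar t y'.R'}$ contributes \emph{two} top-level components, the second being an input $u'(v').(\bar{v'}y'\mid\trans{R'})$ on the extruded fresh name $u'$. Nothing you say excludes $x(r).R$ from matching such a component with $u'$ $\alpha$-renamed to $x$. The paper devotes a paragraph to exactly this: if $x$ were one of the extruded fresh binders (i.e.\ $x\in\sigma(\tilde q)$), then the only top-level occurrences of $x$ in $P\sigma$ would be as the \emph{object} of one output $\bar t x$ and as the \emph{subject} of one input---never as the subject of an output---so the required component $\bar x u$ could not coexist. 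It is this interplay between $\bar x u$ and $x(r).R$ that forces the latter to come from a genuine input-prefix encoding; this is the crux of the lemma, and your sketch omits it.

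Two smaller points. The relation $\eqused$ excludes rule~$\scriptstyle(4)$, so you cannot ``unfold a copy'' of $!\trans P$; replicated components simply sit inside $\C$ (hence $\F$, hence $\D$) untouched, and no induction step for $!$ is needed. And the framing as structural induction on $S$ is misleading: the output and the input you are matching may live in different parallel factors of $S$, so recursing into a single subterm does not suffice; the normalise-then-match approach you also sketch in your second paragraph is the right one and is what the paper does.
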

\begin{proof}
By applying rules $\scriptstyle(8)$, $\scriptstyle(9)$, $\scriptstyle(2)$ and
$\scriptstyle\stackrel{\leftarrow}{(7)}$ only,
$S$ can be brought into the form $S':=(\tilde p)(P_1 | \dots | P_n)$
for some $n\mathbin>0$, where each
$P_i$ is of the form $\bar sy.R$ or $s(z).R$ or $\surd$ or $\nil$ or $!R$.
By means of \plat{$\scriptstyle\stackrel{\leftarrow}{(3)}$}, $\scriptstyle{(7)}$,
$\scriptstyle{(5)}$ and $\scriptstyle{(3)}$ one can moreover assure
that $\tilde p \subseteq \fn(P_1 | \dots | P_n)$.
By the proof of Lemma~\ref{redBF} $\trans{S}\eqused \trans{S'}$.
Furthermore, $\trans{S'}=(\tilde p)(\trans{P_1}|\dots|\trans{P_n})$.

By applying rules $\scriptstyle(8)$, $\scriptstyle(9)$, $\scriptstyle(2)$ and
$\scriptstyle\stackrel{\leftarrow}{(7)}$ only, the term $\trans{P_1}|\dots|\trans{P_n}$
can be brought into the form $(\tilde q)P$ with $P$ plain; moreover,
the set $\tilde q$ can be chosen disjoint from $\tilde p$.

Each $q\in\tilde q$ is a renaming of the name $u$ in a term
$\trans{P_i} = \trans{\bar{x}y.Q}  = (u)(\bar{x}u | u(v).(\bar{v}y | \trans{Q}))$,
so that $q \in\fn(P)$.

So $(\tilde w)(\C| \bar x \un | x(\rn) . R) \eqused (\tilde p)(\tilde q)P$.
Let $\sigma$ be the renaming that exists by
Observation~\ref{sigma}, so that
$\sigma(\tilde p) \uplus \sigma(\tilde q) = \tilde w$.
Then $\C| \bar x \un | x(\rn) . R \eqused P\sigma$.
So $\bar x \un$ and $x(\rn) . R$ (up to $\eqused$) must be parallel components of $P\sigma$.

Let $\sigma'$ be the restriction of $\sigma$ to $\tilde p$ and take $\tilde s := \sigma'(\tilde p)$.
Let $S''\mathbin{:=}(\tilde s)(P_1\sigma' | \dots | P_n\sigma')$.
Then $S' \eqused S''$
and $\trans{S'}\eqused\trans{S''}=(\tilde s)(\trans{P_1\sigma'}|\dots|\trans{P_n\sigma'})=(\tilde s)(\trans{P_1}\sigma'|\dots|\trans{P_n}\sigma')$.
Since $\trans{P_1}|\dots|\trans{P_n}$ can be converted
into $(\tilde q)(P)$ by applications of rules
$\scriptstyle(8)$, $\scriptstyle(9)$, $\scriptstyle(2)$ and
\plat{$\scriptstyle\stackrel{\leftarrow}{(7)}$},
$\trans{P_1}\sigma'|\dots|\trans{P_n}\sigma'$ can be converted
into $(\tilde q)(P\sigma')$ and even into $(\sigma(\tilde q))(P\sigma)$
by applications of these rules. One can apply
$\scriptstyle(8),(9)$ first, so that each $\trans{P_i}\sigma'$ is
converted into some term $Q_i$ by applications of $\scriptstyle(8),(9)$,
and $Q_1|\dots|Q_n$ is converted into $(\sigma(\tilde q))(P\sigma)$ by
applications of $\scriptstyle(2)$ and
\plat{$\scriptstyle\stackrel{\leftarrow}{(7)}$} only.

One of the $P_i\sigma'$ must be of the form $\bar x y.R_1$,
so that $\trans{P_i}\sigma' = \trans{P_i\sigma'} = \trans{\bar x y.R_1}=(u')(\bar{x}u' | u'(v_1).(\bar{v}_1y | \trans{R_1}))$
with 
$u',v_1\notin\fn(R_1)\cup\{x,y\}$,
while $u'$ is renamed into $u$ in $Q_i = (u)(\bar{x}u | u(v_1).(\bar{v}_1y | \trans{R_1}))$.
For this is the only way $\bar{x}u$ can end up as a parallel component of $P\sigma$.
It follows that $u,v_1 \notin \fn(R_1)$ and
$v_1\mathbin{\neq} y\mathbin{\neq} u\in\sigma(\tilde q)$.
Let $\tilde t := \sigma(\tilde q) \setminus u$.
One obtains $\tilde w = \tilde s \uplus \tilde t \uplus \{u\}$.

Searching for an explanation of the parallel component $x(\rn) . R $ (up to $\eqused$)
of $P\sigma$,
the existence of the component $\bar x u$ of $P\sigma$
excludes the possibility that
one of the $P_i\sigma'$ is of the form $\bar t' y'\!.R_2$
so that $\trans{P_i}\sigma' {=} (u')(\bar{t}'u' | u'(r'\hspace{-.5pt}).(\bar{r'}y' | \trans{R_2}))$, while
$u'$ is renamed into $x$ and $r'$ into $r$ in the expression
$Q_i = (x)(\bar{t}'x | x(r).(\bar{r}y' | \trans{R_2}))$.

Hence one of the $P_i\sigma'$ is of the form $x(z).R_2$, so that
$\trans{P_i}\sigma' = \trans{P_i\sigma'} = \trans{x(z).R_2}=x(r').(v')(\bar{r'}v' | v'(z).\trans{R_2})$
with $r'\neq v'$ and $r',v'\notin\fn(R_2)\setminus\{z\}$, while 
$r'$, $v'$ and $z$ are renamed into $r$, $v_2$ and $z'$ in
$Q_i = x(r).(v_2)(\bar{r}v_2 | v_2(z').R_2')$, where
$(z')R_2' \mathbin{\equiv_{(8),(9)}} (z)\trans{R_2}$.
Thus $r,v_2\notin\fn(R_2){\setminus}\{z\}$ and\linebreak $r{\neq} v_2$.
Further, $x(z).R \mathbin{\eqused} Q_i$, so 
$R\mathbin{\eqused} (v_2)(\bar{r}v_2 | v_2(z).\trans{R_2})$.

Let $\D$ collect all parallel components $P_i\sigma'$ other than the
above discussed components $\bar x y.R_1$ and $x(z).R_2$. Then
$S\eqused S''\eqused (\tilde s)(\D ~|~ \bar x y.R_1 ~|~ x(z).R_2)$.

One has $(\tilde w)(\C| \bar x \un | x(\rn) . R ) \eqused \trans{S} \eqused \trans{S''} \eqused\\
\trans{(\tilde s)(\D ~|~ \bar x y.R_1 ~|~  x(z).R_2)} =\\
(\tilde s)(\trans{\D} ~|~ (u')(\bar{x}u' | u'(v_1).(\bar{v_1}y | \trans{R_1})) ~|~ x(r')(v')\cdots ) \equiv_{(8),(9)}
(\tilde s)(\E ~|~ (u)(\bar{x}u | u(v_1).(\bar{v_1}y | \trans{R_1})) ~|~ x(\rn) . R ) \equiv_{(1),(2),(7)} \\
(\tilde s)(u)(\E ~|~ u(v_1).(\bar{v_1}y | \trans{R_1}) ~|~ \bar{x}u ~|~ x(\rn) . R ) \equiv_{(2),(7)} \\
(\tilde s)(u)(\tilde t)(\F ~|~ u(v_1).(\bar{v_1}y | \trans{R_1}) ~|~ \bar{x}u ~|~ x(\rn) . R )$.\\
Here $\E$ is the parallel composition of all components $Q_i$ obtained by renaming of the parallel
components  $P_i\sigma'$ of $\trans{\D}$, and $(\tilde t)\F$ with $\F$ plain is obtained from $\E$ by rules
$\scriptstyle(2),(7)$.
So $\C ~|~ \bar x \un ~|~ x(\rn) . R \eqused \F ~|~ u(v_1).(\bar{v_1}y | \trans{R_1}) ~|~ \bar{x}u ~|~ x(\rn) . R$
by Observation~\ref{sigma}.
It follows that $\C \eqused \F ~|~ u(v_1).(\bar{v_1}y | \trans{R_1})$.
\end{proof}

\noindent
A straightforward case distinction shows that the set of names occurring free in a term is invariant
under structural congruence:
\begin{observation}\label{obs:subst}
If $P\equred Q$ then $\fn(P)=\fn(Q)$.
\end{observation}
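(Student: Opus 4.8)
The plan is to exploit the minimality of $\equred$. Let $\mathcal{R} := \{(P,Q)\in\T_{\pims}\times\T_{\pims} \mid \fn(P)=\fn(Q)\}$; I would show that $\mathcal{R}$ is a congruence relation satisfying each of the defining axioms $\scriptstyle(1)-(9)$ of Definition~\ref{def:structural congruence}, so that $\equred\,\subseteq\mathcal{R}$, which is precisely the claim.

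That $\mathcal{R}$ is an equivalence relation is immediate. It is a congruence because, for each process constructor, the free-name set of a compound term is a fixed function of the free-name sets of its immediate arguments and of the names mentioned explicitly in the constructor: $\fn(\bar xy.P)=\{x,y\}\cup\fn(P)$, $\fn(x(z).P)=\{x\}\cup(\fn(P)\setminus\{z\})$, $\fn(P|Q)=\fn(P)\cup\fn(Q)$, $\fn((\nu z)P)=\fn(P)\setminus\{z\}$ and $\fn(!P)=\fn(P)$.

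It then remains to check the nine axioms. Rules $\scriptstyle(1)-(6)$ are immediate from those equalities. For rule $\scriptstyle(7)$ the side condition $w\notin\n(P)$ gives $w\notin\fn(P)$, whence $\fn((\nu w)(P|Q))=(\fn(P)\cup\fn(Q))\setminus\{w\}=\fn(P)\cup(\fn(Q)\setminus\{w\})=\fn(P|(\nu w)Q)$. For rules $\scriptstyle(8)$ and $\scriptstyle(9)$, which involve the renaming $\subs{w}{z}$, I would first record by an easy induction on $P$ that for $w\notin\n(P)$ one has $\fn(P\subs{w}{z})=(\fn(P)\setminus\{z\})\cup\{w\}$ if $z\in\fn(P)$ and $\fn(P\subs{w}{z})=\fn(P)$ otherwise; using $w\notin\fn(P)$ in the second subcase this yields $\fn(P\subs{w}{z})\setminus\{w\}=\fn(P)\setminus\{z\}$ in both subcases, so that $\fn((\nu w)P\subs{w}{z})=\fn(P)\setminus\{z\}=\fn((\nu z)P)$ and $\fn(x(w).P\subs{w}{z})=\{x\}\cup(\fn(P)\setminus\{z\})=\fn(x(z).P)$.

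The only step carrying any content is the behaviour of $\fn$ under $\subs{w}{z}$ needed for rules $\scriptstyle(8)$ and $\scriptstyle(9)$; the rest is pure bookkeeping, so I anticipate no genuine obstacle.
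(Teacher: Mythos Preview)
Your argument is correct and is precisely the ``straightforward case distinction'' the paper alludes to; the paper offers no more than that one-line remark, so your write-up simply spells out what the author left implicit. The only substantive step---the effect of $\subs{w}{z}$ on $\fn$ under the side condition $w\notin\n(P)$---you handle correctly, and the congruence-closure framing via $\mathcal{R}$ is a clean way to organise the check.
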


\noindent
The above results can be combined to establish the special case of
operational soundness where the sequence of reductions
$\trans{S}\Longmapsto \pT$ consists of one reduction step only.

\begin{lemma}\label{lem6}
Let $S \in \T_{\pims}$ and $\pT \in \T_{\pima}$.
If $\trans{S} \longmapsto \pT$ then there is a $S'$ with $S \longmapsto S'$
and $\pT\Longmapsto \trans{S'}$.
\end{lemma}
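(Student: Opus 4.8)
The plan is to start from the reduction $\trans{S}\longmapsto\pT$ and apply Lemma~\ref{red2}: there are $\tilde w\subseteq\N$, names $x,\un,\rn$ and terms $R,\C\in\T_\pima$ with $\C$ plain, such that
$\trans{S}\Rrightarrow_{!}\eqused (\tilde w)((\bar x\un\mid x(\rn).R)\mid\C)\longmapsto(\tilde w)((\nil\mid R\subs{\un}{\rn})\mid\C)\eqused\pT$
and $\tilde w\subseteq\fn((\bar x\un\mid x(\rn).R)\mid\C)$. First I would use Lemma~\ref{redBB} to pull the $\Rrightarrow_{!}$-part back across the encoding: there is $S_1\in\T_\pims$ with $S\Rrightarrow_{!}S_1$ and $\trans{S_1}\eqused(\tilde w)((\bar x\un\mid x(\rn).R)\mid\C)$. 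Since $S\Rrightarrow_{!}S_1$ means $S$ and $S_1$ are related by applications of rule $\scriptstyle(4)$, we have $S\equred S_1$, so it suffices to find $S'$ with $S_1\longmapsto S'$ and $\pT\Longmapsto\trans{S'}$; the reduction $S_1\longmapsto S'$ together with $S\equred S_1$ gives $S\longmapsto S'$ by the last reduction rule.

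Next I would feed $\trans{S_1}\eqused(\tilde w)(\C\mid\bar x\un\mid x(\rn).R)$ into Lemma~\ref{red3}. That lemma hands back $\D,R_1,R_2\in\T_\pims$, $\F\in\T_\pima$, names $y,z,v_1,v_2$ and $\tilde s,\tilde t\subseteq\N$ with
$S_1\eqused(\tilde s)(\D\mid\bar xy.R_1\mid x(z).R_2)$,
together with the decompositions $\C\eqused\F\mid\un(v_1).(\bar{v_1}y\mid\trans{R_1})$, $\trans{\D}\eqused(\tilde t)\F$, $R\eqused(v_2)(\bar\rn v_2\mid v_2(z).\trans{R_2})$, the disequalities $v_1\neq y\neq\un$, $\rn\neq v_2$, the freshness facts $\un,v_1\notin\fn(\trans{R_1})$ and $\rn,v_2\notin\fn(\trans{R_2})\setminus\{z\}$, and $\tilde w=\tilde s\uplus\tilde t\uplus\{\un\}$. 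The natural candidate is then $S':=(\tilde s)(\D\mid R_1\mid R_2\subs{y}{z})$, obtained from $S_1$ by the reduction $\bar xy.R_1\mid x(z).R_2\longmapsto R_1\mid R_2\subs{y}{z}$ inside the context $(\tilde s)(\D\mid\_)$; this is a legal source reduction (choosing the bound names in $\tilde s$ and in $R_2$ so that $y\notin\bn(R_2)$, using $\alpha$-conversion via rules $\scriptstyle(8),(9)$), so $S_1\longmapsto S'$ and hence $S\longmapsto S'$.

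It remains to show $\pT\Longmapsto\trans{S'}$. Here I would compute $\trans{S'}=\trans{(\tilde s)(\D\mid R_1\mid R_2\subs{y}{z})}\eqused(\tilde s)(\trans{\D}\mid\trans{R_1}\mid\trans{R_2\subs{y}{z}})\eqused(\tilde s)(\tilde t)(\F\mid\trans{R_1}\mid\trans{R_2}\subs{y}{z})$, using compositionality and Lemma~\ref{sbst}. On the other side, I rewrite $\pT\eqused(\tilde w)((\nil\mid R\subs{\un}{\rn})\mid\C)$ by substituting the decompositions of $\C$ and $R$ and using $\tilde w=\tilde s\uplus\tilde t\uplus\{\un\}$: since $\un\notin\fn(\trans{R_1})$ the output $\bar x\un$ was the only occurrence binding $\un$, and after the source reduction the residual $\un(v_1).(\bar{v_1}y\mid\trans{R_1})$ sits under the now-vacuous binder $(\un)$ with nothing to communicate with — except the copy of $\bar\un v_2$ coming from $R\subs{\un}{\rn}=(v_2)(\bar\un v_2\mid v_2(z).\trans{R_2})$ (note $\rn\subs{\un}{\rn}=\un$). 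This is exactly the situation of the three administrative reduction steps in the proof of Lemma~\ref{operational completeness}: $\pT$ first performs $\bar\un v_2\mid\un(v_1).(\bar{v_1}y\mid\trans{R_1})\longmapsto\bar{v_2}y\mid\trans{R_1}$ (renaming $v_1$ to $v_2$, legal since $v_1\notin\fn(\trans{R_1})$), then $\bar{v_2}y\mid v_2(z).\trans{R_2}\longmapsto\trans{R_2}\subs{y}{z}$ (legal since $v_2\notin\fn(\trans{R_2})\setminus\{z\}$ guarantees $y\notin\bn$ up to $\alpha$-conversion), and then the empty binders $(\un),(v_2)$ and the $\nil$-component are discarded by structural congruence, yielding $(\tilde s)(\tilde t)(\F\mid\trans{R_1}\mid\trans{R_2}\subs{y}{z})\eqused\trans{S'}$. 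So $\pT\longmapsto\longmapsto\eqused\trans{S'}$, a fortiori $\pT\Longmapsto\trans{S'}$.

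The main obstacle is the bookkeeping in the last paragraph: one must be careful that the names $v_1,v_2,\un$ and the restricted sets $\tilde s,\tilde t$ can all be chosen (via $\alpha$-conversion, rules $\scriptstyle(8),(9)$) so that no unintended capture occurs when the decompositions of $\C$ and $R$ are plugged back in and the two clean-up reductions are performed, and that the freshness conditions delivered by Lemma~\ref{red3} are exactly the ones needed to license those two reductions and the subsequent $\eqused$-simplification. Everything else is a routine combination of the earlier lemmas.
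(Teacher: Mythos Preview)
Your plan is correct and matches the paper's proof: the same chain Lemma~\ref{red2} $\to$ Lemma~\ref{redBB} $\to$ Lemma~\ref{red3}, the same choice $S':=(\tilde s)(\D\mid R_1\mid R_2\subs{y}{z})$, and the same two clean-up reductions from $\pT$ to $\trans{S'}$. For the bookkeeping you flag, the paper does not rely on $\alpha$-converting $\tilde t$ (which is handed to you by Lemma~\ref{red3}, not chosen freely) but instead observes that $\tilde t\uplus\{u\}\subseteq\tilde w$ is disjoint from both $\fn(\trans{S_0})=\fn(S_0)$ and $\tilde s$, which already yields $t,u\notin\fn(\D)\cup\{x,y\}\cup\fn(R_1)\cup(\fn(R_2){\setminus}\{z\})$; it also replaces $v_1,v_2$ by a single fresh $v$ to streamline the two reductions.
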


\begin{proof}
Suppose $\trans{S} \longmapsto \pT$. Then, by Lemma~\ref{red2}, there are
$\tilde w \mathbin\subseteq\N$, $x,\un,\rn\mathbin\in\mathcal{N}$
and $\pT_0,R,\C\mathbin\in\T_{\pima}$ with $\C$ plain, such that
$\trans{S} \mathbin{\Rrightarrow_{!}} \pT_0 \mathbin{\eqused}
(\tilde w)(\C| \bar x \un | x(\rn) . R ) \mathbin{\longmapsto}
(\tilde w)(\C| R\subs{\un}{\rn} ) \mathbin{\equred} \pT$
and $\tilde w \subseteq \fn(\C| \bar x \un | x(\rn) . R)$.
By Lemma~\ref{redBB}, there is an $S_0 \mathbin\in \T_{\pims}$ with $S \Rrightarrow_{!} S_0$ and $\trans{S_0}=\pT_0$.
So, by Lemma~\ref{red3}, there are
$\D,R_1,\linebreak[1]R_2\mathbin\in\T_{\pims}$, $\F\mathbin\in\T_\pima$,
$y,z,v_1,v_2\mathbin\in\mathcal{N}$ and {$\tilde s, \tilde t \mathbin\subseteq {\mathcal{N}}$} 
such that
$S \eqused (\tilde s)(\D \mid \bar x y . R_1 \mid x(z) . R_2)$,
$v_1\mathbin{\neq}y\mathbin{\neq}u$,
$\tilde w = \tilde s \uplus \tilde t \uplus \{u\}$,
$\C \eqused \F|\un(v_1).(\bar{v_1}y | \trans{R_1})$,
$u,v_1\mathbin{\notin}\fn(\trans{R_1})$,
$\trans{\D}\eqused(\tilde t)\F$,
$R \eqused (v_2)(\bar{\rn}v_2 | v_2(z).\trans{R_2})$,
$r\mathbin{\neq} v_2$ and
$r,v_2\mathbin{\notin}\fn(\trans{R_2}){\setminus}\{z\}$.

As $(\fn(\D) \cup \{x,y\} \cup \fn(R_1)\cup(\fn(R_2)\setminus\{z\}))\setminus \tilde s = \fn(S_0)$,
using Observation~\ref{obs:subst},
and
$\tilde w \cap \fn(S_0) = \tilde w \cap \fn(\pT_0) \mathbin=\emptyset$, using Lemma~\ref{sbst},
$t,u\notin\fn(\D) \cup \{x,y\} \cup \fn(R_1)\cup(\fn(R_2)\setminus\{z\})$
for all $t\mathbin\in \tilde t$.
Let $v\mathbin\in\N$ satisfy $u,r,y\mathbin{\neq} v\notin\fn(R_1)\cup(\fn(R_2){\setminus}\{z\})$.

Take $S':= (\tilde s)(\D ~|~ R_1 ~|~ R_2 \subs{y}{z})$.
Then $S\longmapsto S'$ and
$\pT \mathbin{\equred} (\tilde w)(\C~|~R\subs{u}{r})\\
\equred (\tilde w)\big(\F | \un(v_1).(\bar{v}_1y | \trans{R_1}) ~|~
(v_2)(\bar{\rn}v_2 | v_2(z).\trans{R_2})\subs{u}{r}\big) \\
\equred (\tilde w)\big(\F | \un(v).(\bar{v}y | \trans{R_1}) ~|~
(v)(\bar{\rn}v | v(z).\trans{R_2})\subs{u}{r}\big) \\
\mbox{}\hfill (\mbox{\it as $y\mathbin{\neq} v_1$, $v_1\mathbin{\notin}\fn(\trans{R_1})$,
                     $r\mathbin{\neq} v_2$ and $v_2\mathbin{\notin}\fn(\trans{R_2}){\setminus} \{z\}$})\\
\equred (\tilde w)\big(\F | \un(v).(\bar{v}y | \trans{R_1}) ~|~
(v)(\bar{u}v | v(z).\trans{R_2})\big) \\
\mbox{}\hfill (\mbox{\it since $r\neq v \neq u$ and $r\notin\fn(\trans{R_2}){\setminus}\{z\}$})\\
\equred (\tilde s)(u)(\tilde t)\big(\F  ~|~ \un(v).(\bar{v}y | \trans{R_1}) ~|~
(v)(\bar{u}v | v(z).\trans{R_2})\big) \\
\equred (\tilde s)(u)\big((\tilde t)\F  ~|~ \un(v).(\bar{v}y | \trans{R_1}) ~|~
(v)(\bar{u}v | v(z).\trans{R_2})\big) \\
\mbox{}\hfill (\mbox{\it since  $t\mathbin{\notin}\{u,y\} \cup \fn(\trans{R_1})\cup(\fn(\trans{R_2})\setminus \{z\})$ for $t\mathbin\in\tilde t$})\\
\equred (\tilde s)\left(\trans{\D} ~|~ (u){\color{red}\big(}\un(v).(\bar{v}y | \trans{R_1}) ~|~
(v)(\bar{u}v | v(z).\trans{R_2}){\color{red}\big)}\right) \\
\mbox{}\hfill (\mbox{\it since $u\notin\fn(\trans{\D})$})\\
\longmapsto (\tilde s)\big(\trans{\D} ~|~ (v).{\color{red}\big(}(\bar{v}y | \trans{R_1}) ~|~
v(z).\trans{R_2}{\color{red}\big)}\big) \\
\mbox{}\hfill (\mbox{\it since $u \neq v$ and $u\notin\{y\}\cup\fn(\trans{R_1})\cup\fn(\trans{R_2}){\setminus}\{z\}$})\\
\longmapsto (\tilde s)\big(\trans{\D} ~|~ \trans{R_1} ~|~ \trans{R_2}\subs{y}{z} \big) \\
\mbox{}\hfill (\mbox{\it since $v\notin\{y\}\cup\fn(\trans{R_1})\cup\fn(\trans{R_2}){\setminus}\{z\}$})\\
= (\tilde s)\big(\trans{\D} ~|~ \trans{R_1} ~|~ \trans{R_2\subs{y}{z}} \big)
\hfill (\mbox{\it by Lemma~\ref{sbst}})\\
= \trans{S'}$.
\end{proof}

\noindent
To obtain general operational soundness, I introduce an \emph{inert}
reduction relation with a confluence property, stated in Lemma~\ref{lem2} below:
any other reduction that can occur as an alternative to an inert
reduction is still possible after the occurrence of the inert reduction.

\newcommand{\XX}{P}
\newcommand{\YY}{Q}
\begin{definition}\label{def:cool-rel}
Let $\equiv\!\Rrightarrow$ be the smallest relation on $\T_{\pima}$ such that
\begin{enumerate}
\vspace{-2pt}
\item $(v)(\bar vy | \XX | v(z).\YY) \equiv\!\Rrightarrow \XX|(\YY\subs{y}{z})$,
\vspace{-2pt}
\item if $\XX \equiv\!\Rrightarrow \YY$ then $\XX|R \equiv\!\Rrightarrow \YY|R$,
\vspace{-2pt}
\item if $\XX \equiv\!\Rrightarrow \YY$ then $(w) \XX \equiv\!\Rrightarrow (w) \YY$,
\vspace{-2pt}
\item if $\XX \equred \XX' \equiv\!\Rrightarrow \YY' \equred \YY$ then $\XX \equiv\!\Rrightarrow \YY$,
\end{enumerate}
where
$v \not\in \fn(\XX) \cup \fn(\YY\subs{y}{z})$.
\end{definition}

\noindent
First of all observe that whenever two processes are related by $\equiv\!\Rrightarrow$, an actual
reduction takes place.

\begin{observation}\label{lem1}
If $P \equiv\!\Rrightarrow Q$ then $P \longmapsto Q$.
\end{observation}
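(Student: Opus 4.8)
The plan is a routine induction on the derivation of $P \equiv\!\Rrightarrow Q$ from the four clauses of Definition~\ref{def:cool-rel}.

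For the base case, clause~1, one has $P = (v)(\bar vy | \XX | v(z).\YY)$ and $Q = \XX|(\YY\subs{y}{z})$, with $v \notin \fn(\XX) \cup \fn(\YY\subs{y}{z})$. First I would invoke $\alpha$-conversion (structural-congruence rules $\scriptstyle(8)$ and $\scriptstyle(9)$) to rename the bound names of $\YY$ so that $y \notin \bn(\YY)$; this is harmless, since $\longmapsto$ is closed under $\equred$ and $\alpha$-conversion changes neither $\fn(\YY\subs{y}{z})$ nor the statement to be proved. Then associativity and commutativity of $|$ give $\bar vy | \XX | v(z).\YY \equred (\bar vy | v(z).\YY) | \XX$; the first reduction rule of Definition~\ref{def:reduction}, followed by the rule for parallel composition, yields $(\bar vy | v(z).\YY) | \XX \longmapsto (\nil | \YY\subs{y}{z}) | \XX$; and omitting the parallel component $\nil$ gives $(\nil | \YY\subs{y}{z}) | \XX \equred \XX | \YY\subs{y}{z}$. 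Closing under the reduction rule for restriction, and then omitting the now-vacuous binder $(v)$ (legitimate precisely because $v \notin \fn(\XX) \cup \fn(\YY\subs{y}{z})$) via the structural-congruence closure rule of Definition~\ref{def:reduction}, one obtains $P \longmapsto Q$.

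The three inductive cases are immediate. For clause~2 the induction hypothesis gives $\XX \longmapsto \YY$, and the reduction rule for parallel composition yields $\XX|R \longmapsto \YY|R$. For clause~3 it likewise gives $\XX \longmapsto \YY$, and the reduction rule for restriction yields $(w)\XX \longmapsto (w)\YY$. For clause~4 the induction hypothesis gives $\XX' \longmapsto \YY'$, and the last reduction rule of Definition~\ref{def:reduction}, which closes $\longmapsto$ under $\equred$, yields $\XX \longmapsto \YY$. I do not expect any genuine obstacle here; the only point requiring a little care is the base case, both for the side condition $y \notin \bn(\YY)$ in the first reduction rule and for the bookkeeping of structural-congruence steps needed to massage $(v)\big((\nil | \YY\subs{y}{z}) | \XX\big)$ into $\XX|(\YY\subs{y}{z})$.
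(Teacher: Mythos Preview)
Your proof is correct and is exactly the routine induction the paper has in mind; in fact the paper gives no explicit proof at all, treating the claim as an observation whose four clauses visibly match the four reduction rules of Definition~\ref{def:reduction}. The only care needed is the $\alpha$-conversion and binder bookkeeping in the base case, which you handle correctly.
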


\noindent
As its proof shows, the conclusion of Lemma~\ref{operational completeness}
can be restated as $\trans{S} \longmapsto\equiv\!\Rrightarrow\equiv\!\Rrightarrow \trans{S'}$.
Likewise, the two occurrences of $\longmapsto$ at the end of the proof of
Lemma~\ref{lem6} can be replaced by $\equiv\!\Rrightarrow$:

\begin{observation}\label{inert}
In the conclusion of Lemma~\ref{lem6}, $\pT\Longmapsto \trans{S'}$ can
be restated as $\pT \equiv\!\Rrightarrow\equiv\!\Rrightarrow \trans{S'}$.
\end{observation}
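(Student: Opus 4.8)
The plan is to re‑inspect the closing lines of the proof of Lemma~\ref{lem6}, where the conclusion $\pT\Longmapsto\trans{S'}$ is reached along a chain $\pT\equred P_1\longmapsto P_2\longmapsto P_3=\trans{S'}$ with
$P_1=(\tilde s)\big(\trans{\D}\mid(u)(u(v_1).(\bar{v_1}y\mid\trans{R_1})\mid(v_2)(\bar{u}v_2\mid v_2(z).\trans{R_2}))\big)$,
$P_2=(\tilde s)\big(\trans{\D}\mid(v_2)(\bar{v_2}y\mid\trans{R_1}\mid v_2(z).\trans{R_2})\big)$ and
$P_3=(\tilde s)\big(\trans{\D}\mid\trans{R_1}\mid\trans{R_2}\subs{y}{z}\big)$. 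Here $v_1$ is the variable bound by the input on $u$ and $v_2$ the restricted name (in the notation of the proof of Lemma~\ref{lem6} both are renamed to the name $v$ chosen there); after $\alpha$-conversion I may assume $v_1\neq v_2$ and that $v_1,v_2$ are fresh for $\trans{R_1},\trans{R_2}$ and for $u,x,y,z,r$, which only strengthens the freshness facts already recorded by Lemma~\ref{red3} and the proof of Lemma~\ref{lem6}. I would show that each of the two $\longmapsto$-steps is in fact an $\equiv\!\Rrightarrow$-step; since $\pT\equred P_1\equiv\!\Rrightarrow P_2$, clause~$(4)$ of Definition~\ref{def:cool-rel} upgrades this to $\pT\equiv\!\Rrightarrow P_2$, and together with $P_2\equiv\!\Rrightarrow P_3=\trans{S'}$ this gives $\pT\equiv\!\Rrightarrow\equiv\!\Rrightarrow\trans{S'}$.

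For the first step I would use rules $\stackrel{\leftarrow}{(7)}$, $(6)$ and $(2)$ of structural congruence (together with $\alpha$-conversion) to bring the body of $P_1$ into the shape of clause~$(1)$ of Definition~\ref{def:cool-rel}, namely $(v_2)\big((u)(\bar{u}v_2\mid v_2(z).\trans{R_2}\mid u(v_1).(\bar{v_1}y\mid\trans{R_1}))\big)$: the communication fires on the restricted channel $u$, delivering the restricted name $v_2$ into the bound variable $v_1$. Its side condition requires $u\notin\fn(v_2(z).\trans{R_2})\cup\fn(\bar{v_2}y\mid\trans{R_1})$, i.e.\ $u\neq v_2$, $u\neq y$, $u\notin\fn(\trans{R_1})$ and $u\notin\fn(\trans{R_2})\setminus\{z\}$---all available from Lemma~\ref{red3} and the initial freshness bookkeeping in the proof of Lemma~\ref{lem6}. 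Clause~$(1)$ then produces the body of $P_2$ (using $v_1\notin\fn(\trans{R_1})$ to collapse $\trans{R_1}\subs{v_2}{v_1}=\trans{R_1}$); I lift through the outer $(v_2)$ by clause~$(3)$, absorb the residual $\equred$-reshuffling by clause~$(4)$, and lift through the context $(\tilde s)(\trans{\D}\mid[\;])$ by clauses~$(2)$ and~$(3)$ (with clause~$(4)$ to commute $\trans{\D}$ past the hole), obtaining $P_1\equiv\!\Rrightarrow P_2$.

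For the second step the body of $P_2$ is already exactly in the shape of clause~$(1)$, with the communication on the restricted channel $v_2$ delivering $y$ into $z$; its side condition $v_2\notin\fn(\trans{R_1})\cup\fn(\trans{R_2}\subs{y}{z})$ holds because $v_2\neq y$, $v_2\notin\fn(\trans{R_1})$ and $v_2\notin\fn(\trans{R_2})\setminus\{z\}$ (by the $\alpha$-freshness assumption on $v_2$ and $\fn(\trans{R})=\fn(R)$ from Lemma~\ref{sbst}). Hence $(v_2)(\bar{v_2}y\mid\trans{R_1}\mid v_2(z).\trans{R_2})\equiv\!\Rrightarrow\trans{R_1}\mid\trans{R_2}\subs{y}{z}$; lifting through $(\tilde s)(\trans{\D}\mid[\;])$ by clauses~$(2)$ and~$(3)$, and rewriting $\trans{R_2}\subs{y}{z}=\trans{R_2\subs{y}{z}}$ by Lemma~\ref{sbst}, gives $P_2\equiv\!\Rrightarrow P_3=\trans{S'}$. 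Chaining the two steps as in the first paragraph yields $\pT\equiv\!\Rrightarrow\equiv\!\Rrightarrow\trans{S'}$.

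The only real work is the side-condition check for clause~$(1)$ of Definition~\ref{def:cool-rel} in each step---that the private channel on which the communication fires does not occur free in its residual---together with the discipline of keeping $v_1$ and $v_2$ apart during the $\equred$-manipulations; but these are exactly the freshness constraints already displayed in the side annotations of the proof of Lemma~\ref{lem6}, so beyond that the argument is routine bookkeeping with the four clauses of Definition~\ref{def:cool-rel}. (Observation~\ref{lem1} is not used here; it records only the converse, that $\equiv\!\Rrightarrow$ refines $\longmapsto$.)
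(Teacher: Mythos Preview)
Your proposal is correct and follows exactly the approach the paper intends: the paper's justification for this observation is only the single sentence preceding it, namely that ``the two occurrences of $\longmapsto$ at the end of the proof of Lemma~\ref{lem6} can be replaced by $\equiv\!\Rrightarrow$'', and you have simply unpacked this claim in full detail, verifying the side condition of clause~(1) of Definition~\ref{def:cool-rel} for each of the two steps and invoking clauses~(2)--(4) to lift through the surrounding context and absorb the preceding $\equred$-chain.
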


\begin{lemma}\label{lem2}
If $P \equiv\!\Rrightarrow Q$ and $P \longmapsto P'$ with $P'\not\equred Q$
then there is a $Q'$ with $Q \longmapsto Q'$ and $P' \equiv\!\Rrightarrow Q'$.
\end{lemma}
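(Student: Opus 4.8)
The plan is to prove Lemma~\ref{lem2} by induction on the derivation of $P \equiv\!\Rrightarrow Q$, following the four clauses of Definition~\ref{def:cool-rel}, and within the base case by analysing how the ``inert'' redex $(v)(\bar vy \mid \XX \mid v(z).\YY)$ and the arbitrary redex witnessing $P\longmapsto P'$ can overlap inside the syntax tree of $P$.

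First I would dispatch the inductive clauses. For clause~(2), $P = \XX\mid R \equiv\!\Rrightarrow \YY\mid R = Q$ coming from $\XX\equiv\!\Rrightarrow\YY$: the reduction $P\longmapsto P'$ either takes place entirely within $\XX$ (apply the induction hypothesis, possibly after noting $P'\not\equred Q$ forces the corresponding $\XX$-reduct to differ from $\YY$ up to $\equred$), entirely within $R$ (then $R\longmapsto R'$ and $Q\longmapsto \YY\mid R' =: Q'$ with $P' = \XX\mid R' \equiv\!\Rrightarrow \YY\mid R' = Q'$ by clause~(2) again), or is a communication with one partner in $\XX$ and the other in $R$. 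This last situation is the delicate one: I need to know that the inert reduction $\XX\equiv\!\Rrightarrow\YY$ does not consume the participating input or output prefix of $\XX$, so that the $\XX$–$R$ communication is still enabled after firing the inert step. Here I would use the shape of the inert relation: by Observation~\ref{lem1} and an auxiliary analysis (or by unfolding Definition~\ref{def:cool-rel} up to $\equred$ via clause~(4) and Lemma~\ref{red2}), $\XX\equiv\!\Rrightarrow\YY$ is, up to $\equred$, a reduction of the special form $(v)(\bar vy\mid \XX_0\mid v(z).\YY_0)\longmapsto \XX_0\mid\YY_0\subs{y}{z}$ on a \emph{private} channel $v$ (since $v\notin\fn(\XX)\cup\fn(\YY\subs yz)$), whereas the communication with $R$ must be on a channel that is free in $P$, hence distinct from $v$; therefore the two redexes are on syntactically disjoint subterms and commute. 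Clause~(3) (restriction $(w)$) is immediate from the induction hypothesis, as is clause~(4) after using Observation~\ref{obs:subst} to transport the side conditions along $\equred$ and the fact (implicit in the proof of Lemma~\ref{red2}) that reductions can be pushed through structural congruence.

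The base case, clause~(1), is the heart of the argument. Here $P\equred (v)(\bar vy\mid \XX\mid v(z).\YY)$ with $v$ private, $Q\equred \XX\mid\YY\subs yz$, and $P\longmapsto P'$ is some reduction. Using Lemma~\ref{red2} I bring $P$ into normal form $(\tilde w)((\bar a b\mid a(c).T)\mid \C)$ with $\C$ plain and $\tilde w\subseteq\fn$, and I match this against the decomposition $(v)(\bar vy\mid\XX\mid v(z).\YY)$. The reducing pair $\bar ab$ and $a(c).T$ sit as parallel components among $\{\bar vy\}$, the plain components of $\XX$, and $\{v(z).\YY\}$ (up to $\equred$; I use Observation~\ref{sigma} to align the two normal forms by an injective renaming, which is the identity on the relevant free names). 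Either the reducing pair is exactly $\bar vy$ and $v(z).\YY$ — but then $P'\equred \XX\mid\YY\subs yz\equred Q$, contradicting the hypothesis $P'\not\equred Q$ — or at least one of $\bar ab$, $a(c).T$ is a component of $\XX$. Since $a$ is free in the relevant subterm and $v$ is not (as $v\notin\fn(\XX)$ and, if the other partner is $\bar vy$ or $v(z).\YY$, the channel name would be $v$, forcing $a = v$, which is then impossible because the $\XX$-side partner does not mention $v$), both partners must in fact lie inside $\XX$: the reduction $P\longmapsto P'$ is a reduction of $\XX$, say $\XX\longmapsto \XX'$, on disjoint syntax from the inert redex. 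Hence $P'\equred (v)(\bar vy\mid \XX'\mid v(z).\YY)$ and I may fire the inert step on this term to obtain $Q':= \XX'\mid\YY\subs yz$ with $P'\equiv\!\Rrightarrow Q'$ (clause~(1) again, noting $v\notin\fn(\XX')$ since reduction does not create free names), while $Q = \XX\mid\YY\subs yz\longmapsto \XX'\mid\YY\subs yz = Q'$ by the context rules. One subtlety to handle carefully is the interaction of $\YY\subs yz$ with the reduction when the redex of $P\longmapsto P'$ touches $v(z).\YY$: but $v(z).\YY$ is input-guarded, so the only reduction involving it is the communication on $v$ with $\bar vy$, which is the excluded case; no reduction can occur strictly inside a guarded subterm.

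The main obstacle I anticipate is precisely the case analysis in clauses~(1) and~(2) showing that the arbitrary redex and the inert redex are on disjoint parts of the term — i.e.\ that an inert step, being a communication on a \emph{restricted} (private) channel $v$ with $v\notin\fn$, can neither be the same as nor overlap an arbitrary enabled reduction unless that reduction is itself the inert one (which is the excluded $P'\equred Q$ case). Making this rigorous requires carefully combining Lemma~\ref{red2}, Observation~\ref{sigma} and Observation~\ref{obs:subst} to reconcile the two normal forms and to keep track of which names are free; the rest of the proof is then a routine reassembly using the context closure rules of $\equiv\!\Rrightarrow$ and $\longmapsto$.
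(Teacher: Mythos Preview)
Your approach differs from the paper's: you induct on the derivation of $P \equiv\!\Rrightarrow Q$, whereas the paper treats the inert step as an ordinary reduction (via Observation~\ref{lem1}), normalises \emph{both} reductions $P\longmapsto P'$ and $P\longmapsto Q$ globally via Lemma~\ref{red1}, arranges the two resulting normal forms to be related by $\eqused$ (pushing rule~$\scriptstyle(4)$ applications across the reductions as in Lemma~\ref{red2}), aligns the two plain decompositions by Observation~\ref{sigma}, and then uses a short separate induction on $\equiv\!\Rrightarrow$ to establish the structural fact you also identify (the inert redex sits under a restriction $(a)$, with no $!$ in between and no other unguarded $a$-prefixes in scope); a two-case split on whether the two channel names coincide then finishes. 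Your plan is viable, but the inductive organisation buys little: the substantive cases---the cross-communication subcase of clause~(2) and the base clause~(1)---do not reduce to the induction hypothesis, and you yourself resolve them by invoking exactly the normal-form machinery (Lemma~\ref{red2}, Observation~\ref{sigma}) and the auxiliary structural property of $\equiv\!\Rrightarrow$ that the paper uses directly, so in effect you rerun the paper's global argument inside each of those cases while the remaining clauses~(3) and~(4) are routine.
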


\begin{proof}
By Lemma~\ref{red1} there are
$\tilde w \subseteq\N$, $x,y,z\in\mathcal{N}$
and $R,\C\in\T_{\pima}$ such that $\C$ plain, $\tilde w \subseteq \fn((\bar x y | x(z) . R) | \C )$ and
$P \mathbin{\equred} P_0 \mathbin{:=} (\tilde w)((\bar x y | x(z) . R) | \C ) \mathbin{\longmapsto}
(\tilde w)(( \bm{0} | R\subs{y}{z}) | \C ) \mathbin{\equred} P'\!$.
Likewise, there are
$\tilde q \subseteq\N$, $x',y',z'\in\mathcal{N}$
and $R',\C'\in\T_{\pima}$ such that $\C'$ plain, $\tilde q \subseteq \fn((\bar x' y' | x'(z') . R') | \C' )$ and
$P \mathbin{\equred} P_1 \mathbin{:=} (\tilde q)((\bar x' y' | x'(z') . R') | \C' ) \mathbin{\longmapsto}
(\tilde q)(( \bm{0} | R'\subs{y'}{z'}) | \C' ) \equred Q$.
So $(\tilde w)(( \bm{0} | R\subs{y}{z}) | \C ) \longmapsfrom P_0 \equred P_1 \longmapsto
(\tilde q)(R'\subs{y'}{z'} | \C' )$.
As in the proof of Lemma~\ref{red2}, all applications of rule \plat{$\scriptstyle\stackrel{\leftarrow}{(4)}$}
in the sequence of reductions $P_0 \equred P_1$ can be moved to the right and shifted over
the $\longmapsto$. Likewise,  all applications of rule \plat{$\scriptstyle{(4)}$}
can be moved to the left and shifted over the $\longmapsfrom$. Therefore, I may assume
that $P_0 \eqused P_1$.
Let $\sigma$ be the injective renaming that exists by Observation~\ref{sigma}. Then
$(\bar x y | x(z) . R) | \C \eqused ((\bar x' y' | x'(z') . R') | \C')\sigma$.
Let $u:=\sigma(x')$, $v:=\sigma(y')$, $r:=\sigma(z')$, $R'':=R\sigma$ and $\C'':=\C\sigma$.
Then $(\bar x y | x(z) . R) | \C \eqused (\bar u v | u(r) . R'') | \C''$.

In a reduction step of the form $P \equiv\!\Rrightarrow Q$, the reacting prefixes $\bar a b$ and
$a(c).\D$ are always found in the scope of a restriction operator $(a)$, and without a $!$ between $(a)$
and $\bar a b$ or $a(c).\D$, such that in this scope
there are no other unguarded occurrences of prefixes $\bar a d$ or $a(e).\F$.
This follows by a trivial induction on the definition of $\equiv\!\Rrightarrow$.
In particular, this property is preserved when applying structural congruence to $P$.
Consequently, the plain term $\C''$ has no parallel components of the form
$\bar u y''$ or $u(z'').R'''$.

Case 1: $x=u$. Then, by the above, $(z)R \eqused (r)R''$, $y=v$ and $\C \eqused \C''$.
Consequently, $P' \equred Q$.

Case 2: $x\neq u$. Then $\bar u v$ and $u(r) . R''$ (up to $\eqused$) must be parallel components
of $\C$, and $\bar x y$ and $x(z) . R$ (up to $\eqused$) must be parallel components
of $\C''$, so that
$P \equred P_0 = (\tilde w)((\bar x y | x(z) . R) | (\bar u v | u(r) . R'') | \D)$,
where $\C\eqused  (\bar u v | u(r) . R'') | \D$ and $\C'' \eqused (\bar x y | x(z) . R) | \D$.
This shows that the reductions $P \equiv\!\Rrightarrow Q$ and $P \longmapsto P'$ are concurrent,
so that there is a $Q'$ with $Q \longmapsto Q'$ and $P' \equiv\!\Rrightarrow Q'$.
\end{proof}

\begin{corollary}\label{lem2 transitive}
If $P \equiv\!\Rrightarrow Q$ and $P \Longmapsto P'$
then either $Q\Longmapsto P'$ or there is a $Q'$
with $Q \Longmapsto Q'$ and $P' \equiv\!\Rrightarrow Q'$.
Moreover, the sequence $Q\Longmapsto P'$ or $Q\Longmapsto Q'$
contains no more reduction steps than the sequence $P \Longmapsto Q'$.
\end{corollary}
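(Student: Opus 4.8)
The plan is to prove Corollary~\ref{lem2 transitive} by induction on the number $n$ of reduction steps in the sequence $P \Longmapsto P'$, with Lemma~\ref{lem2} supplying the induction step. I will read the ``Moreover'' clause as comparing the new sequence $Q\Longmapsto P'$ (resp.\ $Q\Longmapsto Q'$) against the given sequence $P \Longmapsto P'$.

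For the base case $n\mathbin=0$ one has $P\mathbin=P'$. Taking $Q' \mathbin{:=} Q$, so that $Q\Longmapsto Q'$ is the empty sequence, gives $P' \mathbin= P \equiv\!\Rrightarrow Q \mathbin= Q'$, and the empty sequence contains no more steps than the empty sequence $P\Longmapsto P'$. Thus the second alternative holds.

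For the induction step, with $n\mathbin\geq 1$, write $P \longmapsto P_1 \Longmapsto P'$ where $P_1 \Longmapsto P'$ consists of $n-1$ steps, and distinguish two cases. If $P_1 \equred Q$, then, since $\longmapsto$ is closed under $\equred$ by the last rule of Definition~\ref{def:reduction} (hence so is $\Longmapsto$), one gets $Q \Longmapsto P'$ in $n-1 \leq n$ steps, which is the first alternative. If $P_1 \not\equred Q$, then Lemma~\ref{lem2}, applied to $P \equiv\!\Rrightarrow Q$ and $P \longmapsto P_1$, yields a $Q_1$ with $Q \longmapsto Q_1$ and $P_1 \equiv\!\Rrightarrow Q_1$. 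Applying the induction hypothesis to $P_1 \equiv\!\Rrightarrow Q_1$ and the $(n-1)$-step sequence $P_1 \Longmapsto P'$, one obtains either $Q_1 \Longmapsto P'$ in at most $n-1$ steps --- whence $Q \longmapsto Q_1 \Longmapsto P'$ has at most $n$ steps, the first alternative --- or a $Q'$ with $Q_1 \Longmapsto Q'$ in at most $n-1$ steps and $P' \equiv\!\Rrightarrow Q'$ --- whence $Q \longmapsto Q_1 \Longmapsto Q'$ has at most $n$ steps, the second alternative.

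I do not expect a real obstacle here: the only points needing care are the trivial case $P_1 \equred Q$, which is exactly the side condition $P'\not\equred Q$ excluded in the hypothesis of Lemma~\ref{lem2} and must therefore be peeled off separately, and the elementary bookkeeping that no step count along the way exceeds $n$, which is immediate since each appeal to Lemma~\ref{lem2} adds one step and each appeal to the induction hypothesis adds at most $n-1$.
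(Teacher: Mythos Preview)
Your proof is correct and follows exactly the approach of the paper, which simply states ``By repeated application of Lemma~\ref{lem2}'' but whose intended unfolding is precisely your induction on the length of $P\Longmapsto P'$, with the case split on $P_1\equred Q$ versus $P_1\not\equred Q$ and the step-count bookkeeping you give. Your reading of the ``Moreover'' clause as comparing against the given sequence $P\Longmapsto P'$ is also the intended one (the ``$P\Longmapsto Q'$'' in the statement is a typo).
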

\begin{proof}
By repeated application of Lemma~\ref{lem2}.
\weg{
By induction on the length $n$ of the sequence $P \Longmapsto P'$.
The base case $n=0$ is trivial: take $Q':=Q$.
So let $P \longmapsto P_1 \Longmapsto P'$, where $P_1 \Longmapsto P'$  has length $n$.
By Lemma~\ref{lem2} either $P_1 \equred Q$ or
there is a $Q_1$ with $Q \longmapsto Q_1$\linebreak[4] and $P_1 \equiv\!\Rrightarrow Q_1$.
In the first case $Q \Longmapsto P'$, and the length of this sequence is $n<n{+}1$.
In the second case, by induction, either $Q_1\Longmapsto P'$ or there is a $Q'$
with $Q_1 \Longmapsto Q'$ and $P' \equiv\!\Rrightarrow Q'$.
Moreover, the sequence $Q_1\Longmapsto P'$ or $Q_1\Longmapsto Q'$ has length $\leq n$.
Thus, either $Q\Longmapsto P'$ or there is a $Q'$
with $Q \Longmapsto Q'$ and $P' \equiv\!\Rrightarrow Q'$.
Moreover, the sequence $Q\Longmapsto P'$ or $Q\Longmapsto Q'$ has length $\leq n{+}1$.
}
\end{proof}
\begin{corollary}\label{cor lem2 transitive}
If $P \equiv\!\Rrightarrow^* Q$ and $P \Longmapsto P'$
then there is a $Q'$ with $Q \Longmapsto Q'$ and $P' \equiv\!\Rrightarrow^* Q'$.
Moreover, the sequence $Q\Longmapsto Q'$
contains no more reduction steps than the sequence $P \Longmapsto Q'$.
\end{corollary}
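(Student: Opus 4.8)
The plan is to prove Corollary~\ref{cor lem2 transitive} by induction on the number of $\equiv\!\Rrightarrow$-steps in the sequence $P \equiv\!\Rrightarrow^* Q$, using Corollary~\ref{lem2 transitive} as the single-step engine. The base case, zero steps, is immediate: then $P \equred Q$ (in fact $P=Q$, or at worst $Q$ reached by the structural-congruence closure already built into $\equiv\!\Rrightarrow^*$), so take $Q':=P'$ and note $Q \Longmapsto P'$ with the same reductions as $P \Longmapsto P'$, and $P' \equiv\!\Rrightarrow^* P'$ trivially.

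For the induction step, suppose $P \equiv\!\Rrightarrow P_1 \equiv\!\Rrightarrow^* Q$ where the tail has $n$ steps, and $P \Longmapsto P'$. First I would apply Corollary~\ref{lem2 transitive} to the single step $P \equiv\!\Rrightarrow P_1$ and the sequence $P \Longmapsto P'$. This yields two cases: either $P_1 \Longmapsto P'$ directly, or there is a $P_1'$ with $P_1 \Longmapsto P_1'$ and $P' \equiv\!\Rrightarrow P_1'$; in both cases the new $\Longmapsto$-sequence out of $P_1$ has no more reduction steps than $P \Longmapsto P'$. In the first case, apply the induction hypothesis to $P_1 \equiv\!\Rrightarrow^* Q$ and $P_1 \Longmapsto P'$: this gives $Q'$ with $Q \Longmapsto Q'$, $P' \equiv\!\Rrightarrow^* Q'$, and the length bound, and we are done. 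In the second case, apply the induction hypothesis instead to $P_1 \equiv\!\Rrightarrow^* Q$ and $P_1 \Longmapsto P_1'$: this produces $Q'$ with $Q \Longmapsto Q'$ and $P_1' \equiv\!\Rrightarrow^* Q'$; prepending the step $P' \equiv\!\Rrightarrow P_1'$ gives $P' \equiv\!\Rrightarrow^* Q'$, as required. The length bound is maintained because each invocation of Corollary~\ref{lem2 transitive} or the induction hypothesis does not increase the number of reduction steps, and the bound we are threading is always against the current $\Longmapsto$-sequence, which never grows.

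The one point that needs a little care — and the closest thing to an obstacle — is getting the length bookkeeping exactly right: Corollary~\ref{lem2 transitive} bounds the produced sequence by the length of $P \Longmapsto Q'$ (its own input sequence), so in the induction step one must observe that the sequence fed to the recursive call is itself bounded by the original $P \Longmapsto P'$, and hence the final $Q \Longmapsto Q'$ is too, by transitivity of $\leq$. There is no genuine difficulty with the process calculus here; all the real work — the confluence-style diamond and the proof normalisation arguments — has already been discharged in Lemma~\ref{lem2} and Corollary~\ref{lem2 transitive}. So the proof is essentially the one-line ``By induction on the length of $P \equiv\!\Rrightarrow^* Q$, using Corollary~\ref{lem2 transitive}'' that the excerpt in fact gives, with the case analysis above as its spelled-out content.
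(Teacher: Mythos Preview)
Your proposal is correct and matches the paper's (implicit) approach: the corollary is stated without proof, as it follows from Corollary~\ref{lem2 transitive} by exactly the induction on the length of the $\equiv\!\Rrightarrow^*$-sequence that you spell out. Your case analysis and length bookkeeping are sound; the only slight wobble is the parenthetical about ``structural-congruence closure already built into $\equiv\!\Rrightarrow^*$'' in the base case---with zero steps one simply has $P=Q$, nothing more---but this does not affect the argument.
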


\noindent
By combining Corollary~\ref{cor lem2 transitive} with Observations~\ref{lem1}
and~\ref{inert} one finds that the criterion of operational soundness is met.

\begin{theorem}\label{operational soundness}
Let $S \in \T_{\pims}$ and $\pT \in \T_{\pima}$.
If $\trans{S} \Longmapsto \pT$ then there is a $S'$ with $S \Longmapsto S'$
and $\pT\Longmapsto \trans{S'}$.
\end{theorem}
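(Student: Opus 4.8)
The plan is to reduce the general statement to the one-step case already handled in Lemma~\ref{lem6}, and to use the inert-reduction relation $\equiv\!\Rrightarrow$ to clear away the auxiliary reductions that Lemma~\ref{lem6} introduces. Concretely, suppose $\trans{S}\Longmapsto\pT$, say via a chain $\trans{S}=\pT_0\longmapsto\pT_1\longmapsto\dots\longmapsto\pT_n=\pT$. I will prove by induction on $n$ the following strengthened statement: \emph{there is an $S'$ with $S\Longmapsto S'$ and $\pT\equiv\!\Rrightarrow^*\trans{S'}$.} Since $\equiv\!\Rrightarrow^*\,\subseteq\,\Longmapsto$ by Observation~\ref{lem1}, this immediately yields the theorem as stated.

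For the base case $n=0$ take $S':=S$; then $\pT=\trans{S}\equiv\!\Rrightarrow^*\trans{S}$ trivially. For the induction step, split the chain as $\trans{S}\longmapsto\pT_1\Longmapsto\pT$. Apply Lemma~\ref{lem6} to the first step: there is an $S_1$ with $S\longmapsto S_1$ and, using Observation~\ref{inert}, $\pT_1\equiv\!\Rrightarrow\equiv\!\Rrightarrow\trans{S_1}$. Now I have the configuration
\[
\trans{S_1}\;{}_*{\Lleftarrow}\!\equiv\;\pT_1\;\Longmapsto\;\pT ,
\]
i.e.\ $\pT_1$ reduces inertly (in two steps) to $\trans{S_1}$ and also reduces in $n-1$ ordinary steps to $\pT$. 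Corollary~\ref{cor lem2 transitive} is exactly the confluence tool for this shape: from $\pT_1\equiv\!\Rrightarrow^*\trans{S_1}$ and $\pT_1\Longmapsto\pT$ it gives a $\pT'$ with $\trans{S_1}\Longmapsto\pT'$ and $\pT\equiv\!\Rrightarrow^*\pT'$, and moreover the sequence $\trans{S_1}\Longmapsto\pT'$ has no more reduction steps than $\pT_1\Longmapsto\pT'$. Since $\pT_1\Longmapsto\pT$ has length $n-1$ and $\pT\equiv\!\Rrightarrow^*\pT'$ contributes the remaining steps of $\pT_1\Longmapsto\pT'$, the sequence $\trans{S_1}\Longmapsto\pT'$ has length at most $n-1$, so the induction hypothesis applies to it: there is an $S'$ with $S_1\Longmapsto S'$ and $\pT'\equiv\!\Rrightarrow^*\trans{S'}$. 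Combining, $S\longmapsto S_1\Longmapsto S'$, hence $S\Longmapsto S'$, and $\pT\equiv\!\Rrightarrow^*\pT'\equiv\!\Rrightarrow^*\trans{S'}$, as required.

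The main obstacle — already dispatched in the preparatory lemmas — is the confluence argument behind Corollary~\ref{cor lem2 transitive}: an inert reduction $\pT_1\equiv\!\Rrightarrow\trans{S_1}$ corresponds to a communication on a freshly restricted channel that has exactly one sender and one receiver in scope, so it can never be in genuine conflict with another reduction of $\pT_1$; any competing reduction either already produces (a term structurally congruent to) $\trans{S_1}$ or is concurrent with the inert one and survives it. What remains here is purely the bookkeeping of chaining Lemma~\ref{lem6} with this confluence property, and the length bound in Corollary~\ref{cor lem2 transitive} is what keeps the induction well-founded despite Lemma~\ref{lem6} replacing one source step by several target steps. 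Note also that the $\asymp_\pima$ in Gorla's soundness clause is the identity here, so the statement $\pT\Longmapsto\trans{S'}$ is precisely what Definition~\ref{df:valid}(3) demands.
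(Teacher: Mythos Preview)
Your proof is essentially the paper's own: induction on the length of the reduction sequence, peeling off the first step via Lemma~\ref{lem6} and Observation~\ref{inert}, then pushing the remaining reductions past the inert steps with Corollary~\ref{cor lem2 transitive}. Your strengthened induction hypothesis $T\equiv\!\Rrightarrow^*\trans{S'}$ (in place of the paper's $T\Longmapsto\trans{S'}$) is a harmless and slightly cleaner variant.

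There is one slip in the length bookkeeping. You invoke the length clause of Corollary~\ref{cor lem2 transitive} as bounding the length of $\trans{S_1}\Longmapsto T'$ by that of $T_1\Longmapsto T'$, and from this conclude that the former has length at most $n{-}1$. But $T_1\Longmapsto T'$ has length $(n{-}1)+k$, where $k$ is the number of inert steps in $T\equiv\!\Rrightarrow^*T'$, so the inference as written does not go through. What you actually need---and what the paper itself uses in its proof of the theorem---is the sharper bound against $T_1\Longmapsto T$, i.e.\ against $P\Longmapsto P'$ rather than $P\Longmapsto Q'$. This stronger form is what the (suppressed) proof of Corollary~\ref{lem2 transitive} in fact establishes; the ``$P\Longmapsto Q'$'' in the corollary's statement appears to be a slip. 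With that reading your induction is well-founded and the argument is complete.
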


\begin{proof}
By induction on the length $n$ of the sequence\linebreak $\trans{S} \Longmapsto \pT$.
The base case $n\mathbin=0$ is trivial: take $S':=S$.
So let $\trans{S} \longmapsto \pT_1 \Longmapsto \pT$, where $\pT_1 \Longmapsto \pT$ has length $n$.
By Lemma~\ref{lem6} with Observation~\ref{inert} $\exists S_1$ with $S \longmapsto S_1$
and $\pT_1\equiv\!\Rrightarrow^* \trans{S_1}$.
By Corollary~\ref{cor lem2 transitive} $\exists \pT'$ with
$\trans{S_1} \Longmapsto \pT'$ and $\pT \equiv\!\Rrightarrow^* \pT'$.
Furthermore, the sequence $\trans{S_1} \Longmapsto \pT'$ has length $\leq n$.
By induction, there is a $S'$ with $S_1 \mathbin{\Longmapsto} S'$ and $\pT'\Longmapsto \trans{S'}$.
Hence $S \Longmapsto S'$ and $\pT\Longmapsto \trans{S'}$, using Observation~\ref{lem1}.
\end{proof}

\subsection{Divergence reflection}

\begin{corollary}\label{divergence}
If $P \equiv\!\Rrightarrow Q$ and $P \longmapsto^\omega$
then $Q \longmapsto^\omega$.
\end{corollary}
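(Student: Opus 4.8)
The plan is to chase an infinite reduction sequence out of $P$ across the inert step $P \equiv\!\Rrightarrow Q$, using the local confluence property of Lemma~\ref{lem2}. Fix an infinite sequence $P \mathbin= P_0 \longmapsto P_1 \longmapsto P_2 \longmapsto \cdots$ witnessing $P \longmapsto^\omega$. I would build, by recursion on $i$, a (finite or infinite) reduction sequence $Q \mathbin= Q_0 \longmapsto Q_1 \longmapsto \cdots$ maintaining the invariant $P_i \equiv\!\Rrightarrow Q_i$. The base case $P_0 \equiv\!\Rrightarrow Q_0$ is the hypothesis. Given $Q_i$ with $P_i \equiv\!\Rrightarrow Q_i$: if $P_{i+1} \equred Q_i$ the construction halts; otherwise Lemma~\ref{lem2}, applied to $P_i \equiv\!\Rrightarrow Q_i$ and $P_i \longmapsto P_{i+1}$, yields a $Q_{i+1}$ with $Q_i \longmapsto Q_{i+1}$ and $P_{i+1} \equiv\!\Rrightarrow Q_{i+1}$, so the construction continues. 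The side condition $P'\not\equred Q$ of Lemma~\ref{lem2} is precisely what is assumed to fail in the halting case, so no case is lost.

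Then I would split on whether the construction halts. If it never halts, $Q \mathbin= Q_0 \longmapsto Q_1 \longmapsto \cdots$ is itself an infinite reduction sequence, hence $Q \longmapsto^\omega$. If it halts at stage $i{+}1$, so that $P_{i+1} \equred Q_i$, then the tail $P_{i+1} \longmapsto P_{i+2} \longmapsto \cdots$ shows $P_{i+1} \longmapsto^\omega$; since $\equred$ preserves divergence (immediate from the last rule of Definition~\ref{def:reduction}, which lets one replace the source of a reduction by a $\equred$-equivalent term), $Q_i \equred P_{i+1}$ gives $Q_i \longmapsto^\omega$; and prepending the finite prefix $Q \mathbin= Q_0 \longmapsto \cdots \longmapsto Q_i$ yields $Q \longmapsto^\omega$. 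This also covers the case $i \mathbin= 0$, where the prefix is empty.

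The argument is short and its only substantive ingredient is Lemma~\ref{lem2}; I do not anticipate a real obstacle, the main care being the clean handling of the two cases and the observation --- worth stating explicitly --- that structural congruence preserves divergence.
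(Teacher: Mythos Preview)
Your proposal is correct and is precisely the explicit unfolding of the paper's one-line proof ``By repeated application of Lemma~\ref{lem2}.'' You have faithfully spelled out the case analysis forced by the side condition $P'\not\equred Q$ of Lemma~\ref{lem2}, including the observation that $\equred$ preserves divergence, which the paper leaves implicit.
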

\begin{proof}
By repeated application of Lemma~\ref{lem2}.
\end{proof}

\noindent
Together with Observation~\ref{inert} this implies that the criterion of divergence reflection is met.

\begin{theorem}
Let $S \in \T_{\pims}$.
If $\trans{S} \longmapsto^\omega$ then $S \longmapsto^\omega$.
\end{theorem}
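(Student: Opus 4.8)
The plan is to show the contrapositive structurally: if $\trans{S}$ admits an infinite reduction sequence, then so does $S$. The key observation is that the infinite reduction $\trans{S} \longmapsto^\omega$ can be analysed one step at a time using Lemma~\ref{lem6} (together with Observation~\ref{inert}), and that the ``auxiliary'' inert reductions $\equiv\!\Rrightarrow$ introduced along the way can be absorbed without loss of divergence thanks to Corollary~\ref{divergence}.

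\medskip

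Concretely, suppose $\trans{S} \longmapsto^\omega$, say $\trans{S} = \pT_0 \longmapsto \pT_1 \longmapsto \pT_2 \longmapsto \cdots$. The first step $\trans{S} \longmapsto \pT_1$, by Lemma~\ref{lem6} with Observation~\ref{inert}, yields an $S_1$ with $S \longmapsto S_1$ and $\pT_1 \equiv\!\Rrightarrow^* \trans{S_1}$. Now $\pT_1 \longmapsto^\omega$ (it continues the infinite sequence), so by repeated application of Corollary~\ref{divergence} (i.e.\ Corollary~\ref{divergence} iterated along the finite chain $\pT_1 \equiv\!\Rrightarrow^* \trans{S_1}$) we get $\trans{S_1} \longmapsto^\omega$. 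Iterating, we build an infinite sequence $S = S_0 \longmapsto S_1 \longmapsto S_2 \longmapsto \cdots$, which witnesses $S \longmapsto^\omega$.

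\medskip

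One subtlety to be careful about: the argument as just sketched constructs the $S_i$ one at a time but each time only ``uses up'' one $\longmapsto$-step of $\trans{S}$, so it is not immediately obvious that infinitely many source steps are produced rather than the process stabilising. This is where the divergence-preservation half, Remark~\ref{divergence preservation}, is not needed, but care is: each application of Lemma~\ref{lem6} does consume \emph{at least one} genuine reduction step of $\trans{S_i}$ (the step $\trans{S_i} \longmapsto \pT$ that triggers it), and since $\trans{S_i} \longmapsto^\omega$ there is always such a step available; moreover Lemma~\ref{lem6} returns $S' $ with $S_i \longmapsto S'$ — a genuine source step. So the source sequence genuinely grows at every stage, giving a bona fide infinite source reduction. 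The main obstacle, then, is purely bookkeeping: one must phrase the induction so that at stage $i$ one carries the data ``$S \longmapsto^i S_i$ and $\trans{S_i} \longmapsto^\omega$'', and check that Corollary~\ref{divergence} (applied finitely many times along $\pT_{i+1}\equiv\!\Rrightarrow^*\trans{S_{i+1}}$) indeed transports divergence from $\pT_{i+1}$ to $\trans{S_{i+1}}$. Everything else is immediate from the cited lemmas.
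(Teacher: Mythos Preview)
Your proposal is correct and follows essentially the same argument as the paper: peel off one reduction step of $\trans{S}$, apply Lemma~\ref{lem6} with Observation~\ref{inert} to obtain $S_1$ with $S\longmapsto S_1$ and $\pT_1\equiv\!\Rrightarrow^*\trans{S_1}$, then use Corollary~\ref{divergence} (iterated along that finite $\equiv\!\Rrightarrow$-chain) to conclude $\trans{S_1}\longmapsto^\omega$, and repeat. Your additional remarks on bookkeeping are accurate but not strictly needed, since each invocation of Lemma~\ref{lem6} already guarantees a genuine source step $S_i\longmapsto S_{i+1}$.
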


\begin{proof}
Suppose $\trans{S} \longmapsto^\omega$. Then $\trans{S} \longmapsto \pT_1 \longmapsto^\omega$.
By Lemma~\ref{lem6} with Observation~\ref{inert} there is an $S_1$ with $S \mathbin{\longmapsto} S_1$
and $\pT_1\mathbin{\equiv\!\Rrightarrow^*} \trans{S_1}$.
By Corollary~\ref{divergence} $\trans{S_1} \longmapsto^\omega$.
In the same way there is an $S_2$ with $S_1 \mathbin{\longmapsto} S_2$
and $\trans{S_2} \longmapsto^\omega$, and so on. Thus $S \longmapsto^\omega$.
\end{proof}

\subsection{Success sensitiveness}\label{success}

\noindent
The success predicate $\downarrow$ can also be defined inductively:
\[
\sbarb{\surd}{\bar x} \qquad
\frac{\sbarb{P}{\bar x}}{\sbarb{(P|Q)}{\bar x}} \qquad
\frac{\sbarb{Q}{\bar x}}{\sbarb{(P|Q)}{\bar x}} \qquad
\frac{\sbarb{P}{\bar x}}{\sbarb{((\nu z)P)}{\bar x}} \qquad
\frac{\sbarb{P}{\bar x}}{\sbarb{(!P)}{\bar x}}
\]
Note that if $P \equred Q$ and $P{\downarrow}$ then also $Q{\downarrow}$.

\begin{lemma}\label{barbs}
Let $S \in \T_{\pims}$. Then $\trans{S}{\downarrow}$ iff $S{\downarrow}$.
\end{lemma}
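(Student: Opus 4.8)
The plan is to prove Lemma~\ref{barbs} by a straightforward structural induction on $S \in \T_{\pims}$, using the inductive characterisation of ${\downarrow}$ displayed just above the statement, together with the fact (also recorded there) that ${\downarrow}$ is preserved under $\equred$. The key observation driving the proof is that Boudol's translation $\trans{\cdot}$ never introduces a top-level unguarded $\surd$ on its own: in every clause of the definition, the only way a $\surd$ can appear unguarded at top level in $\trans{S}$ is if it already appears unguarded at top level in one of the translated subterms, and conversely. Recall that $\trans{\cdot}$ is extended to the success constant by $\trans{\surd} = \surd$, so the base case $S = \surd$ gives $\trans{\surd} = \surd$, which reports success, matching $\surd{\downarrow}$. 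The other base case $S = \nil$ gives $\trans{\nil} = \nil$, and neither $\nil$ nor $\trans{\nil}$ reports success.

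For the inductive step I would go through the remaining syntactic forms. For the prefix cases $S = \bar x y.P$ and $S = x(z).P$: the translations are $(u)(\bar x u \mid u(v).(\bar v y \mid \trans{P}))$ and $x(u).(v)(\bar u v \mid v(z).\trans{P})$ respectively, and in each case every occurrence of any $\surd$ coming from $\trans{P}$ lies strictly within an input-prefixed subterm ($u(v).{-}$ or $x(u).{-}$), hence is guarded; since no $\surd$ is produced by the context itself, $\trans{\bar x y.P}\not{\downarrow}$ and $\trans{x(z).P}\not{\downarrow}$, which agrees with $\bar x y.P \not{\downarrow}$ and $x(z).P\not{\downarrow}$. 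For $S = (\nu x)P$: $\trans{(\nu x)P} = (\nu x)\trans{P}$, and by the rule for restriction in the inductive definition, $\trans{(\nu x)P}{\downarrow}$ iff $\trans{P}{\downarrow}$ iff (by induction hypothesis) $P{\downarrow}$ iff $(\nu x)P{\downarrow}$. For $S = !P$: $\trans{!P} = \,!\trans{P}$, and the replication rule gives $!\trans{P}{\downarrow}$ iff $\trans{P}{\downarrow}$ iff $P{\downarrow}$ iff $!P{\downarrow}$. For $S = P \mid Q$: $\trans{P\mid Q} = \trans{P}\mid\trans{Q}$, and the two parallel-composition rules give $\trans{P}\mid\trans{Q}{\downarrow}$ iff $\trans{P}{\downarrow}$ or $\trans{Q}{\downarrow}$ iff (by induction hypothesis) $P{\downarrow}$ or $Q{\downarrow}$ iff $P\mid Q{\downarrow}$.

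There is no real obstacle here: the only point requiring a moment's care is the prefix cases, where one must check that the freshly introduced restrictions and prefixes in the translation context genuinely guard every $\surd$ inherited from $\trans{P}$ and that the context contributes no $\surd$ of its own — this is immediate from inspection of the two clauses. Since the displayed inductive rules for ${\downarrow}$ match the structure-preserving clauses of $\trans{\cdot}$ essentially one-for-one (with the prefix clauses simply wrapping everything under a guard), the induction goes through. One could alternatively phrase the argument via the footnote's characterisation $P{\downarrow}$ iff $P \equred Q \mid \surd$ for some $Q$, combined with Lemma~\ref{redBF} and Lemma~\ref{sbst}, but the direct structural induction on the inductive definition of ${\downarrow}$ is cleaner and self-contained.
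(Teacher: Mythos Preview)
Your proposal is correct and takes exactly the same approach as the paper, which simply states ``A trivial structural induction.'' You have spelled out the cases that the paper leaves implicit; nothing further is needed.
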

\begin{proof}
A trivial structural induction.
\end{proof}

\begin{lemma}\label{barb confluence}
If $\pT \longmapsto \pT'$ and $\pT{\downarrow}$ then $\pT'{\downarrow}$.
\end{lemma}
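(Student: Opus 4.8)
The statement to prove is Lemma~\ref{barb confluence}: if $\pT \longmapsto \pT'$ and $\pT{\downarrow}$ then $\pT'{\downarrow}$, for $\pT,\pT'\in\T_{\pima}$. The idea is that a reduction step consumes a pair of matching prefixes $\bar xy$ and $x(z).R$ deep inside the term, but a top-level unguarded occurrence of $\surd$ cannot lie underneath either of these prefixes, so it survives the step. I would invoke Lemma~\ref{red1} to put the reduction in normal form: there are $\tilde w$, $x,y,z$ and terms $R,\C\in\T_{\pima}$ with $\C$ plain such that $\pT \equred (\tilde w)((\bar x y | x(z).R) | \C) \longmapsto (\tilde w)((\bm 0 | R\subs{y}{z}) | \C) \equred \pT'$. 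Since $\pT{\downarrow}$ is preserved by $\equred$ (as noted just before Lemma~\ref{barbs}), it suffices to show $(\tilde w)((\bm 0 | R\subs{y}{z}) | \C){\downarrow}$ given $(\tilde w)((\bar x y | x(z).R) | \C){\downarrow}$.

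First I would observe, using the inductive definition of $\downarrow$ together with the fact that $\C$ is plain, that a term of the form $(\tilde w)((\bar x y | x(z).R) | \C){\downarrow}$ can only hold because $\C{\downarrow}$: the component $\bar xy$ contains no $\surd$, and the component $x(z).R$ has its $\surd$'s (if any) guarded by the input prefix $x(z)$, hence not top-level unguarded. So $\C{\downarrow}$. Then $(\bm 0 | R\subs{y}{z}) | \C {\downarrow}$ follows from $\C{\downarrow}$ by the rule for $|$, and prefixing by the restrictions $(\tilde w)$ preserves $\downarrow$ by the rule for $(\nu z)$. Hence $(\tilde w)((\bm 0 | R\subs{y}{z}) | \C){\downarrow}$, and therefore $\pT'{\downarrow}$.

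I don't expect any real obstacle here; the only point requiring a moment's care is the claim that a top-level unguarded $\surd$ in the normal-form term must reside in $\C$ rather than in $x(z).R$. This is immediate from the definition of \emph{unguarded} specialised to the $\pi$-calculus (an occurrence not lying strictly within a subterm $\alpha.Q$): any $\surd$ inside $x(z).R$ lies within $x(z).R$, hence is guarded, so cannot be a top-level unguarded occurrence of the whole term. Everything else is a routine application of the inductive rules for $\downarrow$ and Observation~\ref{obs:subst}-style invariance under $\equred$.

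\begin{proof}
By Lemma~\ref{red1} there are $\tilde w \subseteq\N$, $x,y,z\in\mathcal{N}$
and $R,\C\in\T_{\pima}$ with $\C$ plain, such that
$\pT \equred (\tilde w)((\bar x y | x(z) . R) | \C) \longmapsto
(\tilde w)((\bm 0 | R\subs{y}{z}) | \C) \equred \pT'$.
Since $P\equred Q$ and $P{\downarrow}$ imply $Q{\downarrow}$, it suffices to show that
$(\tilde w)((\bar x y | x(z) . R) | \C){\downarrow}$ implies
$(\tilde w)((\bm 0 | R\subs{y}{z}) | \C){\downarrow}$.
Inspecting the inductive rules for $\downarrow$: a top-level unguarded occurrence
of $\surd$ in $(\tilde w)((\bar x y | x(z) . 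R) | \C)$ cannot occur in the
component $\bar x y$, which contains no $\surd$, nor in the component $x(z).R$,
since there every occurrence of $\surd$ lies strictly within the subterm
$x(z).R$ and hence is guarded. So the occurrence lies in $\C$, i.e.\ $\C{\downarrow}$.
By the rule for $|$ then $(\bm 0 | R\subs{y}{z}) | \C{\downarrow}$, and by the rule
for restriction $(\tilde w)((\bm 0 | R\subs{y}{z}) | \C){\downarrow}$.
Hence $\pT'{\downarrow}$.
\end{proof}
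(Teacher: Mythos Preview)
Your proof is correct and follows essentially the same approach as the paper: invoke Lemma~\ref{red1} to obtain the normal form, observe that the top-level unguarded $\surd$ must reside in $\C$, and conclude $\pT'{\downarrow}$. You simply spell out in more detail why $\C{\downarrow}$ must hold, whereas the paper asserts it directly.
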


\begin{proof}
By Lemma~\ref{red1} there are
$\tilde w \subseteq\N$, $x,\un,\rn\mathbin\in\mathcal{N}$
and $R,\C\mathbin\in\T_{\pima}$ with $\C$ plain, such that
$\tilde w \subseteq \fn((\bar x \un | x(\rn) . R) | \C )$ and
$\pT \mathbin{\equred} (\tilde w)((\bar x \un | x(\rn) . R) | \C ) \mathbin{\longmapsto}
(\tilde w)(( \bm{0} | R\subs{\un}{\rn}) | \C ) \mathbin{\equred} \pT'$.
Since $\pT{\downarrow}$, it must be that $\C{\downarrow}$ and hence $\pT'{\downarrow}$.
\end{proof}

\noindent
By combining Lemmata~\ref{barbs} and~\ref{barb confluence} with Lemma~\ref{operational completeness}
and Theorem~\ref{operational soundness} one finds that also the criterion of success
sensitiveness is met.

\begin{theorem}
Let $S \in \T_{\pims}$.
Then $S{\Downarrow}$ iff $\trans{S}{\Downarrow}$.
\end{theorem}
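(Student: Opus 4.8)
The plan is to prove $S{\Downarrow}$ iff $\trans{S}{\Downarrow}$ by establishing each implication separately, using the operational correspondence results already in hand together with the two confluence-style facts about the success predicate, Lemmata~\ref{barbs} and~\ref{barb confluence}.

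First I would handle the forward direction. Suppose $S{\Downarrow}$, i.e.\ $S \Longmapsto S'$ with $S'{\downarrow}$. By operational completeness (Lemma~\ref{operational completeness}, applied along the reduction sequence), $\trans{S} \Longmapsto \trans{S'}$. By Lemma~\ref{barbs}, $S'{\downarrow}$ gives $\trans{S'}{\downarrow}$, and hence $\trans{S}{\Downarrow}$. This direction is essentially immediate.

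For the converse, suppose $\trans{S}{\Downarrow}$, i.e.\ $\trans{S} \Longmapsto \pT$ with $\pT{\downarrow}$. By operational soundness (Theorem~\ref{operational soundness}), there is an $S'$ with $S \Longmapsto S'$ and $\pT \Longmapsto \trans{S'}$. Since $\pT{\downarrow}$ and the success predicate is preserved along reductions by Lemma~\ref{barb confluence} (iterated along the sequence $\pT \Longmapsto \trans{S'}$), we get $\trans{S'}{\downarrow}$. Then Lemma~\ref{barbs} yields $S'{\downarrow}$, so $S \Longmapsto S'$ with $S'{\downarrow}$, i.e.\ $S{\Downarrow}$.

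The only point requiring a little care—and the step I would flag as the main obstacle, though it is minor—is the invocation of Lemma~\ref{barb confluence} along a multi-step sequence: one must observe that $\pT \longmapsto \pT'$ with $\pT{\downarrow}$ implies $\pT'{\downarrow}$ iterates to give $\pT \Longmapsto \pT''$ with $\pT{\downarrow}$ implies $\pT''{\downarrow}$, which is a trivial induction on the length of the reduction sequence. Everything else is a direct chaining of the cited results; no new argument about the structure of the encoding is needed here, since all the hard work has already been done in establishing operational completeness, operational soundness, and the two success-preservation lemmata.
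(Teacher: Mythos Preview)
Your proof is correct and matches the paper's argument essentially line for line: both directions use operational completeness and soundness, together with Lemmata~\ref{barbs} and~\ref{barb confluence}, in exactly the way you describe. Your explicit remark about iterating Lemma~\ref{barb confluence} along a multi-step reduction is the only elaboration beyond the paper, which leaves that step implicit.
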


\begin{proof}
Suppose that $S{\Downarrow}$. Then $S \Longmapsto S'$ for a process $S'$ with $S'{\downarrow}$.
By Lemma~\ref{operational completeness} $\trans{S} \Longmapsto \trans{S'}$.
By Lemma~\ref{barbs} $\trans{S'}{\downarrow}$.
Hence $\trans{S}{\Downarrow}$.

Now suppose $\trans{S}{\Downarrow}$. 
Then $\trans{S} \Longmapsto \pT$ for a process $\pT$ with $\pT{\downarrow}$.
By Theorem~\ref{operational soundness} there is a $S'$ with $S \Longmapsto S'$
and $\pT\Longmapsto \trans{S'}$.
By Lemma~\ref{barb confluence} $\trans{S'}{\downarrow}$.
By Lemma~\ref{barbs} $S'{\downarrow}$.
Hence $S{\Downarrow}$.
\end{proof}

\section{Validity of Honda \& Tokoro's encoding}\label{sec:validHT}

\noindent
That the encoding of Honda \& Tokoro also satisfies all five
criteria of Gorla follows in the same way. I will only show the steps where a
difference with the previous sections occurs. In this section $\trans{\cdot}$ stands for $\trans{\cdot}_{\rm HT}$.

\begin{lemma}\label{operational completeness HT}
Let $S,S' \mathbin\in \T_{\pims}$.
If $S \longmapsto S'$ then $\trans{S} \Longmapsto \trans{S'}$.
\end{lemma}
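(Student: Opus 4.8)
The plan is to mimic the proof of Lemma~\ref{operational completeness} almost verbatim, the only genuinely new ingredient being the communication case. First I would note that the Honda \& Tokoro analogues of Lemmata~\ref{sbst} and~\ref{redBF} hold with the same proofs: $\fn(\trans{S})=\fn(S)$ and $\trans{S}\subs{y}{z}=\trans{S\subs{y}{z}}$ by structural induction on $S$ (the modified prefix clauses are still free-name preserving), and $S\equred S'$ implies $\trans{S}\equred\trans{S'}$ by the argument of Lemma~\ref{redBF} together with compositionality. Granting these, the statement follows by induction on the derivation of $S\longmapsto S'$; the restriction case $S=(z)P$, the parallel case $S=P\mid Q$, and the closure-under-$\equred$ case go through exactly as in Lemma~\ref{operational completeness}, the last invoking the analogue of Lemma~\ref{redBF}.

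For the communication axiom, let $S=\bar xy.P\mid x(z).Q$ with $y\notin\bn(Q)$ and $S'=P\mid Q\subs{y}{z}$. Unfolding the two prefix clauses of $\trans{\cdot}_{\rm HT}$,
\[
\trans{S}\;=\;x(u').(\bar{u'}y\mid\trans{P})\ \mid\ (u)(\bar xu\mid u(z).\trans{Q}),
\]
where $u'\notin\fn(\trans{P})\cup\{x,y\}$ and, after $\alpha$-conversion, $u$ may be taken fresh for $P,Q,x,y$. The binder $(u)$ is then extruded to the top (legal since $u\notin\fn(\trans{P})\cup\{x\}$), the output $\bar xu$ communicates with the input on $x$ --- which renames $u'$ to $u$ without affecting $\trans{P}$, as $u'\notin\fn(\trans{P})$ --- and finally $\bar uy$ communicates with $u(z).\trans{Q}$, substituting $y$ for $z$ in $\trans{Q}$; discarding the now-vacuous $(u)$ (legal since $u\notin\fn(\trans{Q}\subs{y}{z})$) and applying the analogue of Lemma~\ref{sbst} yields $\trans{P}\mid\trans{Q\subs{y}{z}}=\trans{S'}$. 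Thus $\trans{S}\longmapsto\longmapsto\trans{S'}$ up to structural congruence, in omitting parallel components $\bm 0$ and the empty binder $(u)$.

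The only point needing care is the name bookkeeping just described: one must verify that the restricted name $u$ produced by translating the input prefix is fresh enough both for the opening scope extrusion and for the final removal of the binder, and that the bound name $u'$ produced by translating the output prefix is not free in $\trans{P}$, so that the first communication leaves $\trans{P}$ untouched --- all consequences of the free-name invariance of the encoding, i.e.\ of the analogue of Lemma~\ref{sbst}. Since each source reduction is matched here by exactly two target reductions, one also obtains divergence preservation for $\trans{\cdot}_{\rm HT}$, just as in Remark~\ref{divergence preservation} (where three steps were needed for Boudol's encoding).
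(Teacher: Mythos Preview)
Your proposal is correct and follows essentially the same route as the paper's proof: induction on the derivation of $S\longmapsto S'$, with the three structural cases handled exactly as for Boudol's encoding and the communication case worked out explicitly via two target reductions. The only cosmetic difference is that the paper picks a single fresh name $u$ for both prefix translations from the outset, whereas you start with distinct $u$ and $u'$ and appeal to $\alpha$-conversion; your closing remark that each source step costs two target steps (versus three for Boudol) also matches the paper's implicit count.
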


\begin{trivlist}\item[\hspace{4pt}\textit{Proof.}]
By induction on the derivation of $S \longmapsto S'$.
\begin{itemize}
\item Let $S\mathbin=\bar xy.P | x(z).Q$, $y\mathbin{\notin}\bn(Q)$ and $S'\mathbin=P|Q\subs{y}{z}$.
Pick $u\notin\fn(P)\cup\fn(Q)\cup\{x,y\}$. Then
\[\begin{array}{@{}rcl@{}}\trans{S} &=& x(u).(\bar{u}y | \trans{P}) ~|~ (u)(\bar{x}u | u(z).\trans{Q})\\
&\longmapsto& (u)\big(\bar{u}y | \trans{P} ~|~ u(z).\trans{Q}\big)\\
&\longmapsto& \trans{P} ~|~ (\trans{Q}\subs{y}{z})\\
&=& \trans{P} ~|~ \trans{Q\subs{y}{z}} \quad~~~\mbox{(using Lemma~\ref{sbst})}\\
&=& \trans{P ~|~ Q\subs{y}{z}} = \trans{S'}.
\end{array}\]
Here structural congruence is applied in omitting parallel components $\bm{0}$ and the empty binders $(u)$.
\item The other three cases proceed as in the proof of Lemma~\ref{operational completeness}.
\qed\pagebreak[2]
\end{itemize}
\end{trivlist}

\begin{lemma}\label{red3HT}
If $\trans{S}\eqused (\tilde w)(\C| \bar x \un | x(\rn) . R )$ with $S\mathbin\in\T_{\pims}$,
$\C$ plain and $\tilde w \subseteq \fn (\C| \bar x \un | x(\rn) . R )$,
then there are terms $\D,R_1,R_2\mathbin\in\T_{\pims}$, $\F\mathbin\in\T_\pima$, and names
$y,z\mathbin\in\mathcal{N}$ and {$\tilde s, \tilde t \subseteq {\mathcal{N}}$} 
such that
$S \eqused (\tilde s)(\D \mid \bar x y . R_1 \mid x(z) . R_2)$,
$\tilde w = \tilde s \uplus \tilde t \uplus \{u\}$,\linebreak[4]
$\C \eqused \F|\un(z).\trans{R_2}$,
$u\mathbin{\notin}\fn(\trans{R_2}){\setminus}\{z\}$,
$\trans{\D}\eqused(\tilde t)\F$,
$r \neq y$,
$R \eqused \bar{r}y | \trans{R_1}$ and
$r\mathbin{\notin}\fn(\trans{R_1})$.
\end{lemma}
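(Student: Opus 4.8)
The plan is to mirror the proof of Lemma~\ref{red3}, simplifying wherever Honda \& Tokoro's encoding differs from Boudol's. The key structural fact is that under $\trans{\cdot}_{\rm HT}$ the output prefix $\bar xy.P$ translates to the \emph{input}-headed term $x(u).(\bar uy | \trans{P})$, while the input prefix $x(z).P$ translates to the output-headed (restricted) term $(u)(\bar xu | u(z).\trans{P})$. So in a plain decomposition of $\trans{S}$, a parallel component of the shape $\bar x\un$ can only arise from a translated input prefix $\trans{x(z).R_2} = (u)(\bar xu | u(z).\trans{R_2})$, with the binder $(u)$ pulled out and $u$ renamed; and a parallel component of the shape $x(\rn).R$ can only arise from a translated output prefix $\trans{\bar xy.R_1} = x(u).(\bar uy | \trans{R_1})$, with $u$ renamed to $\rn$ and $R = \bar\rn y | \trans{R_1}$.

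First I would repeat the opening paragraph of the proof of Lemma~\ref{red3} verbatim: bring $S$ into a plain normal form $S' := (\tilde p)(P_1|\dots|P_n)$ with $\tilde p\subseteq\fn(P_1|\dots|P_n)$ using rules $\scriptstyle(8),(9),(2),\stackrel{\leftarrow}{(7)}$, note $\trans{S}\eqused\trans{S'} = (\tilde p)(\trans{P_1}|\dots|\trans{P_n})$ by (the proof of) Lemma~\ref{redBF}, and push the translated subterms into a plain form $(\tilde q)P$ with $\tilde q$ disjoint from $\tilde p$. Here each $q\in\tilde q$ is the renamed bound name $u$ from some $\trans{P_i} = \trans{x(z).Q} = (u)(\bar xu | u(z).\trans{Q})$, hence $q\in\fn(P)$. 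Then invoke Observation~\ref{sigma} to get an injective renaming $\sigma$ with $\sigma(\tilde p)\uplus\sigma(\tilde q) = \tilde w$ and $\C|\bar x\un|x(\rn).R \eqused P\sigma$, set $\tilde s := \sigma'(\tilde p)$ for $\sigma'$ the restriction of $\sigma$ to $\tilde p$, and define $S'' := (\tilde s)(P_1\sigma'|\dots|P_n\sigma')$, so that $S\eqused S''$ and $\trans{S''} = (\tilde s)(\trans{P_1}\sigma'|\dots|\trans{P_n}\sigma')$.

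Next comes the case analysis identifying which $P_i\sigma'$ produce the components $\bar x\un$ and $x(\rn).R$. For $\bar x\un$: it must come from some $P_i\sigma' = x(z).R_2$, whose translation $(u')(\bar xu' | u'(z).\trans{R_2})$ contributes $\bar xu'$ with $u'$ renamed to $\un = u$; this forces $u\notin\fn(\trans{R_2})\setminus\{z\}$ and $u\in\sigma(\tilde q)$, and moreover $\C\eqused\F|\un(z).\trans{R_2}$ after setting $\tilde t := \sigma(\tilde q)\setminus\{u\}$ and $\trans{\D}\eqused(\tilde t)\F$ with $\F$ plain, exactly as in Lemma~\ref{red3} but with the simpler one-binder gadget. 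For $x(\rn).R$: the presence of $\bar xu$ rules out its coming from another translated input prefix, so it must come from some $P_j\sigma' = \bar xy.R_1$, whose translation $x(u'').(\bar{u''}y | \trans{R_1})$ contributes $x(u'').(\bar{u''}y|\trans{R_1})$ with $u''$ renamed to $\rn = r$; hence $r\neq y$, $r\notin\fn(\trans{R_1})$, and $R\eqused\bar ry|\trans{R_1}$. Let $\D$ collect the remaining components, giving $S\eqused S''\eqused(\tilde s)(\D|\bar xy.R_1|x(z).R_2)$ and $\tilde w = \tilde s\uplus\tilde t\uplus\{u\}$. The only genuine obstacle is the bookkeeping of which $\alpha$-renamed bound names land in $\tilde s$, $\tilde t$, or $\{u\}$ and the verification that $\C$ decomposes as claimed up to $\eqused$; this is handled by the same chain of $\equiv_{(8),(9)}$, $\equiv_{(1),(2),(7)}$, $\equiv_{(2),(7)}$ rewrites and a final appeal to Observation~\ref{sigma} as at the end of the proof of Lemma~\ref{red3}. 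Since HT's encoding introduces no nested restrictions on the output side, the $v_1,v_2$ bookkeeping of Lemma~\ref{red3} disappears entirely, which is why the statement here is correspondingly shorter.

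\begin{proof}
Analogous to the proof of Lemma~\ref{red3}, using that under $\trans{\cdot}_{\rm HT}$ a plain component of the form $\bar x\un$ can only originate from a translated input prefix $\trans{x(z).R_2} = (u)(\bar xu | u(z).\trans{R_2})$ with $(u)$ extruded and $u$ renamed, and a plain component of the form $x(\rn).R$ can only originate from a translated output prefix $\trans{\bar xy.R_1} = x(u).(\bar uy | \trans{R_1})$ with $u$ renamed to $\rn$ and $R\eqused\bar\rn y|\trans{R_1}$. Bring $S$ into plain form $(\tilde p)(P_1|\dots|P_n)$ with $\tilde p\subseteq\fn(P_1|\dots|P_n)$, obtain $\trans{S}\eqused(\tilde p)(\trans{P_1}|\dots|\trans{P_n})$ by the proof of Lemma~\ref{redBF}, push into $(\tilde p)(\tilde q)P$ with $P$ plain and $\tilde q$ disjoint from $\tilde p$, and apply Observation~\ref{sigma} to get an injective $\sigma$ with $\sigma(\tilde p)\uplus\sigma(\tilde q)=\tilde w$ and $\C|\bar x\un|x(\rn).R\eqused P\sigma$. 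Writing $\sigma'$ for the restriction of $\sigma$ to $\tilde p$ and $\tilde s:=\sigma'(\tilde p)$, set $S'':=(\tilde s)(P_1\sigma'|\dots|P_n\sigma')$, so $S\eqused S''$ and $\trans{S''}=(\tilde s)(\trans{P_1}\sigma'|\dots|\trans{P_n}\sigma')$. Identifying $\bar x\un$: some $P_i\sigma'$ has the form $x(z).R_2$ with $\trans{x(z).R_2}=(u')(\bar xu'|u'(z).\trans{R_2})$, $u'$ renamed to $u\in\sigma(\tilde q)$, forcing $u\notin\fn(\trans{R_2})\setminus\{z\}$; put $\tilde t:=\sigma(\tilde q)\setminus\{u\}$ so $\tilde w=\tilde s\uplus\tilde t\uplus\{u\}$. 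Identifying $x(\rn).R$: the component $\bar xu$ of $P\sigma$ rules out another translated input prefix, so some $P_j\sigma'$ has the form $\bar xy.R_1$ with $\trans{\bar xy.R_1}=x(u'').(\bar{u''}y|\trans{R_1})$, $u''$ renamed to $r$, giving $r\neq y$, $r\notin\fn(\trans{R_1})$ and $R\eqused\bar ry|\trans{R_1}$. Letting $\D$ collect the remaining components, $S\eqused S''\eqused(\tilde s)(\D|\bar xy.R_1|x(z).R_2)$. Finally, as at the end of the proof of Lemma~\ref{red3}, the chain
$(\tilde w)(\C|\bar x\un|x(\rn).R)\eqused\trans{S''}\eqused
(\tilde s)(\trans{\D}|(u)(\bar xu|u(z).\trans{R_2})|x(\rn).R)\equiv_{(1),(2),(7)}
(\tilde s)(u)(\trans{\D}|u(z).\trans{R_2}|\bar xu|x(\rn).R)\equiv_{(2),(7)}
(\tilde s)(u)(\tilde t)(\F|u(z).\trans{R_2}|\bar xu|x(\rn).R)$,
together with Observation~\ref{sigma}, yields $\C|\bar x\un|x(\rn).R\eqused\F|\un(z).\trans{R_2}|\bar xu|x(\rn).R$ and hence $\C\eqused\F|\un(z).\trans{R_2}$, while $\trans{\D}\eqused(\tilde t)\F$ with $\F$ plain.
\end{proof}
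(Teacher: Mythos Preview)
Your proposal is correct and follows essentially the same approach as the paper's own proof: normalise $S$ to a plain form, push the translated components into $(\tilde p)(\tilde q)P$, apply Observation~\ref{sigma}, identify the sources of $\bar x u$ and $x(r).R$ via the shape of the HT encoding (with the same exclusion argument for the alternative origin of $x(r).R$), and finish with the same chain of structural rewrites and a second appeal to Observation~\ref{sigma}. The only cosmetic difference is that the paper inserts an explicit $\equiv_{(8),(9)}$ step introducing the intermediate term $\E$ (the $\alpha$-converted parallel components $Q_i$) before applying rules $(2),(7)$ to extract $(\tilde t)\F$; you collapse this into the surrounding steps, so your label $\equiv_{(2),(7)}$ on the passage from $\trans{\D}$ to $(\tilde t)\F$ tacitly also uses $(8),(9)$.
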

\begin{proof}
The first two paragraphs proceed exactly as in the proof of Lemma~\ref{red3}.

Each $q\in\tilde q$ is a renaming of the name $u$ in a term
$\trans{P_i} = \trans{x(z).Q}  = (u)(\bar{x}u | u(z).\trans{Q})$,
so that $q \in\fn(P)$.

The next two paragraphs proceed exactly as in the proof of Lemma~\ref{red3}.

One of the $P_i\sigma'$ must be of the form $x (z).R_2$,
so that $\trans{P_i}\sigma' = \trans{P_i\sigma'} = \trans{x (z).R_2}=(u')(\bar{x}u' | u'(z).\trans{R_2})$
with $u'\notin\fn(R_2){\setminus}\{z\}\cup\{x\}$,
while $u'$ and $z$ are renamed into $u$ and $z'$ in $Q_i = (u)(\bar{x}u | u(z').R'_2)$,
where $(z')R_2' \mathbin{\equiv_{(8),(9)}} (z)\trans{R_2}$.
For this is the only way $\bar{x}u$ can end up as a parallel component of $P\sigma$.
It follows that $u \notin \fn(R_2){\setminus}\{z\}$ and $u\in\sigma(\tilde q)$.
Let $\tilde t := \sigma(\tilde q) \setminus u$.
One obtains $\tilde w = \tilde s \uplus \tilde t \uplus \{u\}$.

Searching for an explanation of the parallel component $x(\rn) . R $ (up to $\eqused$)
of $P\sigma$,
the existence of the component $\bar x u$ of $P\sigma$
excludes the possibility that
one of the $P_i\sigma'$ is of the form $t'(r')\!.R_1$
so that $\trans{P_i}\sigma' {=} (u')(\bar{t}'u' | u'(r').\trans{R_1})$, while
$u'$ is renamed into $x$ and $r'$ into $r$ in the expression
$Q_i = (u')(\bar{t}'x | x(r).R'_1)$.

Hence one of the $P_i\sigma'$ is of the form $\bar xy.R_1$, so that
$\trans{P_i}\sigma' \mathbin= \trans{P_i\sigma'} \mathbin= \trans{\bar xy.R_1}\mathbin=x(r').(\bar r'y|\trans{R_1})$
with $r'\mathbin{\notin}\fn(R_1)\cup\{x,y\}$, while 
$r'$ is renamed into $r$ in
$Q_i = x(r).(\bar ry|\trans{R_1})$.
Thus $r\notin\fn(R_1)$ and $r\neq y$.
Further, 
$R\mathbin{\eqused} (\bar ry|\trans{R_1})$.

Let $\D$ collect all parallel components $P_i\sigma'$ other than the
above discussed components $\bar x y.R_1$ and $x(z).R_2$. Then
$S\eqused S''\eqused (\tilde s)(\D ~|~ \bar x y.R_1 ~|~ x(z).R_2)$.

One has $(\tilde w)(\C| \bar x \un | x(\rn) . R ) \eqused \trans{S} \eqused \trans{S''} \eqused\\
\trans{(\tilde s)(\D ~|~ \bar x y.R_1 ~|~  x(z).R_2)} =\\
(\tilde s)(\trans{\D} ~|~ x(r').(\bar r'y|\trans{R_1}) ~|~ (u')(\bar{x}u' | u'(z).\trans{R_2}) ) \equiv_{(8),(9)}
(\tilde s)(\E ~|~ x(\rn) . R ~|~ (u)(\bar{x}u | u(z).\trans{R_2}) ) \equiv_{(1),(2),(7)} \\
(\tilde s)(u)(\E ~|~ \bar{x}u | u(z).\trans{R_2} ~|~ x(\rn) . R ) \equiv_{(2),(7)} \\
(\tilde s)(u)(\tilde t)(\F ~|~  u(z).\trans{R_2} ~|~ \bar{x}u ~|~ x(\rn) . R )$.\\
Here $\E$ is the parallel composition of all components $Q_i$ obtained by renaming of the parallel
components  $P_i\sigma'$ of $\trans{\D}$, and $(\tilde t)\F$ with $\F$ plain is obtained from $\E$ by rules
$\scriptstyle(2),(7)$.
So $\C ~|~ \bar x \un ~|~ x(\rn) . R \eqused \F ~|~ u(z).\trans{R_2} ~|~ \bar{x}u ~|~ x(\rn) . R$
by Observation~\ref{sigma}.
It follows that $\C \eqused \F ~|~ u(z).\trans{R_2}$.
\end{proof}

\begin{lemma}\label{lem6HT}
Let $S \in \T_{\pims}$ and $\pT \in \T_{\pima}$.
If $\trans{S} \longmapsto \pT$ then there is a $S'$ with $S \longmapsto S'$
and $\pT\Longmapsto \trans{S'}$.
\end{lemma}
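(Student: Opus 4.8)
The plan is to mirror the proof of Lemma~\ref{lem6} step by step, substituting the Honda \& Tokoro analogues at each point. First I would apply Lemma~\ref{red2} to the reduction $\trans{S}\longmapsto\pT$, obtaining $\tilde w\subseteq\N$, names $x,u,r$ and terms $\pT_0,R,\C\in\T_{\pima}$ with $\C$ plain, such that $\trans{S}\Rrightarrow_!\pT_0\eqused(\tilde w)(\C|\bar xu|x(r).R)\longmapsto(\tilde w)(\C|R\subs{u}{r})\equred\pT$ with $\tilde w\subseteq\fn(\C|\bar xu|x(r).R)$. Lemma~\ref{redBB} then supplies an $S_0\in\T_{\pims}$ with $S\Rrightarrow_! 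S_0$ and $\trans{S_0}=\pT_0$, and Lemma~\ref{red3HT} decomposes this: there are $\D,R_1,R_2\in\T_{\pims}$, $\F\in\T_\pima$, names $y,z$ and $\tilde s,\tilde t\subseteq\N$ with $S\eqused(\tilde s)(\D\mid\bar xy.R_1\mid x(z).R_2)$, $\tilde w=\tilde s\uplus\tilde t\uplus\{u\}$, $\C\eqused\F|u(z).\trans{R_2}$, $u\notin\fn(\trans{R_2})\setminus\{z\}$, $\trans{\D}\eqused(\tilde t)\F$, $r\neq y$, $R\eqused\bar ry|\trans{R_1}$ and $r\notin\fn(\trans{R_1})$. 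As in Lemma~\ref{lem6} I would record, via Observations~\ref{obs:subst} and the fact that $\trans{S_0}=\pT_0$ together with Lemma~\ref{sbst}, that $t,u\notin\fn(\D)\cup\{x,y\}\cup\fn(R_1)\cup(\fn(R_2)\setminus\{z\})$ for all $t\in\tilde t$.

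Then I would set $S':=(\tilde s)(\D\mid R_1\mid R_2\subs{y}{z})$, so that $S\longmapsto S'$ immediately (the reacting prefixes are $\bar xy.R_1$ and $x(z).R_2$ inside the plain term, and since $\tilde w$ is disjoint from the relevant free names there is no capture). The remaining task is the chain of structural-congruence rewrites showing $\pT\Longmapsto\trans{S'}$. Unwinding $\pT\equred(\tilde w)(\C|R\subs{u}{r})\equred(\tilde w)(\F|u(z).\trans{R_2}|(\bar ry|\trans{R_1})\subs{u}{r})$; since $r\notin\fn(\trans{R_1})$ and (after the substitution, replacing $r$ by $u$ in $\bar ry$) this is $(\tilde w)(\F|u(z).\trans{R_2}|\bar uy|\trans{R_1})$. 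Reassociating the binders using that $t\notin\{u,y\}\cup\fn(\trans{R_1})\cup(\fn(\trans{R_2})\setminus\{z\})$ and $u\notin\fn(\trans{\D})$, this becomes $(\tilde s)(\trans{\D}|(u)(u(z).\trans{R_2}|\bar uy|\trans{R_1}))$. One reduction step fires the $\bar uy\mid u(z).\trans{R_2}$ redex (legitimate since $u\neq$ any bound name in $R_2$), yielding $(\tilde s)(\trans{\D}|(u)(\trans{R_2}\subs{y}{z}|\trans{R_1}))$. Since $u\notin\fn(\trans{R_1})$ and $u\notin\fn(\trans{R_2})\setminus\{z\}$, the empty binder $(u)$ may be dropped, giving $(\tilde s)(\trans{\D}|\trans{R_1}|\trans{R_2}\subs{y}{z})$, which by Lemma~\ref{sbst} equals $(\tilde s)(\trans{\D}|\trans{R_1}|\trans{R_2\subs{y}{z}})=\trans{S'}$. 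So $\pT\longmapsto\trans{S'}$, and a fortiori $\pT\Longmapsto\trans{S'}$.

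I expect the main obstacle — really the only nontrivial point — to be bookkeeping the side conditions on names: making sure that when $\bar ry$ is turned into $\bar uy$ by $\subs{u}{r}$ the resulting term can indeed react with $u(z).\trans{R_2}$ without variable capture, and that the $(u)$ binder can be reassociated past $\trans{\D}$ (equivalently $(\tilde t)\F$) and past $u(z).\trans{R_2}\mid\bar uy\mid\trans{R_1}$. These all follow from the freshness facts $u\notin\fn(\trans{R_2})\setminus\{z\}$, $r\notin\fn(\trans{R_1})$, $u\notin\fn(\trans{\D})$ and $t\notin\{u,y\}\cup\fn(\trans{R_1})\cup(\fn(\trans{R_2})\setminus\{z\})$ delivered by Lemma~\ref{red3HT} and the argument with Observation~\ref{obs:subst} and Lemma~\ref{sbst}, exactly as in Lemma~\ref{lem6}; the HT case is in fact simpler because there is only one fresh name $u$ to track rather than the pair $u,v$ of Boudol's encoding, and correspondingly only one internal reduction step instead of two. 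Everything else is routine, and the conclusion $\pT\Longmapsto\trans{S'}$ can, as in Observation~\ref{inert}, be sharpened to $\pT\equiv\!\Rrightarrow\trans{S'}$ for later use in the soundness theorem.
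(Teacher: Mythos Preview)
Your proposal is correct and follows essentially the same approach as the paper's own proof: apply Lemmata~\ref{red2}, \ref{redBB} and~\ref{red3HT} in sequence, extract the freshness facts for $u$ and the names in $\tilde t$ via Observation~\ref{obs:subst} and Lemma~\ref{sbst}, set $S':=(\tilde s)(\D\mid R_1\mid R_2\subs{y}{z})$, and then verify $\pT\Longmapsto\trans{S'}$ by a chain of structural-congruence rewrites followed by a single reduction on the $\bar uy \mid u(z).\trans{R_2}$ redex. Your observation that only one internal step is needed (versus two for Boudol) and that the conclusion sharpens to $\pT\equiv\!\Rrightarrow\trans{S'}$ matches the paper's Observation~\ref{inertHT} exactly.
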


\begin{proof}
The first paragraph proceeds as in the proof of Lemma~\ref{lem6}, but
incorporating the conclusion of Lemma~\ref{red3HT} instead of Lemma~\ref{red3}.
Again, one finds, for all $t\mathbin\in \tilde t$, that
$t,u\notin\fn(\D) \cup \{x,y\} \cup \fn(R_1)\cup(\fn(R_2)\setminus\{z\})$.

Take $S':= (\tilde s)(\D ~|~ R_1 ~|~ R_2 \subs{y}{z})$.
Then $S\longmapsto S'$ and
$\pT \mathbin{\equred} (\tilde w)(\C~|~R\subs{u}{r})\\
\equred (\tilde w)\big(\F|\un(z).\trans{R_2} ~|~
\bar{r}y | \trans{R_1}\subs{u}{r}\big) \\
\equred (\tilde w)\big(\F|\un(z).\trans{R_2} ~|~
\bar{u}y | \trans{R_1}\big) \\
\mbox{}\hfill (\mbox{\it since $r\mathbin{\neq} y $ and $r\notin\fn(\trans{R_1})$})\\
\equred (\tilde s)(u)(\tilde t)\big(\F~|~\un(z).\trans{R_2} ~|~
\bar{u}y | \trans{R_1}\big) \\
\equred (\tilde s)(u)\big((\tilde t)\F~|~ \bar{u}y | \trans{R_1} ~|~ \un(z).\trans{R_2} \big) \\
\mbox{}\hfill (\mbox{\it since  $t\mathbin{\notin}\{u,y\} \cup \fn(\trans{R_1})\cup(\fn(\trans{R_2})\setminus \{z\})$ for $t\mathbin\in\tilde t$})\\
\equred (\tilde s)\big(\trans{\D}~|~(u){\color{red}\big(}\bar{u}y | \trans{R_1} ~|~ \un(z).\trans{R_2} {\color{red}\big)}\big)
\mbox{}\hfill (\mbox{\it as $u\notin\fn(\trans{\D})$})\\
\longmapsto (\tilde s)\big(\trans{\D} ~|~ \trans{R_1} ~|~ \trans{R_2}\subs{y}{z} \big) \\
\mbox{}\hfill (\mbox{\it since $u\notin\{y\}\cup\fn(\trans{R_1})\cup\fn(\trans{R_2}){\setminus}\{z\}$})\\
= (\tilde s)\big(\trans{\D} ~|~ \trans{R_1} ~|~ \trans{R_2\subs{y}{z}} \big)
\hfill (\mbox{\it by Lemma~\ref{sbst}})\\
= \trans{S'}$.
\end{proof}

\noindent
As its proof shows, the conclusion of Lemma~\ref{operational completeness HT}
can be restated as $\trans{S} \longmapsto\equiv\!\Rrightarrow \trans{S'}$.
The occurrence of $\longmapsto$ at the end of the proof of
Lemma~\ref{lem6HT} can be replaced likewise:

\begin{observation}\label{inertHT}
In the conclusion of Lemma~\ref{lem6}, $\pT\Longmapsto \trans{S'}$ can
be restated as $\pT \equiv\!\Rrightarrow \trans{S'}$.
\end{observation}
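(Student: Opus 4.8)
The plan is to argue exactly as for Observation~\ref{inert}, by a direct inspection of the proof of Lemma~\ref{lem6HT}. There the reduction $\pT \Longmapsto \trans{S'}$ is witnessed, after a chain of structural congruences, by a \emph{single} reduction step: one reaches $\pT \equred (\tilde s)\big(\trans{\D} \mid (u)(\bar u y \mid \trans{R_1} \mid u(z).\trans{R_2})\big) \longmapsto (\tilde s)\big(\trans{\D} \mid \trans{R_1} \mid \trans{R_2}\subs{y}{z}\big) = \trans{S'}$, the last equalities holding by compositionality of the encoding and by Lemma~\ref{sbst}, and it is recorded there that $u \neq y$, $u \notin \fn(\trans{R_1})$ and $u \notin \fn(\trans{R_2})\setminus\{z\}$. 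It therefore suffices to check that this single step is an instance of $\equiv\!\Rrightarrow$ rather than merely of $\longmapsto$.

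First I would observe that the innermost term matches clause~1 of Definition~\ref{def:cool-rel} literally, with $\XX := \trans{R_1}$, $\YY := \trans{R_2}$ and $v := u$: the side condition $u \notin \fn(\trans{R_1}) \cup \fn(\trans{R_2}\subs{y}{z})$ follows from the freshness facts just quoted, since $\fn(\trans{R_2}\subs{y}{z}) \subseteq (\fn(\trans{R_2})\setminus\{z\}) \cup \{y\}$ and $u \neq y$. Hence $(u)(\bar u y \mid \trans{R_1} \mid u(z).\trans{R_2}) \equiv\!\Rrightarrow \trans{R_1} \mid \trans{R_2}\subs{y}{z}$. Applying clause~2 to place $\trans{\D}$ in parallel on the left, and clause~3 once for each name of $\tilde s$ to restore the restrictions $(\tilde s)$, yields $(\tilde s)\big(\trans{\D} \mid (u)(\bar u y \mid \trans{R_1} \mid u(z).\trans{R_2})\big) \equiv\!\Rrightarrow (\tilde s)\big(\trans{\D} \mid \trans{R_1} \mid \trans{R_2}\subs{y}{z}\big)$. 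Finally clause~4 absorbs the leading block of structural congruences on the left, while on the right $\trans{R_2}\subs{y}{z} = \trans{R_2\subs{y}{z}}$ by Lemma~\ref{sbst}, so the right-hand side equals $\trans{S'}$; this gives $\pT \equiv\!\Rrightarrow \trans{S'}$.

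I do not expect any real obstacle: the proof of Lemma~\ref{lem6HT} was already arranged so that a source-level communication is mirrored by exactly one communication step in the target, so that the two nested ``inert'' steps of Boudol's encoding (handshake followed by delivery), which produced the $\equiv\!\Rrightarrow\equiv\!\Rrightarrow$ of Observation~\ref{inert}, collapse here to a single $\equiv\!\Rrightarrow$. The only point requiring a little care is to confirm that the freshness side condition demanded by clause~1 of Definition~\ref{def:cool-rel} is implied by the ones recorded in the proof of Lemma~\ref{lem6HT}, and, as noted above, this is immediate once one uses the elementary inclusion for the free names of a substitution.
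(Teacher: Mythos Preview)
Your proposal is correct and follows exactly the approach the paper intends: the paper leaves this observation without proof, merely noting that ``the occurrence of $\longmapsto$ at the end of the proof of Lemma~\ref{lem6HT} can be replaced likewise'', and your argument carries out precisely that inspection. One cosmetic point: clause~2 of Definition~\ref{def:cool-rel} only licenses adding a parallel component on the \emph{right}, so placing $\trans{\D}$ on the left technically requires an appeal to clause~4 together with commutativity of~$|$; since you invoke clause~4 anyway to absorb the leading structural congruences, this is harmless.
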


\section{Conclusion}\label{sec:conclusion}

\noindent
This paper proved the validity according to Gorla of the encodings proposed by Boudol and by Honda \& Tokoro of
the synchronous choice-free $\pi$-calculus into its asynchronous fragment; that is, both encodings
enjoy the five correctness criteria of \cite{Gorla10a}. For such a result, easily believed to be ``obvious'',
the proofs are surprisingly complicated,\footnote{The complications lay chiefly with proving
  operational soundness; for some of the other criteria the proofs are straightforward.}
and involve the new concept of an inert reduction.
Yet, I conjecture that it is not possible to simplify the proofs in any meaningful way.

Below I reflect on three of Gorla's criteria in the light of the lessons learned from this case study.

\paragraph{Compositionality}
Compositionality demands that for every $k$-ary operator $\op$ of the source language
there is a $k$-ary context $C_\op^N[\__1;\dots;\__k]$ in the target such that
  $$\trans{\op(S_1,\ldots,S_k)} = C_\op^N(\trans{S_1};\ldots;\trans{S_k})$$
for all $S_1,\ldots,S_k\in\mathcal{P}_1$. A drawback of this criterion is that this context may
depend on the set of names $N$ that occur free in the arguments $S_1,\ldots,S_k$.
The present application shows that we cannot simply strengthen the criterion of compositionality by
dropping the dependence on $N$. For then the present encodings would fail to be compositional.
However, in \cite{vG12} a form of compositionality is proposed where $C_\op$ does not depend on $N$,
but the main requirement is weakened to
  $$\trans{\op(S_1,\ldots,S_k)} \stackrel\alpha= C_\op(\trans{S_1};\ldots;\trans{S_k}).$$
Here $\stackrel\alpha=$ denotes equivalence up to \emph{$\alpha$-conversion}, renaming of bound
names and variables, here corresponding with rules $\scriptstyle{(8)}$ and $\scriptstyle{(9)}$ of
structural congruence. This suffices to rescue the current encodings. It is an open question whether
there are examples of intuitively valid encodings that essentially need the dependence of $N$ allowed
by \cite{Gorla10a}, i.e., where $C_\op^{N_1}$ and $C_\op^{N_2}$ differ by more than $\alpha$-conversion.

Another method of dealing with the fresh names $u$ and $v$ that are used in the present encodings,
also proposed in \cite{vG12}, is to equip the target language with two fresh names that do not occur
in the set of names available for the source language. Making the dependence on the choice of set
$\N$ of names explicit, this method calls $\pims$ expressible into $\pima$ if for each $\N$ there
exists an $\N'$ such that there is a valid encoding of $\pi(\N)$ into $\pima(\N')$.

\paragraph{Operational soundness}

\hspace{-.8pt}Operational soundness stems from Nestmann \& Pierce \cite{NestmannP00}, who proposed two forms of it:\\
$(\mathfrak{I})$\hfill
if $\trans{S} \longmapsto_2 \pT$ then $\exists S'\!:$ $S\longmapsto_1 S'$ and $\pT\asymp_2 \trans{S'}$.~~\,\mbox{}\\
$(\mathfrak{S})$\hfill
if $\trans{S} \Longmapsto_2 \pT$ then $\exists S'\!:$ $S\Longmapsto_1 S'$ and $\pT\Longmapsto_2\trans{S'}$.\\
The version of Gorla is the common weakening of these:\\
$(\mathfrak{G})$\hfill
if $\trans{S} \mathbin{\Longmapsto_2} \pT$ then $\exists S'\!:\!S\mathbin{\Longmapsto_1} S'$ and $\pT\Longmapsto_2\asymp_2\trans{S'}$.\\
An interesting intermediate form is\\
$(\mathfrak{W})$\hfill
if $\trans{S} \Longmapsto_2 \pT$ then $\exists S'\!:$ $S\Longmapsto_1 S'$ and $\pT\asymp_2 \trans{S'}$.~~\,\mbox{}\\
Nestmann \& Pierce observed that ``nonprompt encodings'', that
``allow administrative (or book-keeping) steps to precede a committing step'',
``do not satisfy $(\mathfrak{I})$''. For such encodings, which include the ones studied here,
they proposed $(\mathfrak{S})$.
As I have shown, the encodings of Boudol and of Honda \& Tokoro, satisfy not only $(\mathfrak{G})$
but even $(\mathfrak{S})$. It remains an interesting open question whether they satisfy $(\mathfrak{W})$.
Clearly, they do not when taking $\asymp_2$ to be the identity relation---as I did here---or
structural congruence. However, it is conceivable that $(\mathfrak{W})$ holds for another reasonable
choice of $\asymp_2$. (An unreasonable choice, such as the universal relation, tells us nothing.)

\paragraph{Success sensitivity}

My treatment of success sensitivity differs slightly from the one of Gorla \cite{Gorla10a}.
Gorla requires $\surd$ to be a constant of any two languages whose expressiveness is
compared. Strictly speaking, this does not allow his framework to be applied to the encodings of
Boudol or Honda \& Tokoro, as these deal with languages not featuring $\surd$.  Here I simply allowed
$\surd$ to be added, which is in line with the way Gorla's framework has been used
\cite{Gorla10b,LPSS10,PSN11,PN12,PNG13,GW14,EPTCS160.4,EPTCS189.9,GWL16}.
A consequence of this decision is that I have to specify how $\surd$ is translated---see the last
sentence of Definition~\ref{df:valid}---as the addition of $\surd$ to both languages happens after a
translation is proposed. This differs from \cite{Gorla10a}, where it is explicitly allowed to take
$\trans{\surd} \neq \surd$.

Gorla's success predicate is one of the possible ways to provide source and target languages with
a set of \emph{barbs} $\Omega$, each being a unary predicate on processes. For $\omega \in \Omega$,
write $P{\downarrow_\omega}$ if process $P$ has the barb $\omega$, and 
$P{\Downarrow_\omega}$ if $P\Longmapsto P'$ for a process $P'$ with $P'{\downarrow_\omega}$.
The standard criterion of barb sensitivity is then
$S{\Downarrow_\omega} \Leftrightarrow \trans{S}{\Downarrow_\omega}$ for all $\omega\in\Omega$.

A traditional choice of barb in the $\pi$-calculus is to take $\Omega=\{x,\bar x \mid x\mathbin\in\N\}$,
writing $P{\downarrow_x}$, resp.\ $P{\downarrow_{\bar x}}$, when $x\mathbin\in\fn(P)$ and
$P$ has an unguarded occurrence of a subterm $x(z).R$, resp.\ $\bar xy.R$ \cite{SW01book}.
The philosophy behind the asynchronous $\pi$-calculus entails that input actions $x(z)$ are not
directly observable (while output actions can be observed by means of a matching input of
the observer). This leads to semantic identifications like $\nil = x(y).\bar xy$, for in
both cases the environment may observe $\bar x z$ only if it supplied $\bar x z$ itself first.
Yet, these processes differ on their input barbs ($\downarrow_x$). For this reason, in
$\rm a\pi$ normally only output barbs ${\downarrow_{\bar x}}$ are considered \cite{SW01book}.
Boudol's encoding satisfies the
criterion of output barb sensitivity (and in fact also input barb sensitivity).
However, the encoding of Honda \& Tokoro does not, as it swaps input and output barbs.
As such, it is an excellent example of the benefit of the external barb $\surd$.

\subsection*{Validity up to a semantic equivalence}

\noindent
In \cite{vG12} a compositional encoding is called \emph{valid up to} a semantic equivalence
${\sim} \subseteq \T \times \T$ if $\trans{P} \sim P$ for all $P\in\T$.\footnote{\cite{vG12}
  distinguishes between a process term and its meaning or denotation, $\sim$ being
  defined on the denotations. Here, in line with \cite{Gorla10a} and most other related work,
  I collapse syntax and semantics by only considering process terms without process variables,
  taking the denotation of a term to be itself (up to $\alpha$-conversion).}
A given encoding may be valid up to a coarse equivalence, and invalid up to a finer one.
The equivalence for which it is valid is then a measure of the quality of the encoding.

Combining the results of the current paper with those of \cite{EPTCS190.4} shows that the
encodings of Boudol and Honda \& Tokoro are valid up to
reduction-based \emph{success respecting coupled similarity} ($\textit{CS}^\surd$).
Earlier, \cite{CC01} established that Boudol's encoding is valid up to \emph{may testing}
\cite{DH84} and \emph{fair testing equivalence} \cite{BRV95,NC95}---both results now follow from the
validity up to \plat{$\textit{CS}^\surd$}.
On the other hand, \cite{CC01} also shows that Boudol's encoding is not valid up to a
form of \emph{must testing}; in \cite{CCP07} this result is strengthened to pertain to any encoding
of $\pims$ into $\pima$.

An interesting open question is whether the encodings of Boudol and Honda \& Tokoro are valid up to
reduction-based \emph{success respecting weak bisimilarity} or \emph{weak barbed bisimilarity}.
In \cite{QW00}, a polyadic version of Boudol's encoding is assumed to be valid up to the version of
weak barbed bisimilarity that uses output barbs only; see Lemma 17. Yet, as no proof is provided,
the question remains open.

\paragraph{Acknowledgement}
Fruitful feedback from Stephan Mennicke and from IPL referees is gratefully acknowledged.
\vspace{-3pt}

\bibliographystyle{eptcs}
\bibliography{../../../../Stanford/lib/abbreviations,../../../../Stanford/lib/dbase,ipl}

\begin{thebibliography}{10}
\providecommand{\bibitemdeclare}[2]{}
\providecommand{\surnamestart}{}
\providecommand{\surnameend}{}
\providecommand{\urlprefix}{Available at }
\providecommand{\url}[1]{\texttt{#1}}
\providecommand{\href}[2]{\texttt{#2}}
\providecommand{\urlalt}[2]{\href{#1}{#2}}
\providecommand{\doi}[1]{doi:\urlalt{http://dx.doi.org/#1}{#1}}
\providecommand{\bibinfo}[2]{#2}

\bibitemdeclare{article}{AB84}
\bibitem{AB84}
\bibinfo{author}{D.~\surnamestart Austry\surnameend} \&
  \bibinfo{author}{G.~\surnamestart Boudol\surnameend} (\bibinfo{year}{1984}):
  \emph{\bibinfo{title}{Alg\`{e}bre de processus et synchronisations}}.
\newblock {\sl \bibinfo{journal}{Theoretical Computer Science}}
  \bibinfo{volume}{30}(\bibinfo{number}{1}), pp. \bibinfo{pages}{91--131},
  \doi{10.1016/0304-3975(84)90067-7}.

\bibitemdeclare{inproceedings}{BPV04}
\bibitem{BPV04}
\bibinfo{author}{M.~\surnamestart Baldamus\surnameend},
  \bibinfo{author}{J.~\surnamestart Parrow\surnameend} \&
  \bibinfo{author}{B.~\surnamestart Victor\surnameend} (\bibinfo{year}{2004}):
  \emph{\bibinfo{title}{Spi Calculus Translated to $\pi$-Calculus Preserving
  May-Tests}}.
\newblock In: {\sl \bibinfo{booktitle}{{\rm Proc.\ LICS'04}}}, pp.
  \bibinfo{pages}{22--31}, \doi{10.1109/LICS.2004.1319597}.

\bibitemdeclare{incollection}{BK86acopy}
\bibitem{BK86acopy}
\bibinfo{author}{J.A. \surnamestart Bergstra\surnameend} \&
  \bibinfo{author}{J.W. \surnamestart Klop\surnameend} (\bibinfo{year}{1986}):
  \emph{\bibinfo{title}{Algebra of communicating processes}}.
\newblock In: {\sl \bibinfo{booktitle}{Mathematics and Computer Science}},
  \bibinfo{series}{CWI Monograph 1}, \bibinfo{publisher}{North-Holland}, pp.
  \bibinfo{pages}{89--138}.

\bibitemdeclare{article}{BB87}
\bibitem{BB87}
\bibinfo{author}{T.~\surnamestart Bolognesi\surnameend} \&
  \bibinfo{author}{E.~\surnamestart Brinksma\surnameend}
  (\bibinfo{year}{1987}): \emph{\bibinfo{title}{Introduction to the {ISO}
  Specification Language {LOTOS}}}.
\newblock {\sl \bibinfo{journal}{Computer Networks}} \bibinfo{volume}{14}, pp.
  \bibinfo{pages}{25--59}, \doi{10.1016/0169-7552(87)90085-7}.

\bibitemdeclare{article}{Boreale98}
\bibitem{Boreale98}
\bibinfo{author}{M.~\surnamestart Boreale\surnameend} (\bibinfo{year}{1998}):
  \emph{\bibinfo{title}{On the Expressiveness of Internal Mobility in
  Name-Passing Calculi}}.
\newblock {\sl \bibinfo{journal}{\rm TCS}}
  \bibinfo{volume}{195}(\bibinfo{number}{2}), pp. \bibinfo{pages}{205--226},
  \doi{10.1016/S0304-3975(97)00220-X}.

\bibitemdeclare{techreport}{Bo92}
\bibitem{Bo92}
\bibinfo{author}{G.~\surnamestart Boudol\surnameend} (\bibinfo{year}{1992}):
  \emph{\bibinfo{title}{Asynchrony and the $\pi$-calculus (Note)}}.
\newblock \bibinfo{type}{Tech. Rep.} \bibinfo{number}{1702},
  \bibinfo{institution}{INRIA}.

\bibitemdeclare{inproceedings}{BRV95}
\bibitem{BRV95}
\bibinfo{author}{E.~\surnamestart Brinksma\surnameend},
  \bibinfo{author}{A.~\surnamestart Rensink\surnameend} \&
  \bibinfo{author}{W.~\surnamestart Vogler\surnameend} (\bibinfo{year}{1995}):
  \emph{\bibinfo{title}{Fair Testing}}.
\newblock In: {\sl \bibinfo{booktitle}{{\rm Proc.\ {CONCUR}'95}}}, {\sl
  \bibinfo{series}{\rm LNCS}} \bibinfo{volume}{962},
  \bibinfo{publisher}{Springer}, pp. \bibinfo{pages}{313--327},
  \doi{10.1007/3-540-60218-6_23}.

\bibitemdeclare{article}{BHR84}
\bibitem{BHR84}
\bibinfo{author}{S.D. \surnamestart Brookes\surnameend},
  \bibinfo{author}{C.A.R. \surnamestart Hoare\surnameend} \&
  \bibinfo{author}{A.W. \surnamestart Roscoe\surnameend}
  (\bibinfo{year}{1984}): \emph{\bibinfo{title}{A theory of communicating
  sequential processes}}.
\newblock {\sl \bibinfo{journal}{Journal of the ACM}}
  \bibinfo{volume}{31}(\bibinfo{number}{3}), pp. \bibinfo{pages}{560--599},
  \doi{10.1145/828.833}.

\bibitemdeclare{article}{BusiGZ09}
\bibitem{BusiGZ09}
\bibinfo{author}{N.~\surnamestart Busi\surnameend},
  \bibinfo{author}{M.~\surnamestart Gabbrielli\surnameend} \&
  \bibinfo{author}{G.~\surnamestart Zavattaro\surnameend}
  (\bibinfo{year}{2009}): \emph{\bibinfo{title}{On the expressive power of
  recursion, replication and iteration in process calculi}}.
\newblock {\sl \bibinfo{journal}{\rm MSCS}}
  \bibinfo{volume}{19}(\bibinfo{number}{6}), pp. \bibinfo{pages}{1191--1222},
  \doi{10.1017/S096012950999017X}.

\bibitemdeclare{inproceedings}{CC01}
\bibitem{CC01}
\bibinfo{author}{D.~\surnamestart Cacciagrano\surnameend} \&
  \bibinfo{author}{F.~\surnamestart Corradini\surnameend}
  (\bibinfo{year}{2001}): \emph{\bibinfo{title}{On Synchronous and Asynchronous
  Communication Paradigms}}.
\newblock In: {\sl \bibinfo{booktitle}{{\rm Proc.\ 7th Italian Conference on}
  Theoretical Computer Science, {\rm {ICTCS}'01}}}, {\sl \bibinfo{series}{\rm
  LNCS}} \bibinfo{volume}{2202}, \bibinfo{publisher}{Springer}, pp.
  \bibinfo{pages}{256--268}, \doi{10.1007/3-540-45446-2_16}.

\bibitemdeclare{article}{CCAV08}
\bibitem{CCAV08}
\bibinfo{author}{D.~\surnamestart Cacciagrano\surnameend},
  \bibinfo{author}{F.~\surnamestart Corradini\surnameend},
  \bibinfo{author}{J.~\surnamestart Aranda\surnameend} \& \bibinfo{author}{F.D.
  \surnamestart Valencia\surnameend} (\bibinfo{year}{2008}):
  \emph{\bibinfo{title}{Linearity, Persistence and Testing Semantics in the
  Asynchronous Pi-Calculus}}.
\newblock {\sl \bibinfo{journal}{\rm ENTCS}}
  \bibinfo{volume}{194}(\bibinfo{number}{2}), pp. \bibinfo{pages}{59--84},
  \doi{10.1016/j.entcs.2007.11.006}.

\bibitemdeclare{article}{CCP07}
\bibitem{CCP07}
\bibinfo{author}{D.~\surnamestart Cacciagrano\surnameend},
  \bibinfo{author}{F.~\surnamestart Corradini\surnameend} \&
  \bibinfo{author}{C.~\surnamestart Palamidessi\surnameend}
  (\bibinfo{year}{2007}): \emph{\bibinfo{title}{Separation of synchronous and
  asynchronous communication via testing}}.
\newblock {\sl \bibinfo{journal}{\rm Theoretical Computer Science}}
  \bibinfo{volume}{386}(\bibinfo{number}{3}), pp. \bibinfo{pages}{218--235},
  \doi{10.1016/j.tcs.2007.07.009}.

\bibitemdeclare{article}{CarboneM03}
\bibitem{CarboneM03}
\bibinfo{author}{M.~\surnamestart Carbone\surnameend} \&
  \bibinfo{author}{S.~\surnamestart Maffeis\surnameend} (\bibinfo{year}{2003}):
  \emph{\bibinfo{title}{On the Expressive Power of Polyadic Synchronisation in
  pi-calculus}}.
\newblock {\sl \bibinfo{journal}{Nordic Journal of Computing}}
  \bibinfo{volume}{10}(\bibinfo{number}{2}), pp. \bibinfo{pages}{70--98}.

\bibitemdeclare{article}{CG00}
\bibitem{CG00}
\bibinfo{author}{L.~\surnamestart Cardelli\surnameend} \& \bibinfo{author}{A.D.
  \surnamestart Gordon\surnameend} (\bibinfo{year}{2000}):
  \emph{\bibinfo{title}{Mobile ambients}}.
\newblock {\sl \bibinfo{journal}{\rm Theoretical Computer Science}}
  \bibinfo{volume}{240}(\bibinfo{number}{1}), pp. \bibinfo{pages}{177--213},
  \doi{10.1016/S0304-3975(99)00231-5}.

\bibitemdeclare{article}{DH84}
\bibitem{DH84}
\bibinfo{author}{R.~\surnamestart De~Nicola\surnameend} \&
  \bibinfo{author}{M.~\surnamestart Hennessy\surnameend}
  (\bibinfo{year}{1984}): \emph{\bibinfo{title}{Testing equivalences for
  processes}}.
\newblock {\sl \bibinfo{journal}{Theoretical Computer Science}}
  \bibinfo{volume}{34}, pp. \bibinfo{pages}{83--133},
  \doi{10.1016/0304-3975(84)90113-0}.

\bibitemdeclare{inproceedings}{GW14}
\bibitem{GW14}
\bibinfo{author}{T.~\surnamestart Given{-}Wilson\surnameend}
  (\bibinfo{year}{2014}): \emph{\bibinfo{title}{Expressiveness via
  Intensionality and Concurrency}}.
\newblock In: {\sl \bibinfo{booktitle}{{\rm Proc. {ICTAC}'14}}}, {\sl
  \bibinfo{series}{\rm LNCS}} \bibinfo{volume}{8687}, pp.
  \bibinfo{pages}{206--223}, \doi{10.1007/978-3-319-10882-7_13}.

\bibitemdeclare{inproceedings}{EPTCS160.4}
\bibitem{EPTCS160.4}
\bibinfo{author}{T.~\surnamestart Given-Wilson\surnameend}
  (\bibinfo{year}{2014}): \emph{\bibinfo{title}{On the Expressiveness of
  Intensional Communication}}.
\newblock In: {\sl \bibinfo{booktitle}{{\rm Proc. EXPRESS/SOS'14}}}, {\sl
  \bibinfo{series}{\rm EPTCS}} \bibinfo{volume}{160}, pp.
  \bibinfo{pages}{30--46}, \doi{10.4204/EPTCS.160.4}.

\bibitemdeclare{inproceedings}{EPTCS189.9}
\bibitem{EPTCS189.9}
\bibinfo{author}{T.~\surnamestart Given-Wilson\surnameend} \&
  \bibinfo{author}{A.~\surnamestart Legay\surnameend} (\bibinfo{year}{2015}):
  \emph{\bibinfo{title}{On the Expressiveness of Joining}}.
\newblock In: {\sl \bibinfo{booktitle}{{\rm Proc. ICE'15}}}, {\sl
  \bibinfo{series}{\rm EPTCS}} \bibinfo{volume}{189}, pp.
  \bibinfo{pages}{99--113}, \doi{10.4204/EPTCS.189.9}.

\bibitemdeclare{inproceedings}{GWL16}
\bibitem{GWL16}
\bibinfo{author}{T.~\surnamestart Given{-}Wilson\surnameend} \&
  \bibinfo{author}{A.~\surnamestart Legay\surnameend} (\bibinfo{year}{2016}):
  \emph{\bibinfo{title}{On the Expressiveness of Symmetric Communication}}.
\newblock In: {\sl \bibinfo{booktitle}{{\rm Proc. {ICTAC}'16}}}, {\sl
  \bibinfo{series}{\rm LNCS}} \bibinfo{volume}{9965},
  \bibinfo{publisher}{Springer}, pp. \bibinfo{pages}{139--157},
  \doi{10.1007/978-3-319-46750-4_9}.

\bibitemdeclare{inproceedings}{vG94a}
\bibitem{vG94a}
\bibinfo{author}{R.J.~van \surnamestart Glabbeek\surnameend}
  (\bibinfo{year}{1994}): \emph{\bibinfo{title}{On the expressiveness of
  {ACP}}}.
\newblock In: {\sl \bibinfo{booktitle}{{\rm Proc.\ ACP'94}}},
  \bibinfo{series}{Workshops in Computing}, \bibinfo{publisher}{Springer}, pp.
  \bibinfo{pages}{188--217}, \doi{10.1007/978-1-4471-2120-6\_8}.

\bibitemdeclare{inproceedings}{vG12}
\bibitem{vG12}
\bibinfo{author}{R.J.~van \surnamestart Glabbeek\surnameend}
  (\bibinfo{year}{2012}): \emph{\bibinfo{title}{Musings on Encodings and
  Expressiveness}}.
\newblock In: {\sl \bibinfo{booktitle}{{\rm Proc.\ EXPRESS/SOS'12}}}, {\sl
  \bibinfo{series}{\rm EPTCS}}~\bibinfo{volume}{89}, pp.
  \bibinfo{pages}{81--98}, \doi{10.4204/EPTCS.89.7}.

\bibitemdeclare{article}{Gorla10b}
\bibitem{Gorla10b}
\bibinfo{author}{D.~\surnamestart Gorla\surnameend} (\bibinfo{year}{2010}):
  \emph{\bibinfo{title}{A taxonomy of process calculi for distribution and
  mobility}}.
\newblock {\sl \bibinfo{journal}{Distributed Computing}}
  \bibinfo{volume}{23}(\bibinfo{number}{4}), pp. \bibinfo{pages}{273--299},
  \doi{10.1007/s00446-010-0120-6}.

\bibitemdeclare{article}{Gorla10a}
\bibitem{Gorla10a}
\bibinfo{author}{D.~\surnamestart Gorla\surnameend} (\bibinfo{year}{2010}):
  \emph{\bibinfo{title}{Towards a unified approach to encodability and
  separation results for process calculi}}.
\newblock {\sl \bibinfo{journal}{\rm I\&C}}
  \bibinfo{volume}{208}(\bibinfo{number}{9}), pp. \bibinfo{pages}{1031--1053},
  \doi{10.1016/j.ic.2010.05.002}.

\bibitemdeclare{book}{GM14}
\bibitem{GM14}
\bibinfo{author}{J.F. \surnamestart Groote\surnameend} \& \bibinfo{author}{M.R.
  \surnamestart Mousavi\surnameend} (\bibinfo{year}{2014}):
  \emph{\bibinfo{title}{Modeling and Analysis of Communicating Systems}}.
\newblock \bibinfo{publisher}{MIT Press}.

\bibitemdeclare{article}{HMP08}
\bibitem{HMP08}
\bibinfo{author}{B.~\surnamestart Haagensen\surnameend},
  \bibinfo{author}{S.~\surnamestart Maffeis\surnameend} \&
  \bibinfo{author}{I.~\surnamestart Phillips\surnameend}
  (\bibinfo{year}{2008}): \emph{\bibinfo{title}{Matching Systems for Concurrent
  Calculi}}.
\newblock {\sl \bibinfo{journal}{\rm ENTCS}}
  \bibinfo{volume}{194}(\bibinfo{number}{2}), pp. \bibinfo{pages}{85--99},
  \doi{10.1016/j.entcs.2007.11.004}.

\bibitemdeclare{inproceedings}{EPTCS190.5}
\bibitem{EPTCS190.5}
\bibinfo{author}{M.~\surnamestart Hatzel\surnameend},
  \bibinfo{author}{C.~\surnamestart Wagner\surnameend},
  \bibinfo{author}{K.~\surnamestart Peters\surnameend} \&
  \bibinfo{author}{U.~\surnamestart Nestmann\surnameend}
  (\bibinfo{year}{2015}): \emph{\bibinfo{title}{Encoding CSP into CCS}}.
\newblock In: {\sl \bibinfo{booktitle}{{\rm Proc.\ EXPRESS/SOS'15}}}, {\sl
  \bibinfo{series}{\rm EPTCS}} \bibinfo{volume}{190}, pp.
  \bibinfo{pages}{61--75}, \doi{10.4204/EPTCS.190.5}.

\bibitemdeclare{incollection}{HT91}
\bibitem{HT91}
\bibinfo{author}{K.~\surnamestart Honda\surnameend} \&
  \bibinfo{author}{M.~\surnamestart Tokoro\surnameend} (\bibinfo{year}{1991}):
  \emph{\bibinfo{title}{An Object Calculus for Asynchronous Communication}}.
\newblock In: {\sl \bibinfo{booktitle}{Proc.\ {\rm ECOOP'91}}}, {\sl
  \bibinfo{series}{\rm LNCS}} \bibinfo{volume}{512},
  \bibinfo{publisher}{Springer}, pp. \bibinfo{pages}{133--147},
  \doi{10.1007/BFb0057019}.

\bibitemdeclare{inproceedings}{LPSS10}
\bibitem{LPSS10}
\bibinfo{author}{I.~\surnamestart Lanese\surnameend}, \bibinfo{author}{J.A.
  \surnamestart P{\'{e}}rez\surnameend}, \bibinfo{author}{D.~\surnamestart
  Sangiorgi\surnameend} \& \bibinfo{author}{A.~\surnamestart
  Schmitt\surnameend} (\bibinfo{year}{2010}): \emph{\bibinfo{title}{On the
  Expressiveness of Polyadic and Synchronous Communication in Higher-Order
  Process Calculi}}.
\newblock In: {\sl \bibinfo{booktitle}{{\rm Proc {ICALP}'10}}}, {\sl
  \bibinfo{series}{\rm LNCS}} \bibinfo{volume}{6199}, pp.
  \bibinfo{pages}{442--453}, \doi{10.1007/978-3-642-14162-1_37}.

\bibitemdeclare{book}{Mi80}
\bibitem{Mi80}
\bibinfo{author}{R.~\surnamestart Milner\surnameend} (\bibinfo{year}{1980}):
  \emph{\bibinfo{title}{A Calculus of Communicating Systems}}.
\newblock {\sl \bibinfo{series}{\rm LNCS}}~\bibinfo{volume}{92},
  \bibinfo{publisher}{Springer}, \doi{10.1007/3-540-10235-3}.

\bibitemdeclare{article}{Mi83copy}
\bibitem{Mi83copy}
\bibinfo{author}{R.~\surnamestart Milner\surnameend} (\bibinfo{year}{1983}):
  \emph{\bibinfo{title}{Calculi for synchrony and asynchrony}}.
\newblock {\sl \bibinfo{journal}{\rm TCS}} \bibinfo{volume}{25}, pp.
  \bibinfo{pages}{267--310}, \doi{10.1016/0304-3975(83)90114-7}.

\bibitemdeclare{article}{Mi92}
\bibitem{Mi92}
\bibinfo{author}{R.~\surnamestart Milner\surnameend} (\bibinfo{year}{1992}):
  \emph{\bibinfo{title}{Functions as Processes}}.
\newblock {\sl \bibinfo{journal}{\rm MSCS}}
  \bibinfo{volume}{2}(\bibinfo{number}{2}), pp. \bibinfo{pages}{119--141},
  \doi{10.1017/S0960129500001407}.

\bibitemdeclare{article}{MPW92}
\bibitem{MPW92}
\bibinfo{author}{R.~\surnamestart Milner\surnameend},
  \bibinfo{author}{J.~\surnamestart Parrow\surnameend} \&
  \bibinfo{author}{D.~\surnamestart Walker\surnameend} (\bibinfo{year}{1992}):
  \emph{\bibinfo{title}{A Calculus of Mobile Processes, {Part I + II}}}.
\newblock {\sl \bibinfo{journal}{Information and Computation}}
  \bibinfo{volume}{100}(\bibinfo{number}{1}), pp. \bibinfo{pages}{1--77},
  \doi{10.1016/0890-5401(92)90008-4}.

\bibitemdeclare{inproceedings}{NC95}
\bibitem{NC95}
\bibinfo{author}{V.~\surnamestart Natarajan\surnameend} \&
  \bibinfo{author}{R.~\surnamestart Cleaveland\surnameend}
  (\bibinfo{year}{1995}): \emph{\bibinfo{title}{Divergence and Fair Testing}}.
\newblock In: {\sl \bibinfo{booktitle}{{\rm Proc.\ ICALP'95}}}, {\sl
  \bibinfo{series}{\rm LNCS}} \bibinfo{volume}{944},
  \bibinfo{publisher}{Springer}, pp. \bibinfo{pages}{648--659},
  \doi{10.1007/3-540-60084-1_112}.

\bibitemdeclare{article}{Nestmann00}
\bibitem{Nestmann00}
\bibinfo{author}{U.~\surnamestart Nestmann\surnameend} (\bibinfo{year}{2000}):
  \emph{\bibinfo{title}{What is a ``Good'' Encoding of Guarded Choice?}}
\newblock {\sl \bibinfo{journal}{Information and Computation}}
  \bibinfo{volume}{156}(\bibinfo{number}{1-2}), pp. \bibinfo{pages}{287--319},
  \doi{10.1006/inco.1999.2822}.

\bibitemdeclare{article}{NestmannP00}
\bibitem{NestmannP00}
\bibinfo{author}{U.~\surnamestart Nestmann\surnameend} \& \bibinfo{author}{B.C.
  \surnamestart Pierce\surnameend} (\bibinfo{year}{2000}):
  \emph{\bibinfo{title}{Decoding Choice Encodings}}.
\newblock {\sl \bibinfo{journal}{\rm I\&C}}
  \bibinfo{volume}{163}(\bibinfo{number}{1}), pp. \bibinfo{pages}{1--59},
  \doi{10.1006/inco.2000.2868}.

\bibitemdeclare{article}{Palamidessi03}
\bibitem{Palamidessi03}
\bibinfo{author}{C.~\surnamestart Palamidessi\surnameend}
  (\bibinfo{year}{2003}): \emph{\bibinfo{title}{Comparing The Expressive Power
  Of The Synchronous And Asynchronous Pi-Calculi}}.
\newblock {\sl \bibinfo{journal}{\rm MSCS}}
  \bibinfo{volume}{13}(\bibinfo{number}{5}), pp. \bibinfo{pages}{685--719},
  \doi{10.1017/S0960129503004043}.

\bibitemdeclare{inproceedings}{PalamidessiSVV06}
\bibitem{PalamidessiSVV06}
\bibinfo{author}{C.~\surnamestart Palamidessi\surnameend},
  \bibinfo{author}{V.A. \surnamestart Saraswat\surnameend},
  \bibinfo{author}{F.D. \surnamestart Valencia\surnameend} \&
  \bibinfo{author}{B~\surnamestart Victor\surnameend} (\bibinfo{year}{2006}):
  \emph{\bibinfo{title}{On the Expressiveness of Linearity vs Persistence in
  the Asychronous Pi-Calculus}}.
\newblock In: {\sl \bibinfo{booktitle}{{\rm Proc. LICS'06}}},
  \bibinfo{publisher}{IEEE Computer Society Press}, pp.
  \bibinfo{pages}{59--68}, \doi{10.1109/LICS.2006.39}.

\bibitemdeclare{article}{Palamidessi05}
\bibitem{Palamidessi05}
\bibinfo{author}{C.~\surnamestart Palamidessi\surnameend} \&
  \bibinfo{author}{F.D. \surnamestart Valencia\surnameend}
  (\bibinfo{year}{2005}): \emph{\bibinfo{title}{Recursion vs Replication in
  Process Calculi: Expressiveness}}.
\newblock {\sl \bibinfo{journal}{Bulletin of the EATCS}} \bibinfo{volume}{87},
  pp. \bibinfo{pages}{105--125}.

\bibitemdeclare{inproceedings}{Parrow00}
\bibitem{Parrow00}
\bibinfo{author}{J.~\surnamestart Parrow\surnameend} (\bibinfo{year}{2000}):
  \emph{\bibinfo{title}{Trios in concert}}.
\newblock In: {\sl \bibinfo{booktitle}{Proof, Language, and Interaction, Essays
  in Honour of Robin Milner}}, \bibinfo{publisher}{The MIT Press}, pp.
  \bibinfo{pages}{623--638}.

\bibitemdeclare{inproceedings}{EPTCS190.4}
\bibitem{EPTCS190.4}
\bibinfo{author}{K.~\surnamestart Peters\surnameend} \&
  \bibinfo{author}{R.J.~van \surnamestart Glabbeek\surnameend}
  (\bibinfo{year}{2015}): \emph{\bibinfo{title}{Analysing and Comparing
  Encodability Criteria}}.
\newblock In: {\sl \bibinfo{booktitle}{{\rm EXPRESS/SOS'15}}}, {\sl
  \bibinfo{series}{\rm EPTCS}} \bibinfo{volume}{190}, pp.
  \bibinfo{pages}{46--60}, \doi{10.4204/EPTCS.190.4}.

\bibitemdeclare{inproceedings}{PN12}
\bibitem{PN12}
\bibinfo{author}{K.~\surnamestart Peters\surnameend} \&
  \bibinfo{author}{U.~\surnamestart Nestmann\surnameend}
  (\bibinfo{year}{2012}): \emph{\bibinfo{title}{Is It a ``Good'' Encoding of
  Mixed Choice?}}
\newblock In: {\sl \bibinfo{booktitle}{{\rm Proc. FoSSaCS'12}}}, {\sl
  \bibinfo{series}{\rm LNCS}} \bibinfo{volume}{7213}, pp.
  \bibinfo{pages}{210--224}, \doi{10.1007/978-3-642-28729-9_14}.

\bibitemdeclare{article}{PN16}
\bibitem{PN16}
\bibinfo{author}{K.~\surnamestart Peters\surnameend} \&
  \bibinfo{author}{U.~\surnamestart Nestmann\surnameend}
  (\bibinfo{year}{2016}): \emph{\bibinfo{title}{Breaking symmetries}}.
\newblock {\sl \bibinfo{journal}{\rm MSCS}}
  \bibinfo{volume}{26}(\bibinfo{number}{6}), pp. \bibinfo{pages}{1054--1106},
  \doi{10.1017/S0960129514000346}.

\bibitemdeclare{inproceedings}{PNG13}
\bibitem{PNG13}
\bibinfo{author}{K.~\surnamestart Peters\surnameend},
  \bibinfo{author}{U.~\surnamestart Nestmann\surnameend} \&
  \bibinfo{author}{U.~\surnamestart Goltz\surnameend} (\bibinfo{year}{2013}):
  \emph{\bibinfo{title}{On Distributability in Process Calculi}}.
\newblock In: {\sl \bibinfo{booktitle}{{\rm Proc. {ESOP}'13}}}, {\sl
  \bibinfo{series}{\rm LNCS}} \bibinfo{volume}{7792}, pp.
  \bibinfo{pages}{310--329}, \doi{10.1007/978-3-642-37036-6_18}.

\bibitemdeclare{inproceedings}{PSN11}
\bibitem{PSN11}
\bibinfo{author}{K.~\surnamestart Peters\surnameend}, \bibinfo{author}{J.-W.
  \surnamestart Schicke\surnameend} \& \bibinfo{author}{U.~\surnamestart
  Nestmann\surnameend} (\bibinfo{year}{2011}): \emph{\bibinfo{title}{Synchrony
  vs Causality in the Asynchronous Pi-Calculus}}.
\newblock In: {\sl \bibinfo{booktitle}{{\rm Proc.\ EXPRESS'11}}}, {\sl
  \bibinfo{series}{EPTCS}}~\bibinfo{volume}{64}, pp. \bibinfo{pages}{89--103},
  \doi{10.4204/EPTCS.64.7}.

\bibitemdeclare{article}{PV06}
\bibitem{PV06}
\bibinfo{author}{I.~\surnamestart Phillips\surnameend} \& \bibinfo{author}{M.G.
  \surnamestart Vigliotti\surnameend} (\bibinfo{year}{2006}):
  \emph{\bibinfo{title}{Leader election in rings of ambient processes}}.
\newblock {\sl \bibinfo{journal}{\rm Theoretical Computer Science}}
  \bibinfo{volume}{356}(\bibinfo{number}{3}), pp. \bibinfo{pages}{468--494},
  \doi{10.1016/j.tcs.2006.02.004}.

\bibitemdeclare{article}{PV08}
\bibitem{PV08}
\bibinfo{author}{I.~\surnamestart Phillips\surnameend} \& \bibinfo{author}{M.G.
  \surnamestart Vigliotti\surnameend} (\bibinfo{year}{2008}):
  \emph{\bibinfo{title}{Symmetric electoral systems for ambient calculi}}.
\newblock {\sl \bibinfo{journal}{Information and Computation}}
  \bibinfo{volume}{206}(\bibinfo{number}{1}), pp. \bibinfo{pages}{34--72},
  \doi{10.1016/j.ic.2007.08.005}.

\bibitemdeclare{inproceedings}{QW00}
\bibitem{QW00}
\bibinfo{author}{P.~\surnamestart Quaglia\surnameend} \&
  \bibinfo{author}{D.~\surnamestart Walker\surnameend} (\bibinfo{year}{2000}):
  \emph{\bibinfo{title}{On Synchronous and Asynchronous Mobile Processes}}.
\newblock In: {\sl \bibinfo{booktitle}{{\rm Proc.\ {FoSSaCS}'00}}}, {\sl
  \bibinfo{series}{\rm LNCS}} \bibinfo{volume}{1784},
  \bibinfo{publisher}{Springer}, pp. \bibinfo{pages}{283--296},
  \doi{10.1007/3-540-46432-8_19}.

\bibitemdeclare{inproceedings}{San93}
\bibitem{San93}
\bibinfo{author}{D.~\surnamestart Sangiorgi\surnameend} (\bibinfo{year}{1993}):
  \emph{\bibinfo{title}{From $\pi$-calculus to Higher-Order $\pi$-calculus ---
  and back}}.
\newblock In: {\sl \bibinfo{booktitle}{{\rm Proc. TAPSOFT'93}}}, {\sl
  \bibinfo{series}{\rm LNCS}} \bibinfo{volume}{668}, pp.
  \bibinfo{pages}{151--166}, \doi{10.1007/3-540-56610-4_62}.

\bibitemdeclare{book}{SW01book}
\bibitem{SW01book}
\bibinfo{author}{D.~\surnamestart Sangiorgi\surnameend} \&
  \bibinfo{author}{D.~\surnamestart Walker\surnameend} (\bibinfo{year}{2001}):
  \emph{\bibinfo{title}{The $\pi$-calculus: A Theory of Mobile Processes}}.
\newblock \bibinfo{publisher}{Cambridge University Press}.

\bibitemdeclare{article}{dS85}
\bibitem{dS85}
\bibinfo{author}{R.~de \surnamestart Simone\surnameend} (\bibinfo{year}{1985}):
  \emph{\bibinfo{title}{Higher-level synchronising devices in {{\sc
  Meije}-SCCS}}}.
\newblock {\sl \bibinfo{journal}{Theoretical Computer Science}}
  \bibinfo{volume}{37}, pp. \bibinfo{pages}{245--267},
  \doi{10.1016/0304-3975(85)90093-3}.

\bibitemdeclare{inproceedings}{Va93copy}
\bibitem{Va93copy}
\bibinfo{author}{F.W. \surnamestart Vaandrager\surnameend}
  (\bibinfo{year}{1993}): \emph{\bibinfo{title}{Expressiveness Results for
  Process Algebras}}.
\newblock In: {\sl \bibinfo{booktitle}{{\rm Proc.\ REX Workshop on Semantics:
  Foundations and Applications}}}, {\sl \bibinfo{series}{\rm LNCS}}
  \bibinfo{volume}{666}, \bibinfo{publisher}{Springer}, pp.
  \bibinfo{pages}{609--638}, \doi{10.1007/3-540-56596-5\_49}.

\bibitemdeclare{article}{VBG09}
\bibitem{VBG09}
\bibinfo{author}{C.~\surnamestart Versari\surnameend},
  \bibinfo{author}{N.~\surnamestart Busi\surnameend} \&
  \bibinfo{author}{R.~\surnamestart Gorrieri\surnameend}
  (\bibinfo{year}{2009}): \emph{\bibinfo{title}{An expressiveness study of
  priority in process calculi}}.
\newblock {\sl \bibinfo{journal}{\rm MSCS}}
  \bibinfo{volume}{19}(\bibinfo{number}{6}), pp. \bibinfo{pages}{1161--1189},
  \doi{10.1017/S0960129509990168}.

\bibitemdeclare{article}{VPP07}
\bibitem{VPP07}
\bibinfo{author}{M.G. \surnamestart Vigliotti\surnameend},
  \bibinfo{author}{I.~\surnamestart Phillips\surnameend} \&
  \bibinfo{author}{C.~\surnamestart Palamidessi\surnameend}
  (\bibinfo{year}{2007}): \emph{\bibinfo{title}{Tutorial on separation results
  in process calculi via leader election problems}}.
\newblock {\sl \bibinfo{journal}{\rm TCS}}
  \bibinfo{volume}{388}(\bibinfo{number}{1-3}), pp. \bibinfo{pages}{267--289},
  \doi{10.1016/j.tcs.2007.09.001}.

\end{thebibliography}
\end{document}